\newtheorem{observation}{Observation}
\newtheorem{assumption}[theorem]{Assumption}
\newcommand\blfootnote[1]{%
	\begingroup
	\renewcommand\thefootnote{}\footnote{#1}%
	\addtocounter{footnote}{-1}%
	\endgroup
}
\newcommand{\YES}{\textup{\textsf{YES}}}
\newcommand{\NO}{\textup{\textsf{NO}}}
\newcommand{\Oh}{{\mathcal O}}
\newcommand{\nat}{\mathbb{N}}
\newcommand{\FPT}{\mbox{{\sf FPT}}}
\def\eg{{\em e.g.}}
\def\ie{{\em i.e.}}
\def\etal{{\em et al.}}
\newcommand{\PPP}{\mathcal{P}}
\newcommand{\QQQ}{\mathcal{Q}}
\newcommand{\bfP}{\mathbf{P}}
\newcommand{\bfQ}{\mathbf{Q}}
\newcommand{\pref}[2]{\mathbf{pre}(#1,#2)}
\newcommand{\suff}[2]{\mathbf{suf}(#1,#2)}
\newcommand{\mor}{{\sc Colored Path}}
\newcommand{\cmor}{{\sc Colored Path$^\star$}}
\newcommand{\gcmor}{{\sc Connected Obstacle Removal}}
\newenvironment{cProof}
	{\begin{proof}[Proof of Claim.] }
	{
	\end{proof}
	
	}
\newif\iflong
\newif\ifshort
\title{Removing Connected Obstacles in the Plane is FPT}
\author{Eduard Eiben}{Department of Computer Science, Royal Holloway, University of London, United Kingdom}{eduard.eiben@rhul.ac.uk}{https://orcid.org/0000-0003-2628-3435}{}
\author{Daniel Lokshtanov}{Department of Computer Science, UC Santa Barbara, United States}{daniello@ucsb.edu}{}{}
\authorrunning{E. Eiben and D. Lokshtanov}
\keywords{parameterized complexity and algorithms; planar graphs; motion planning; barrier coverage; barrier resilience; colored path; minimum constraint removal}
\begin{document}

\maketitle

\begin{abstract}
Given two points in the plane, a set of obstacles defined by closed curves, and an integer $k$, does there exist a path between the two designated points intersecting at most $k$ of the obstacles?
This is a fundamental and well-studied problem arising naturally in computational geometry, graph theory, wireless computing, and motion planning. It remains \textsf{NP}-hard even when the obstacles are very simple geometric shapes ({\em e.g.}, unit-length line segments). 
In this paper, we show that the problem is fixed-parameter tractable (\textsf{FPT}) parameterized by $k$, by giving an algorithm with running time $k^{O(k^3)}n^{O(1)}$. Here $n$ is the number connected areas in the plane drawing of all the obstacles. 
\end{abstract}



\section{Introduction}


In the \gcmor{} problem we are given as input a source point $s$ and a target point $t$ in the plane, and our goal is to move from the source to the target along a continous curve. The catch is that the plane is also littered with obstacles -- each obstacle is represented by a closed curve, and the goal is to get from the source to the target while intersecting as few of the obstacles as possible. Equivalently we can ask for the minimum number of obstacles that have to be removed so that one can move from $s$ to $t$ without touching any of the remaining ones.\footnote{We assume that the regions formed by the obstacles can be computed in polynomial time. We do not assume that the obstacles contain their interiors.  We may assume without loss of generality that the intersection of two obstacles is a 2-D region, if it is not then we can thicken the borders of the obstacles without changing the sets of obstacles they intersect, so that their intersection becomes a 2-D region.}. 
The problem has a wealth of applications, and has been studied under different names, such as {\sc Barrier Coverage} or {\sc Barrier Resilience} in networking and wirless computing~\cite{alt,kirkpatrick3,korman,kumar,kirkpatrick1,yangphd},  or {\sc Minimum Constraint Removal} in planning~\cite{EibenGKY18,lavalle,popov,hauser}.
%
%
%
The problem is NP-hard even when the obstacles are restricted to simple geometric shapes, such as line segments (\eg, see~\cite{alt,kirkpatrick1,yangphd}). On the other hand, for unit-disk obstacles in a restricted setting, the problem can be solved in polynomial time~\cite{kumar}. Whether \gcmor{} can be solved in polynomial time for unit-disk obstacles remains open. The problem is known to be hard to approximate within a factor of $c\log n$ for $c < 1$~\cite{Bandyapadhyay0S18}, and\iflong, perhaps surprisingly,\fi{} no factor $o(n)$-approximation is known.  For restricted inputs (such as unit disc or rectangle obstacles) better approximation algorithms are known~\cite{Bandyapadhyay0S18,kirkpatrick3}. 

In this paper we approach the general \gcmor{} problem from the perspective of parameterized algorithms (see~\cite{CyganFKLMPPS15} for an introduction). In particular it is easy to see that the problem is solvable in time $n^{k+O(1)}$ if the solution curve is to intersect at most $k$ obstacles. Here $n$ is the number of connected regions in the plane defined by the simultaneous drawing of all the obstacles. If $k$ is considered a constant then this is polynomial time, however the exponent of the polynomial grows with the parameter $k$. A natural problem is whether the algorithm can be improved to a {\em Fixed Parameter Tractable} (\FPT{}) one, that is an algorithm with running time $f(k)n^{O(1)}$. In this paper we give the first \FPT{} algorithm for the problem. Our algorithm substantially generalizes previous work by Kumar \etal~\cite{kumar} as well as the first author and Kanj~\cite{EibenK18}.

\begin{theorem}\label{thm:mainGeom}
There is an algorithm for \gcmor{} with running time $k^{O(k^3)}n^{O(1)}$.
\end{theorem}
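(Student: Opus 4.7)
The plan is to reduce \gcmor{} to a planar colored-path problem and then attack it by bounded-depth branching.

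First, I would pass to the arrangement of all obstacles. Build a planar graph $G$ whose vertices are the $n$ connected regions of this arrangement and whose edges connect regions sharing a boundary arc; label each vertex by the set of obstacles whose interior contains it (its color set). The task becomes: find an $s$--$t$ path in $G$ minimizing the number of distinct colors encountered along it. The essential structural property, inherited from each obstacle being a single closed curve, is that the vertices carrying any fixed color induce a connected subgraph of $G$. This is precisely the feature distinguishing \gcmor{} from the general \textsc{Minimum Constraint Removal} problem and is what makes an \FPT{} algorithm plausible.

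Second, I would design a recursive branch-and-reduce procedure driven by a compact structural witness of the solution path. Fix an optimal path $P$ that visits obstacles $O_1,\dots,O_k$ in the order of first entry. At each recursive call, maintain a partial prefix of $P$ consisting of a current face $f$ and the color set $C$ already paid for, and try to extend it either by (a) routing through faces whose color set is contained in $C$ toward the boundary of a fresh obstacle, or (b) committing to a new obstacle $O_{i+1}$ together with a small amount of topological data specifying how $P$ enters $O_{i+1}$ (for example, which boundary arc and on which side). Planarity together with the connectedness of each color class should cap the number of topologically distinct choices per step at $k^{O(k)}$.

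Third, I would bound the recursion depth. With at most $k$ first-entries and $O(k)$ refinement steps per entry to pin down how $P$ enters the next obstacle, the search tree should have depth $O(k^2)$ and branching factor $k^{O(k)}$, which multiplies to $k^{O(k^3)}$. Multiplying by the $n^{O(1)}$ cost of each leaf (a shortest-path-like computation restricted to faces with color sets inside $C$) gives the claimed running time.

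The hardest part will be settling on the right witness. A witness that merely lists the $k$ obstacles is too weak, since the residual problem of routing through them in some valid order is itself non-trivial; a witness that records a full topological description of $P$ has exponentially many candidates. The technical core of the proof will therefore be a canonical-form lemma, saying that every optimal solution can be deformed to one described by $O(k^2)$ bits of combinatorial data, together with a complementary no-shortcut argument showing that once the witness is guessed correctly the connectedness of color classes forces the remaining routing to be essentially unique and polynomial-time checkable. Extending the disk-specific arguments of Kumar \etal{} and of Eiben and Kanj to arbitrary connected obstacles, where a color class in $G$ may have an arbitrarily complicated shape, is exactly where the new ideas will be needed.
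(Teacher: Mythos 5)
Your reduction to the planar colored-path problem with connected color classes matches the paper's first step, but the algorithmic strategy you propose afterwards is the one the paper explicitly rejects. You want a branch-and-reduce procedure where, at each step, planarity and color-connectedness ``should cap the number of topologically distinct choices per step at $k^{O(k)}$''; the paper says precisely of such an approach: ``In \FPT{} algorithms such a pruning step is often done by clever branching\dots We are not able to implement a pruning step in this way.'' Nothing in your write-up explains why the per-step choices should be bounded by a function of $k$ alone. A color class here may induce an arbitrarily complicated connected subgraph of the arrangement graph, so the number of ways a solution path can ``enter'' the next obstacle (which boundary arc, which side, how it weaves around the already-paid colors) has no obvious bound independent of $n$. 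The entire weight of your argument is deferred to a ``canonical-form lemma\dots\ saying that every optimal solution can be deformed to one described by $O(k^2)$ bits of combinatorial data,'' and you correctly identify that this is where ``the new ideas will be needed'' --- but that lemma is exactly what is missing, and it is far from clear that it is true.

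What the paper actually does is orthogonal: it runs $k$ rounds building a family $\PPP_i$ of candidate color sets of size $i$ (prefixes of a solution), extending by $\PPP_i \mapsto \{p \cup \chi(v)\}$, and then \emph{shrinks} the family with a representative-sets-style ``irrelevant set'' rule rather than bounding the branching factor. The correctness of the shrinking step (Lemma~\ref{lem:costructingFamily}) is established by a sunflower argument to reduce to pairwise color-disjoint sets, followed by topological arguments in the plane about large collections of $s$--$v$ paths (Lemmas~\ref{lem:repetingSameColorSet}, \ref{lem:disjointPaths}, \ref{lem:nonIntersectingPaths}, \ref{lem:main}) showing that once the family is $k^{O(k^3)}$-large some member's colors cannot be touched by any remaining $k$-valid $v$--$t$ walk. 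This is not a canonical-form result about solutions at all; it is a statement about families of partial solutions, and it is what sidesteps the obstruction your branching plan runs into. To repair your proposal you would need to either prove the canonical-form lemma you posit (which would indeed be a new and interesting result, but you give no route to it) or switch to a representative-family argument of the paper's kind.
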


Our arguments and the relation between our results and previous work are more conveniently stated in terms of an equivalent graph problem, which we now discuss. 
%
Given a graph $G$, a set $C \subset \mathbb{N}$ (interpreted as a set of {\em colors}), and a function $\chi : V(G) \rightarrow 2^C$ that assigns a set of colors to every vertex of $v$, a vertex set $S$ {\em uses} the color set $\bigcup_{v \in S} \chi(v)$. In the \mor{} problem input consists of $G, s, t, \chi$ and $k$, and the goal is to find an $s-t$ path $P$ that uses at most $k$ colors. It is easy to see that \gcmor{} reduces to \mor{} (see Figure~\ref{fig:intro_figure}). 
\begin{figure}[ht]
	\centering
	\includegraphics[width=0.6\textwidth,page=2]{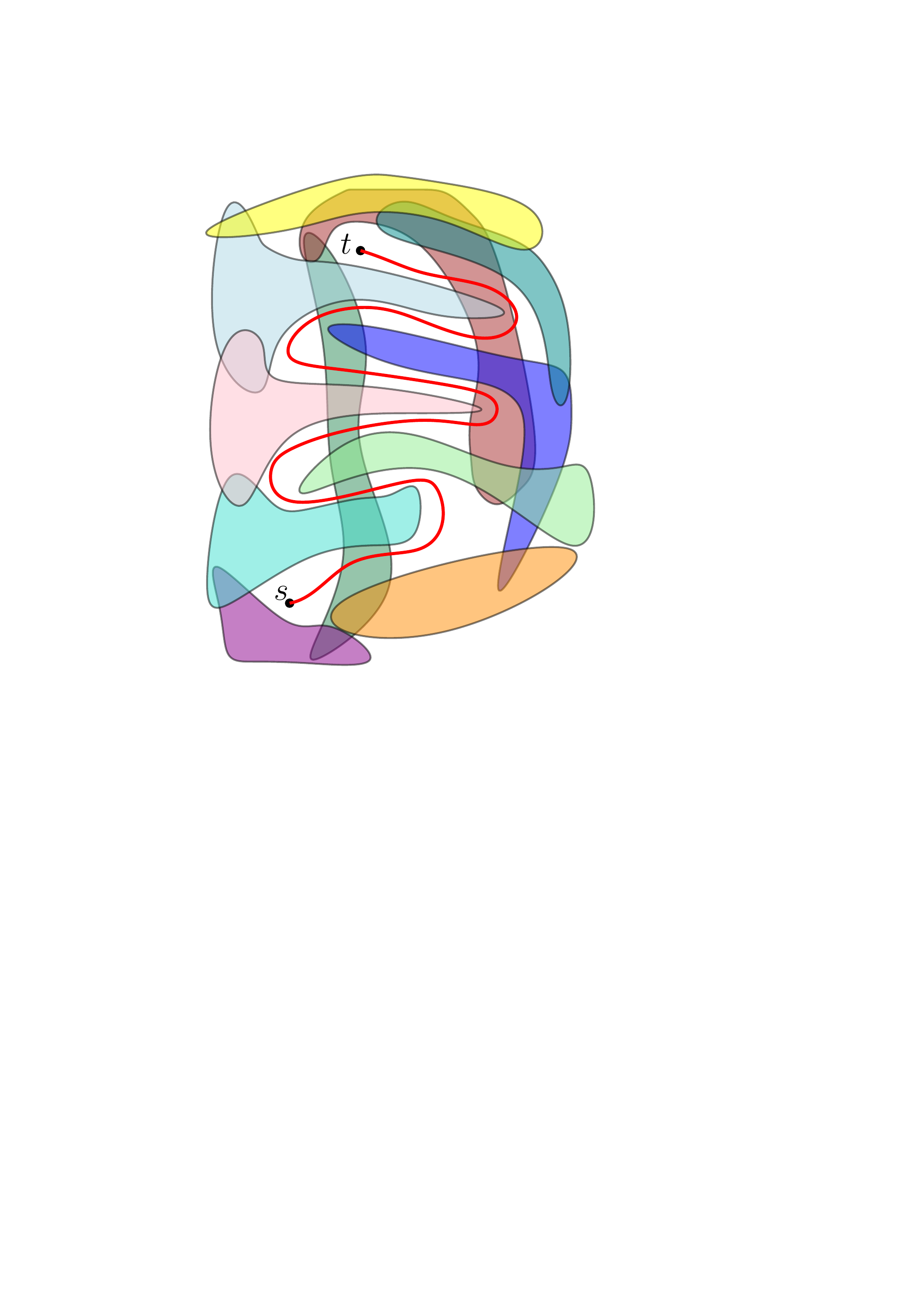}
	\caption{ The figure shows an
		instance of \gcmor{} and the graph $G$ of an equivalent instance of \mor{}. $G$ is the plane graph that is the dual of the plane subdivision determined by the obstacles. Every obstacle corresponds to a color, and the color set of a vertex are the obstacles that contain the vertex in their interior.
	}\label{fig:intro_figure}
\end{figure}
Of course, reducing from  \gcmor{} in this way can not produce all possible instances of \mor: the graph $G$ is always a planar graph, and for every color $c \in C$ the set $\chi^{-1}(c) = \{v \in V(G) : c \in \chi(v)\}$ induces a connected subgraph of $G$. We shall denote the \mor{} problem restricted to instances that satisfy the two properties above by \cmor{}. With these additional restrictions it is easy to reduce back, and therefore \gcmor{} and \cmor{} are, for all practical purposes, different formulations of the same problem. 

\subparagraph*{Related Work in Parameterized Algorithms, and Barriers to Generalization.}
Korman et al.~\cite{korman} initiated the study of \gcmor{} from the perspective of parameterized complexity. They show that \gcmor{} is \FPT{} parameterized by $k$ for unit-disk obstacles, and extended this result to similar-size fat-region obstacles with a constant \emph{overlapping number}, which is the maximum number of obstacles having nonempty intersection. Eiben and Kanj~\cite{EibenK18} generalize the results of Korman et al.~\cite{korman} by giving algorithms for \cmor{} with running time $f(k,t)n^{O(1)}$ and $g(k,\ell)n^{O(1)}$ where $t$ is the treewidth of the input graph $G$, and $\ell$ is an upper bound on the number of {\em vertices} on the shortest solution path $P$. 

Eiben and Kanj~\cite{EibenK18} leave open the existence of an FPT algorithm for \cmor{} - Theorem~\ref{thm:mainGeom} provides such an algorithm. Interestingly, Eiben and Kanj~\cite{EibenK18} also show that if an FPT algorithm for \cmor{} were to exist, then in many ways it would be the best one can hope for. More concretely, for each of the most natural ways to try to generalize Thoerem~\ref{thm:mainGeom}, Eiben and Kanj~\cite{EibenK18} provide evidence of hardness.
Specifically, the \cmor{} problem imposes two constraints on the input -- the graph $G$ has to be planar and the color sets need to be connected. Eiben and Kanj~\cite{EibenK18} show that lifiting {\em either one} of these constraints results in a W[1]-hard problem (i.e. one that is not \FPT{} assuming plausible complexity theoretic hypotheses) {\em even} if the treewidth of the input graph $G$ is a small constant, {\em and} the length of the a solution path (if one exists) is promised to be a function of $k$.

Algorithms that determine the existence of a path can often be adapted to algorithms that find the {\em shortest} such path. Eiben and Kanj~\cite{EibenK18} show that for \cmor{}, {\em this can not be the case!} Indeed, they show that an algorithm with running time $f(k)n^{O(1)}$ that given a graph $G$, color function $\chi$ and integers $k$ and $\ell$ determines whether there exists an $s-t$ path of length at most $\ell$ using at most $k$ colors, would imply that \FPT{} = W[1]. Thus, unless \FPT{} = W[1] the algorithm of Theorem~\ref{thm:mainGeom} can not be adapted to an \FPT{} algorithm that finds a {\em shortest} path through $k$ obstacles.

\subsection{Overview of the Algorithm}
The naive $n^{k+O(1)}$ time algorithm enumerates all choices of a set $S$ on at most $k$ colors in the graph, and then decides in polynomial time whether $S$ 
is a feasible color set, in other words whether there exists a solution path that only uses colors from $S$. At a very high level our algorithm does the same thing, but it only computes sets $S$ that can be obtained as a union of colors of at most $k$ vertices and additionally it performs a pruning step so that not all $n^k$ choices for $S$ are enumerated. 

In \FPT{} algorithms such a pruning step is often done by clever {\em branching}: when choosing the $i$'th vertex defining $S$ one would show that there are only $f(k)$ viable choices 
that could possibly lead to a solution. We are not able to implement a pruning step in this way. Instead, our pruning step is inspired by algorithms based on representative sets~\cite{FominLPS16}.

In particular, our algorithm proceeds in $k$ rounds. In each round we make a family ${\cal P}_i$ of color sets of size at most $i$, with the following properties. First, $|{\cal P}_i| \leq k^{O(k^3)}n^{O(1)}$. Second, if there exists a solution path, then there exists a solution such that the set containing the {\em first $i$ visited} colors is in ${\cal P}_i$.

In each round $i$ the algorithm does two things: first it {\em extends} the already computed families ${\cal P}_{0},\ldots {\cal P}_{i-1}$ by going over every set $S \in \bigcup_{j=0}^{i-1}{\cal P}_{j}$ and every vertex $v \in V(G)$ and inserting $S \cup \chi(v)$ into the new family $\hat{\cal P}_{i}$ if $|S \cup \chi(v)|=i$. It is quite easy to see that $\hat{\cal P}_{i}$ satisfies the second property - however it is a factor of $n$ larger than the union of previous ${\cal P}_{j}$'s. If we keep extending $\hat{\cal P}_{i}$ in this way then after a super-constant number of steps we will break the first requirement that the family size should be at most $k^{O(k^3)}n^{O(1)}$. For this reason the algorithm also performs an {\em irrelevant set} step: as long as $\hat{\cal P}_{i}$ is ``too large'' we show that one can identify a set $S \in \hat{\cal P}_{i}$ that can be removed from $\hat{\cal P}_{i}$ without breaking the first property. We repeat this irrelevant set step until $\hat{\cal P}_{i}$ is sufficiently small. At this point we declare that this is our $i$'th family ${\cal P}_{i}$ and proceed to step $i+1$.

The most technically involved part of our argument is the proof of correctness for the irrelevant set step - this is outlined and then proved formally in Section~\ref{lem:costructingFamily}. This argument crucially exploits the structure of a large set of paths in a planar graph that start and end in the same vertex.

\ifshort
\newcommand{\proofMark}{\spadesuit}

\blfootnote{\noindent {\emph{Statements whose proofs are omitted due to
		space requirements are marked with $(\proofMark)$. 
	}}}
\fi

\section{Preliminaries}
\label{sec:prelim}
For integers $n, m$ with $n \le m$, we let $[n, m] := \{n, n + 1, \ldots , m\}$ and $[n] := [1, n]$.
Let $\mathcal{F}$ be a family of subsets of a universe $U$. A \emph{sunflower} in $\mathcal{F}$ is a subset $\mathcal{F}' \subseteq \mathcal{F}$ such that all pairs of elements in $\mathcal{F}'$ have the same intersection.

\begin{lemma}[\cite{Erdos60,FlumGrohe06}]\label{lem:SF}
	Let $\mathcal{F}$ be a family of subsets of a universe $U$, each of cardinality exactly
	$b$, and let $a \in \mathbb{N}$. If $|\mathcal{F}|\geq b!(a-1)^{b}$, then $\mathcal{F}$
	contains a sunflower $\mathcal{F}'$ of cardinality at least $a$. Moreover,
	$\mathcal{F}'$ can be computed in time polynomial in $|\mathcal{F}|$.
\end{lemma}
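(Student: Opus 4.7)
The plan is to prove this classical Erdős--Rado bound by induction on $b$, using a maximal-disjoint-subfamily plus pigeonhole argument, which happens to be directly algorithmic.

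For the base case $b=1$ every set in $\mathcal{F}$ is a singleton, and any two distinct singletons trivially have intersection $\emptyset$, so any $a$ distinct elements of $\mathcal{F}$ form a sunflower with empty core, which exists as soon as $|\mathcal{F}| \geq a$.

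For the inductive step, assuming the lemma for families of $(b-1)$-sets, I would greedily construct a maximal subfamily $\mathcal{G} \subseteq \mathcal{F}$ whose members are pairwise disjoint. Two cases arise. If $|\mathcal{G}| \geq a$, then $\mathcal{G}$ itself is a sunflower of cardinality $a$ with empty core, and we are done. Otherwise $|\mathcal{G}| \leq a-1$, and the union $Y := \bigcup_{S \in \mathcal{G}} S$ has $|Y| \leq (a-1)b$ elements. By maximality of $\mathcal{G}$, every set of $\mathcal{F}$ meets $Y$, so by pigeonhole some element $y \in Y$ lies in at least
\[
\frac{|\mathcal{F}|}{(a-1)b} \;\geq\; \frac{b!(a-1)^b}{(a-1)b} \;=\; (b-1)!\,(a-1)^{b-1}
\]
sets of $\mathcal{F}$. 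Let $\mathcal{F}_y := \{S \in \mathcal{F} : y \in S\}$ and consider the derived family $\mathcal{F}_y' := \{S \setminus \{y\} : S \in \mathcal{F}_y\}$, all of whose members have size exactly $b-1$. The inductive hypothesis yields a sunflower $\mathcal{F}'' \subseteq \mathcal{F}_y'$ of cardinality $a$; reinserting $y$ into each member of $\mathcal{F}''$ produces a sunflower of the same cardinality inside $\mathcal{F}$ (whose core is the old core together with $y$).

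For the algorithmic part, each step is directly constructive: greedily selecting $\mathcal{G}$ takes polynomial time, the pivot element $y$ is found by scanning the universe, and the recursion has depth $b$ with only polynomial-time overhead per level. So the overall procedure runs in time polynomial in $|\mathcal{F}|$. I do not anticipate a real obstacle: the only subtle point is ensuring that the pigeonhole quotient produces exactly the factorial-times-power-of-$(a-1)$ threshold required to invoke the induction hypothesis, which is precisely what motivates the choice of bound $b!(a-1)^b$.
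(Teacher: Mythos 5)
Your proof is correct and is the classical Erd\H{o}s--Rado argument (induction on $b$, maximal pairwise-disjoint subfamily, pigeonhole on the union, recurse on the link of a popular element). The paper does not give its own proof of this lemma: it is stated with citations to Erd\H{o}s--Rado and to Flum--Grohe, so there is no in-paper proof to compare against. Your argument matches what those sources do, and you are right that the greedy selection, the pigeonhole pivot, and the depth-$b$ recursion are all directly implementable, which gives the ``moreover'' (polynomial-time) clause.

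One small point your base case quietly exposes: the bound as printed in the paper, $|\mathcal{F}| \geq b!(a-1)^b$, is off by one. For $b=1$, $a=2$ it would claim that a single singleton already contains a sunflower of size $2$, which is false. The usual (and correct) formulation is the strict inequality $|\mathcal{F}| > b!(a-1)^b$, and that is exactly the threshold your induction supports: your base case needs $|\mathcal{F}| \geq a$, i.e.\ $|\mathcal{F}| > a-1 = 1!\,(a-1)^1$, and the pigeonhole step in the inductive case preserves strictness, since $|\mathcal{F}|/((a-1)b) > (b-1)!(a-1)^{b-1}$ whenever $|\mathcal{F}| > b!(a-1)^b$. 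This off-by-one is harmless for the paper's application (the constant is absorbed into $k^{\Oh(k^3)}$), but your proof as written proves the correct version, not the version with $\geq$.

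One cosmetic remark on the inductive step: it is worth noting explicitly that the map $S \mapsto S \setminus \{y\}$ is injective on $\mathcal{F}_y$ (each $S$ is recovered as $(S\setminus\{y\})\cup\{y\}$), so $|\mathcal{F}_y'| = |\mathcal{F}_y|$ and the cardinality bound carries over to the derived family. You use this implicitly; making it explicit closes the last gap.
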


We assume familiarity with the basic notations and terminologies in graph theory and parameterized complexity. We refer the reader to the standard books~\cite{CyganFKLMPPS15,DiestelBook,dfbook} for more information on these subjects.

\subparagraph*{{\bf Graphs.}} All graphs in this paper are simple (\ie, loop-less and with no multiple edges).
Let $G$ be an undirected graph. For an edge $e=uv$ in $G$, {\em contracting} $e$ means removing the two vertices $u$ and $v$ from $G$, replacing them with a new vertex $w$, and
for every vertex $y$ in the neighborhood of $v$ or $u$ in $G$, adding an edge $wy$ in the new graph, not allowing multiple edges.
Given a connected vertex-set $S\subseteq V(G)$, {\em contracting} $S$ means contracting the edges between the vertices in $S$ to obtain a single vertex at the end. For a set of edges $E' \subseteq E(G)$, the subgraph of $G$ induced by $E'$ is the graph whose vertex-set is the set of endpoints of the edges in $E'$, and whose edge-set is $E'$.

A graph is {\it planar} if it can be drawn in the plane without edge intersections (except at the endpoints). A {\it plane graph} is a planar graph together with a fixed drawing. Each maximal connected region of the plane minus the drawing is an open set; these are the {\em faces}. \iflong One is unbounded, called the {\em ourter face}.\fi

\iflong Given a graph $G$, a \emph{walk} $W=(v_1,\ldots,v_q)$ in $G$ is a sequence of vertices in $V(G)$ such that for each $i\in \{1,\ldots,q-1 \}$ it holds that $\{v_i,v_{i+1}\}\in E(G)$. A \emph{path} is a walk with all vertices distinct. \fi
Let $W_1=(u_1, \ldots, u_p)$ and $W_2=(v_1, \ldots, v_q)$, $p, q \in \nat$, be two walks such that $u_p =v_1$. Define the \emph{gluing} operation $\circ$ that when applied to $W_1$ and $W_2$ produces that walk
$W_1 \circ W_2 = (u_1, \ldots, u_p, v_2, \ldots, v_q)$. For a path $P=(v_1,\ldots, v_q)$, $q\in \nat$ and $i\in[q]$, we let $\pref{P}{v_i}$ be the \emph{prefix} of the $P$ ending at $v_i$, that is the path $(v_1,v_2,\ldots v_i)$. Similarly, we let $\suff{P}{v_i}$ be the \emph{suffix} of the $P$ starting at $v_i$, that is the path $(v_i,v_{i+1},\ldots v_q)$. 

For a graph $G$ and two vertices $u, v \in V(G)$, we denote by $d_G(u, v)$ the \emph{distance} between $u$ and $v$ in $G$, which is the length (number of edges) of a shortest path between $u$ and $v$ in $G$.

\subparagraph*{{\bf Parameterized Complexity.}}

A {\it parameterized problem} $Q$ is a subset of $\Omega^* \times
\mathbb{N}$, where $\Omega$ is a fixed alphabet. Each instance of the
parameterized problem $Q$ is a pair $(x, k)$, where $k \in \nat$ is called the {\it parameter}. We say that the parameterized problem $Q$ is
{\it fixed-parameter tractable} (\FPT)~\cite{dfbook}, if there is a
(parameterized) algorithm, also called an {\em \FPT-algorithm},  that decides whether an input $(x, k)$
is a member of $Q$ in time $f(k) \cdot |x|^{O(1)}$, where $f$ is a computable function.  Let \FPT{} denote the class of all fixed-parameter
tractable parameterized problems.
By \emph{\FPT-time} we denote time of the form $f(k)\cdot |x|^{O(1)}$, where $f$ is a computable function and $|x|$ is the input instance size.


\subparagraph*{\textbf{\textsc{\mor{}}} and \textbf{\textsc{\cmor{}}}.}%

For a set $S$, we denote by $2^{S}$ the power set of $S$.
Let $G=(V, E)$ be a graph, let $C \subset \mathbb{N}$ be a finite set of colors, and let $\chi: V \longrightarrow 2^{C}$. A vertex $v$ in $V$ is \emph{empty} if
$\chi(v) = \emptyset$. A color $c$ \emph{appears on}, or is \emph{contained in}, a subset $S$ of vertices if $c \in \bigcup_{v \in S} \chi(v)$. For two vertices $u, v \in V(G)$, $\ell \in \mathbb{N}$, a $u$-$v$ walk $W=(u=v_0, \ldots, v_r=v)$ in $G$ is \emph{$\ell$-valid} if $|\bigcup_{i=0}^{r} \chi(v_i)| \leq \ell$; that is, if the total number of colors appearing on the vertices of $W$ is at most $\ell$. A color $c \in C$ is {\em connected} in $G$, or simply {\em connected}, if $\bigcup_{c \in \chi(v)} \{v\}$ induces a connected subgraph of $G$. The graph $G$ is {\em color-connected}, if for every $c \in C$, $c$ is connected in $G$.  

For an instance $(G, C, \chi, s, t, k)$ of \cmor{}, if $s$ and $t$ are nonempty vertices, we can remove their colors and decrement $k$ by $|\chi(s) \cup \chi(t)|$ because their colors appear on every $s$-$t$ path. If afterwards $k$ becomes negative, then there is no $k$-valid $s$-$t$ path in $G$. Moreover, if $s$ and $t$ are adjacent, then the path $(s, t)$ is a path with the minimum number of colors among all $s$-$t$ paths in $G$. Therefore, we will assume:

\begin{assumption}
	\label{ass:sat}
	For an instance $(G, C, \chi, s, t, k)$ of \mor{} or \cmor{}, we can assume that $s$ and $t$ are nonadjacent empty vertices.
\end{assumption}

\begin{definition}\rm
	\label{def:colorcontraction}
	Let $s, t$ be two designated vertices in $G$, and let $x, y$ be two adjacent vertices in $G$ such that $\chi(x) = \chi(y)$. We define the following operation to $x$ and $y$, referred to as a \emph{color contraction} operation, that results in a  graph $G'$, a color function $\chi'$, and two designated vertices $s', t'$ in $G'$, obtained as follows:
	\begin{itemize}
		\item $G'$ is the graph obtained from $G$ by contracting the edge $xy$, which results in a new~vertex~$z$;
		\item $s'=s$ (resp.~$t'=t$) if $s \notin \{x, y\}$ (resp.~$t \notin \{x, y\}$), and $s'=z$ (resp.~$t'=z$) otherwise; 
		\item $\chi': V(G') \longrightarrow 2^{C}$ is 
		defined as $\chi'(w) = \chi(w)$ if $w \neq z$, and $\chi'(z)=\chi(x)=\chi(y)$.
	\end{itemize}
	$G$ is \emph{irreducible} if there does not exist two vertices in $G$ to which the color contraction operation is applicable.
\end{definition}

\begin{observation}\label{obs:colorContraction}
	Let $G$ be a color-connected plane graph, $C$ a color set, $\chi: V \longrightarrow 2^{C}$, $s, t \in V(G)$, and $k \in \mathbb{N}$. Suppose that the color contraction operation is applied to two vertices $x,y$ in $G$ to obtain $G'$, $\chi'$, $s', t'$, as described in Definition~\ref{def:colorcontraction}. For any two vertices $u,v\in V(G)$ and $p\subseteq C$ there is a $u$-$v$ walk $W$ with $\chi(W)=p$ in $G$ if and only if there is a $u'$-$v'$ walk $W'$ with $\chi(W')=p$, where $u'=u$ (resp.~$v'=v$) if $u \notin \{x, y\}$ (resp.~$v \notin \{x, y\}$), and $u'=z$ (resp.~$v'=z$) otherwise.
\end{observation}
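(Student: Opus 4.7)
The plan is to prove both directions by exhibiting explicit transformations between walks in $G$ and walks in $G'$, with no reliance on planarity or color-connectedness (neither hypothesis is actually needed here).

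For the forward direction, given a walk $W=(u=w_0,w_1,\ldots,w_r=v)$ in $G$ with $\chi(W)=p$, I would obtain $W'$ by replacing every occurrence of $x$ or $y$ in $W$ by $z$ and then collapsing any pair of consecutive $z$'s into a single $z$ (such a pair appears exactly when $W$ traverses the contracted edge $xy$). To check $W'$ is a walk in $G'$, one inspects each step $\{w_i,w_{i+1}\}$: if neither endpoint lies in $\{x,y\}$ the edge survives in $G'$; if exactly one endpoint lies in $\{x,y\}$, then by the definition of edge contraction the corresponding edge incident to $z$ exists in $G'$; and the only other case, $\{w_i,w_{i+1}\}=\{x,y\}$, disappears after the collapsing step. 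The endpoint convention $u'$, $v'$ in the statement matches this construction, and since $\chi'(z)=\chi(x)=\chi(y)$ and all other colors are unchanged we have $\chi'(W')=\chi(W)=p$.

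For the backward direction, given $W'=(u'=w'_0,\ldots,w'_r=v')$ in $G'$ with $\chi'(W')=p$, I would build $W$ by scanning $W'$ left-to-right and replacing each occurrence of $z$ by one of $x$, $y$, possibly preceded or followed by a traversal of the edge $xy$. The key local fact, immediate from the definition of edge contraction, is that every neighbor $w\neq z$ of $z$ in $G'$ was a neighbor in $G$ of at least one of $x$, $y$. So for an internal occurrence $w'_i=z$ with already-chosen predecessor $a\in\{x,y\}$ in $G$, I choose the successor representative $b\in\{x,y\}$ to be adjacent in $G$ to $w'_{i+1}$; if $a\neq b$, I insert the edge $ab=xy$ between them. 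At the endpoints, if $u'=z$ then by the definition of $u'$ we have $u\in\{x,y\}$, so the walk must begin at the prescribed vertex $u$; if its $G'$-successor is only adjacent to the other vertex in $G$, I prepend one traversal of the edge $xy$. The symmetric procedure handles $v'=z$. All inserted vertices lie in $\{x,y\}$, which have color set $\chi(x)=\chi(y)=\chi'(z)$, so $\chi(W)=\chi'(W')=p$.

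The main (in fact only) obstacle is the bookkeeping in the backward direction: ensuring that for each occurrence of $z$ the choice of pre-image $x$ or $y$ is consistent with both its neighbors in the walk, and that the insertions of the edge $xy$ never change the color set or the endpoints. Once the observation that each neighbor of $z$ in $G'$ came from a neighbor of $x$ or $y$ in $G$ is isolated, the case analysis is routine and both walk constructions preserve $\chi$-image by design.
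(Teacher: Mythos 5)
The paper states this as an \emph{Observation} with no accompanying proof, so there is nothing in the paper to compare against; your argument supplies the omitted routine verification, and it is correct. The forward construction (replace each occurrence of $x$ or $y$ by $z$, collapse any resulting consecutive $z,z$ pair, and check edge-by-edge that the result is a walk in $G'$) and the backward construction (expand each occurrence of $z$ to one or both of $x,y$, choosing the representative adjacent in $G$ to the neighboring vertex of the walk, and inserting a traversal of the edge $xy$ when the entry and exit representatives differ) are exactly the two directions one needs, and in both cases the color set is preserved because $\chi'(z)=\chi(x)=\chi(y)$ while all other vertices keep their colors. Your remark that neither planarity nor color-connectedness is actually used here is also accurate; only the hypotheses of Definition~\ref{def:colorcontraction} (namely that $x$ and $y$ are adjacent and $\chi(x)=\chi(y)$) matter.

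One small imprecision worth fixing in a final write-up: when handling an internal occurrence $w'_i=z$, you speak of an ``already-chosen predecessor $a\in\{x,y\}$,'' but the walk-predecessor $w'_{i-1}$ is not in $\{x,y\}$ (occurrences of $z$ in $W'$ are isolated since $G'$ is simple). What you mean is that $a$ is the representative in $\{x,y\}$ chosen so that $a$ is $G$-adjacent to $w'_{i-1}$, and $b$ the representative chosen so that $b$ is $G$-adjacent to $w'_{i+1}$; phrasing it that way removes the ambiguity without changing the argument.
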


\section{FPT algorithm for \cmor}

Given an instance $(G,C, \chi, s,t,k)$ and a vertex $v\in V(G)$, 
we say that a vertex $u$ is \emph{reachable} from a vertex $v$ by a color set $p\subseteq C$ if there exists a $v$-$u$ path $p$ with $\chi(P)\subseteq p$.
Furthermore, we say that a color set $p\subseteq C$ is \emph{$v$-opening}
if there is a vertex $u\in V(G)$ such that $u$ is reachable from $v$ by $p$, but not by any proper subset of $p$. Note that necessarily $\chi(v)\subseteq p$. 
A set of colors $p$ \emph{completes} a $v$-$t$ walk $Q$ if there is an $s$-$v$ path $P$ with $\chi(P)=p$, $|p \cup \chi(Q)|\le k$, and $v$ is the only vertex on $Q$ reachable from $s$ by $p$. We say $p$ \emph{minimally completes} a $v$-$t$ walk $Q$, if $p$ completes $Q$ and there is no $s$-$v$ path $P'$ with $\chi(P')\subsetneq p$. 
We say that an $s$-$t$ path $P$ is \emph{nice}, if for every prefix $\pref{P}{u}$ of $P$ ending at the vertex $u\in V(G)$ there is no $s$-$u$ path $P'$ with $\chi(P')\subsetneq \chi(\pref{P}{u})$. 

\begin{observation}
	There is a $k$-valid $s$-$t$ path if and only if there is a nice $k$-valid $s$-$t$ path.
\end{observation}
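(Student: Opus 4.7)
The $(\Leftarrow)$ direction is immediate since a nice $k$-valid $s$-$t$ path is, in particular, a $k$-valid $s$-$t$ path.

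For the $(\Rightarrow)$ direction, my plan is a standard prefix-swap argument driven by a minimality choice. Among all $k$-valid $s$-$t$ paths, I would pick $P = (v_0, \ldots, v_r)$ that minimizes the signature $\sigma(P) := (|\chi(\pref{P}{v_0})|, |\chi(\pref{P}{v_1})|, \ldots, |\chi(\pref{P}{v_r})|, +\infty, +\infty, \ldots)$ lexicographically, padded with $+\infty$ so sequences of different lengths are comparable. Since each entry lies in $\{0, 1, \ldots, k, +\infty\}$ and the family of $k$-valid $s$-$t$ paths is finite, such a minimizer exists. I then claim this $P$ must be nice. Suppose not, and let $i$ be the smallest index at which the nice condition fails; fix an $s$-$v_i$ path $P'$ with $\chi(P') \subsetneq \chi(\pref{P}{v_i})$, chosen additionally to share as long a common initial segment $(v_0, \ldots, v_j)$ with $P$ as possible (for some $j < i$). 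Form the walk $W := P' \circ \suff{P}{v_i}$; its color set satisfies $\chi(W) = \chi(P') \cup \chi(\suff{P}{v_i}) \subseteq \chi(\pref{P}{v_i}) \cup \chi(\suff{P}{v_i}) = \chi(P)$, so $W$ is $k$-valid, and extracting a simple $s$-$t$ path $P^{\mathrm{new}}$ from $W$ by iteratively cutting cycles yields a $k$-valid $s$-$t$ path (cycle removal only shrinks prefix color sets at retained vertices).

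The crux is to show $\sigma(P^{\mathrm{new}}) <_{\mathrm{lex}} \sigma(P)$, contradicting the choice of $P$. The first $j{+}1$ vertices of $P^{\mathrm{new}}$ coincide with the common initial segment $(v_0, \ldots, v_j)$ (this lies at the start of $W$ and contains no repetitions to cut), so $\sigma(P^{\mathrm{new}})_\ell = \sigma(P)_\ell$ for $\ell \le j$. At index $j{+}1$ the path $P^{\mathrm{new}}$ diverges from $P$ along the next vertex of $P'$, and I would argue that the corresponding prefix color-size drops strictly below $\sigma(P)_{j+1}$: the minimality of $i$ as a bad index forces $\sigma(P)_{j+1}$ to equal the minimum color count of any $s$-$v_{j+1}$ path, while the maximality of the common prefix combined with $\chi(P') \subsetneq \chi(\pref{P}{v_i})$ produces an $s$-$w_{j+1}$ prefix of $P^{\mathrm{new}}$ with strictly fewer colors. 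The main obstacle will be making this last inequality fully rigorous: one must verify that the cycle-removal step cannot inflate the prefix color set near the divergence point and that the choice of $P'$ indeed admits a non-trivial common prefix (for which the base case $j = 0$, where $P'$ and $P$ agree only at $s$, needs a separate but straightforward check using $|\chi(P')| < |\chi(\pref{P}{v_i})|$).
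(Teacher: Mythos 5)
Your plan mirrors the strategy behind the paper's Lemma~\ref{lem:minCharVector}, which is what implicitly proves this Observation: pick a $k$-valid $s$-$t$ path minimizing a lexicographic objective and show any such minimizer is nice via a prefix exchange. The trouble is that the objective $\sigma$ you chose is not the right one, and a $\sigma$-minimizer need \emph{not} be nice, so the step where you claim the exchange strictly decreases $\sigma$ at coordinate $j{+}1$ is false. Concretely, take $V=\{s,a,b,c,v,t\}$ with $\chi(s)=\chi(v)=\chi(t)=\emptyset$, $\chi(a)=\{3\}$, $\chi(b)=\{1,2,3\}$, $\chi(c)=\{1,2\}$, and edges $sa,ab,bv,vt,sc,cv,bc$; this is planar and every color class is connected. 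The four $s$-$t$ paths have $\sigma(s,a,b,v,t)=(0,1,3,3,3,\infty,\ldots)$, $\sigma(s,a,b,c,v,t)=(0,1,3,3,3,3,\infty,\ldots)$, $\sigma(s,c,v,t)=(0,2,2,2,\infty,\ldots)$, $\sigma(s,c,b,v,t)=(0,2,3,3,3,\infty,\ldots)$, so your lexicographic minimizer is $P=(s,a,b,c,v,t)$ (note the $+\infty$ padding actively rewards \emph{longer} paths). But $P$ is not nice: $\chi(\pref{P}{c})=\{1,2,3\}$ while $(s,c)$ is an $s$-$c$ path with color set $\{1,2\}$. Running your exchange with $i$ the index of $c$, $j=0$, $P'=(s,c)$, $W=(s,c,v,t)$ gives $\sigma(W)_1=2>1=\sigma(P)_1$, so $\sigma(W)>_{\mathrm{lex}}\sigma(P)$ and no contradiction is reached; in fact the only nice $k$-valid path here, $(s,c,v,t)$, is ranked worst of all four by $\sigma$.

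The conceptual problem is that comparing prefix color counts \emph{indexed by vertex position} in forward lexicographic order rewards paths that pick up colors slowly in their first few steps even if they end with strictly more colors overall — the opposite of the incentive niceness needs — and also rewards needless length via the padding. Changing the padding does not repair this: with $0$-padding the same instance selects $(s,a,b,v,t)$, which is again not nice (at $v$). The paper avoids the issue by using a genuinely different vector, the characteristic vector $\vec{\ell}(P)=(\ell_0(P),\ldots,\ell_k(P))$, where $\ell_i(P)$ is the number of edges on $\suff{P}{v_i(P)}$ and $v_i(P)$ is the last vertex whose prefix uses at most $i$ colors, compared by a \emph{reverse}-lexicographic order (largest differing index first). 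With that ordering the prefix exchange strictly decreases the vector, as shown in the proof of Lemma~\ref{lem:minCharVector}, and the concern you correctly raise about extracting a simple path from the walk $P'\circ\suff{P}{v_i}$ is benign there, since cycle removal can only shrink each coordinate $\ell_j$.
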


\begin{definition}[$k$-representation]\label{def:k-representation_wrt_v}
	Given an instance $(G,C, \chi, s,t,k)$ of \cmor{}, a vertex $v\in V(G)$, and two families $\PPP$ and $\PPP'$ of $s$-opening subsets of $C$ of size $\ell\le k$, we say that $\PPP'$ $k$-represents $\PPP$ w.r.t. $v$ if for every $p\in \PPP$ and every $v$-$t$ 
	walk $Q$ such that $p$ minimally completes $Q$,
	there is a set $p'\in \PPP'$ such that $|p'\cup \chi(Q)|\le k$, $p'\cap \chi(Q)\supseteq p\cap \chi(Q)$, and there is an $s$-$v$ path $P'$ with $\chi(P')= p'$. 
\end{definition}

The main technical result of this paper is then the following theorem stating that if a family $\PPP$ of color sets is large, then we can find an irrelevant color set in $\PPP$.

	\begin{restatable}{lemma}{constructingFamily}\label{lem:costructingFamily}
	Let $(G,C,\chi,s,t,k)$ be an instance of \cmor. Given a family $\PPP$ of $s$-opening color sets of set of size $\ell\le k$ and a vertex $v\in V(G)$, 
	if $|\PPP|> f(k)$, $f(k)= k^{\Oh(k^3)}$, then we can in time polynomial in $|\PPP|+ |V(G)|$ 
	find a set $p\in\PPP$ such that $\PPP\setminus\{p\}$ $k$-represents $\PPP$~w.r.t.~$v$.	
\end{restatable}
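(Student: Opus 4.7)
The plan is as follows. First, I would dispose of a trivial reduction: if some $p \in \PPP$ does not admit an $s$-$v$ path $P$ with $\chi(P) = p$, then $p$ cannot minimally complete any $v$-$t$ walk (the definition of completion explicitly requires such a witness path), so $\PPP \setminus \{p\}$ trivially $k$-represents $\PPP$ w.r.t.\ $v$ and we are done. Hence I may assume every $p \in \PPP$ has a fixed witness $s$-$v$ path $P(p)$ with $\chi(P(p)) = p$.

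Next I would apply the sunflower lemma (Lemma~\ref{lem:SF}). After grouping $\PPP$ by cardinality (at most $k+1$ groups) and using $|\PPP| > k^{O(k^3)}$, one group of common size $\ell \le k$ is large enough to produce a sunflower $\PPP^{*} = \{p_1, \ldots, p_a\}$ of size $a = k^{\Theta(k^2)}$ with core $Y$ and pairwise disjoint petals $\pi_i := p_i \setminus Y$. Since $v$ is an endpoint of every $P_i := P(p_i)$, one has $\chi(v) \subseteq \bigcap_i p_i$, and in particular $\chi(v) \subseteq Y$. The candidate irrelevant set will be some $p_{i^{*}} \in \PPP^{*}$. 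For any $v$-$t$ walk $Q$ that $p_{i^{*}}$ minimally completes, the natural replacement is another sunflower element $p_j$: by the reduction $p_j$ has an $s$-$v$ path, and since petals are disjoint, the containment condition (b) of Definition~\ref{def:k-representation_wrt_v} reduces exactly to $\pi_{i^{*}} \cap \chi(Q) = \emptyset$, while the size condition (a) then follows from $|p_j| = |p_{i^{*}}|$ together with (b). Thus the easy case---$\pi_{i^{*}}$ disjoint from $\chi(Q)$---is handled by any other element of $\PPP^{*}$.

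The hard case is when $\pi_{i^{*}} \cap \chi(Q) \neq \emptyset$, and this is where the planar embedding of $G$ together with color-connectedness must be exploited. If $c \in \pi_{i^{*}} \cap \chi(Q)$ is witnessed by $u \in V(Q) \setminus \{v\}$, then $\chi^{-1}(c)$ is a connected subgraph meeting both $P_{i^{*}}$ and $u$, yet by the minimality condition $u$ is unreachable from $s$ using only colors in $p_{i^{*}}$. This topological configuration, combined with the abundance of alternative paths $P_j$ furnished by the sunflower, should permit either (i) rerouting some $P_j$ through $\chi^{-1}(c)$ to obtain an $s$-$v$ path whose color set is the required replacement in $\PPP$, or (ii) constructing an $s$-$v$ color set strictly smaller than $p_{i^{*}}$, contradicting the minimality of $p_{i^{*}}$ completing $Q$.

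The main obstacle is making this planar rerouting argument work for a \emph{single} choice of $p_{i^{*}}$ uniformly over all walks $Q$, so that $p_{i^{*}}$ is fixed before the adversary's $Q$ is revealed. This is precisely the point at which, as advertised in the introduction, one exploits ``the structure of a large set of paths in a planar graph that start and end in the same vertex''. I would expect the selection of $p_{i^{*}}$ to be based on a topological extremality among the witness paths $P_1, \ldots, P_a$ (e.g.\ an innermost or outermost path in a suitable planar ordering around the source--sink pair), coupled with a bound on how many sunflower elements can simultaneously foil the rerouting. This counting is what ultimately drives the $k^{O(k^3)}$ bound on $f(k)$, and carrying it out rigorously is the heart of the proof.
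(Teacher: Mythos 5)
Your first two steps are essentially the paper's: dispose of the case where $p$ has no witness $s$--$v$ path realizing $\chi(P)=p$ (Observation~\ref{obs:easyCasekRep}), and then apply the sunflower lemma to reduce to a subfamily whose members pairwise intersect in a common core $Y$, so that the petals are pairwise color-disjoint. You correctly flag that the remaining ``hard case'' is where all the work lies. However, your plan for that case has several genuine gaps, and the intuitions you sketch point in a direction the paper does not take.

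First, a missing preparatory reduction. The paper does not merely observe that the petals are disjoint; after finding the sunflower it explicitly removes the core colors $c=Y$ from the instance and then repeatedly applies \emph{color contraction} until the graph is irreducible. This is not cosmetic: the irreducibility of $G$ is used constantly in the topological arguments (e.g.\ ``no two consecutive vertices of a path are both empty,'' so any path that crosses a region must pick up a color there). Without this reduction, the key counting step -- that a $k$-valid walk accumulates a fresh color each time it passes through a fresh face -- simply fails. Your proposal never mentions this step, and without it the planar argument cannot be made to close.

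Second, and more fundamentally, you underestimate the remaining combinatorial obstruction. Even after the sunflower reduction makes the color sets disjoint, the corresponding witness \emph{paths} $P_1,\dots,P_a$ can still intersect heavily at vertices (which, after the reduction, must be empty vertices). The paper's key structural insight is a two-level decomposition: Algorithm~\ref{alg:refiningPaths} extracts a set $U$ of ``heavy'' vertices through which most remaining paths pass, and Lemma~\ref{lem:sizeOfU} (via the topological Lemma~\ref{lem:repetingSameColorSet}) shows $|U| \le 2k$. Then one restricts to paths that traverse $U$ in the same order, so that between consecutive $u_i,u_{i+1}\in U$ the paths have low pairwise intersection, which in turn (Lemmas~\ref{lem:disjointPaths}, \ref{lem:disjointPathFewFaces}, \ref{lem:nonIntersectingPaths}) yields a large set of pairwise \emph{vertex-disjoint} subpaths forming a planar ``ladder.'' This whole layer is absent from your plan.

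Third, your guess for how the single $p_{i^*}$ is chosen -- ``an innermost or outermost path in a suitable planar ordering'' -- is the opposite of what the paper does. In Lemma~\ref{lem:main} the paths are cyclically ordered around $u_i$ and the \emph{middle} ones are the safe choices: a path $P^i_j$ with $(2k+1)^2\cdot b < j < |\bfQ^i|-(2k+1)^2\cdot b$ is insulated on both sides by enough pairwise-disjoint, color-disjoint paths that any $k$-valid walk $Q$ avoiding $U$ which touches a color of $P^i_j$ would have to cross $k+1$ color-disjoint ``bands'' and hence use more than $k$ colors. Extremal (innermost/outermost) paths are precisely the ones that \emph{cannot} be discarded, since $Q$ may slip past them without crossing anything. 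Your suggested alternatives (i) ``rerouting $P_j$ through $\chi^{-1}(c)$'' and (ii) ``contradicting minimality of $p_{i^*}$'' also do not correspond to the paper's argument, which never reroutes paths but instead proves a flat disjointness statement $\chi(P_{i^*})\cap\chi(Q)=\emptyset$, after which any other sunflower member represents $p_{i^*}$ because they all have the same size. So while your opening moves are right and you correctly locate the difficulty, the mechanism you propose for resolving it is both incomplete (no color contraction, no $U$-decomposition) and misdirected (wrong choice of extremal versus middle path).
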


\subsection{Algorithm assuming Lemma~\ref{lem:costructingFamily}}

In this subsection, we show how to get an \FPT-algorithm for \cmor\ assuming Lemma~\ref{lem:costructingFamily} is true. The whole algorithm is relatively simple and is given in Algorithm~\ref{alg:main}. The main goal of the subsection is to show that, given Lemma~\ref{lem:costructingFamily}, the algorithm is correct and runs in \FPT-time. 

While the definition of $k$-representation is not the most intuitive definition of representation (for example it is not transitive), we show that it is sufficient to preserve a path of some specific form. 
Let $P$ be a $k$-valid $s$-$t$ path. For $i\in [0,k]$ let $v_i(P)$ be the last vertex on $P$ such that $|\chi(\pref{P}{v_i(P)})|\le i$
and let $\ell_i(P)$ be the length, \ie, number of edges, of $\suff{P}{v_i(P)}$.
If the path $P$ is clear from the context, we write $v_i$ and $\ell_i$ instead of $v_i(P)$ and $\ell_i(P)$. For example, we write $\pref{P}{v_i}$ instead of $\pref{P}{v_i(P)}$. 
Note that for a $k$-valid $s$-$t$ path $P$, $\ell_k(P)=0$ and since $G$ is irreducible w.r.t. color contraction, $\ell_0(P)$ is precisely the length of $P$. For two vectors $(a_0,a_1, a_2,\ldots, a_k), (b_0, b_1, b_2,\ldots, b_k)$ we say $(a_0, \ldots, a_k) < (b_0,\ldots, b_k)$ if there exists $i\in [0,k]$ such that $a_i<b_i$ and for all $j>i$ $a_j=b_j$. 
For a $k$-valid $s$-$t$ path, we call the vector $\vec{\ell}(P)=(\ell_0(P),\ldots, \ell_k(P))$ the \emph{characteristic} vector of $P$ (see also Figure~\ref{fig:charVector}). 
\begin{figure}[ht]
	\centering
	\includegraphics[width=0.8\textwidth,page=5]{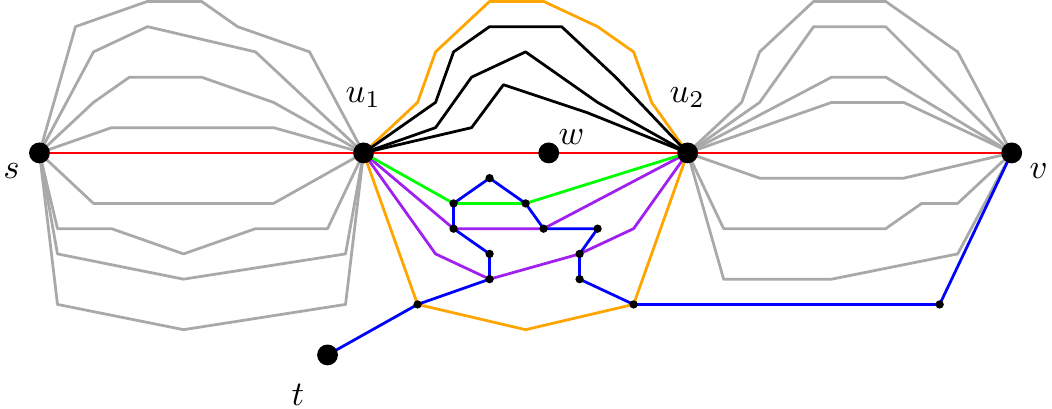}
	\caption{Figure depicting the definition of $v_i(P)$ for $k=6$ and a path using $5$ colors. The characteristic vector $\vec{\ell}(P)=(\ell_0(P),\ldots, \ell_6(P))$ is $(10, 6,6,4,2,0,0)$.}\label{fig:charVector}
\end{figure}

\begin{algorithm}[ht]

	\SetAlgoLined
	\KwData{An instance $(G,C,\chi,s,t,k)$ of \cmor}
	\KwResult{A $k$-valid $s$-$t$ path or \NO, if such a path does not exists}
	
	$\PPP_0=\{\emptyset\}$\label{step:1}; \\
	\For {$i\in [k]$\label{step:2}}{
		$\hat{\PPP}_{i} = \emptyset$\label{step:3} \\
		\For{$v\in V(G)$\label{step:4}}{ 
			\For{$p\in \bigcup_{j\in [0,i-1]}\PPP_j$\label{step:5}}{
				\If{$|\chi(v)\cup p|=i$\label{step:6}}{ 
					\If{there is a $k$-valid $s$-$t$ path $P$ with $\chi(P)\subseteq \chi(v)\cup p$\label{step:7}}{
						Output $P$ and stop\label{step:8}
					}\label{step:9}
					
					$\hat{\PPP}_{i} = \hat{\PPP}_{i}\cup \{\chi(v)\cup p\}$\label{step:10}
				}\label{step:11}
			}\label{step:12}
		}\label{step:13}
		\For{$v\in V(G)$\label{step:14}}{
		$\PPP^v_i=\hat{\PPP}_i$\label{step:15}\\
			\While{$|\PPP^v_i|>f(k)$\label{step:whileLoopBegin}}{
				Compute $p\in \PPP^v_i$ such that $\PPP^v_i\setminus \{p\}$ $k$-represents $\PPP^v_i$ w.r.t. $v$~(by Lemma~\ref{lem:costructingFamily})\label{step:17}\\
				$\PPP^v_i=\PPP^v_i\setminus \{p\}$\label{step:18}
			}\label{step:whileLoopEnd}
		}\label{step:20}
		$\PPP_i=\bigcup_{v\in V(G)}\PPP^v_i$\label{step:21}
	}\label{step:22}
	Output \NO\label{step:23}
	\caption{The algorithm for \cmor{}}	\label{alg:main}
\end{algorithm}

\iflong\begin{lemma}\label{lem:minCharVector}\fi
\ifshort\begin{lemma}[$\proofMark$]\label{lem:minCharVector}\fi
	Let $P$ be a $k$-valid $s$-$t$ path with characteristic vector $\vec{\ell}(P)$, then there exists a nice $k$-valid $s$-$t$ path $P'$ with characteristic vector $\vec{\ell}(P')$ such that $\vec{\ell}(P')\le \vec{\ell}(P)$.
\end{lemma}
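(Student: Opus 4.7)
My plan is to prove the lemma by a minimality argument: let $P'$ be a $k$-valid $s$-$t$ path that minimizes $\vec{\ell}(P')$ in the stated order (the minimum is attained since the characteristic vectors live in a finite set). Since $P$ is itself a valid competitor, $\vec{\ell}(P') \le \vec{\ell}(P)$, so it suffices to prove that any such minimizer $P'$ is nice.

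Suppose for contradiction that $P'$ is not nice: there is a vertex $u$ on $P'$ and an $s$-$u$ path $Q$ with $\chi(Q) \subsetneq \chi(\pref{P'}{u})$. Let $v$ be the first vertex of $Q$ (proceeding from $s$) that also lies on $\suff{P'}{u}$; such a $v$ exists because $u$ itself qualifies. Let $R$ be the concatenation of the $s$-$v$ prefix of $Q$ with $\suff{P'}{v}$. By the choice of $v$ these two pieces share only $v$, so $R$ is a simple $s$-$t$ path; moreover $\chi(R) \subseteq \chi(Q) \cup \chi(\suff{P'}{u}) \subseteq \chi(P')$, so $R$ is $k$-valid. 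The task is then to derive $\vec{\ell}(R) < \vec{\ell}(P')$, contradicting the minimality of $P'$.

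Set $j = |\chi(\pref{P'}{u})|$, let $j_Q$ be the number of colors used by the $s$-$v$ prefix of $Q$ (so $j_Q \le |\chi(Q)| < j$), and let $j^* = |\chi(\pref{P'}{v})|$, which satisfies $j^* \ge j$ since $v$ lies at or after $u$ on $P'$. The chain $\chi(Q) \subsetneq \chi(\pref{P'}{u}) \subseteq \chi(\pref{P'}{v})$ yields the set-theoretic monotonicity $\chi(\pref{R}{w}) \subseteq \chi(\pref{P'}{w})$ for every vertex $w$ on the common tail $\suff{P'}{v} = \suff{R}{v}$. Hence for $i \ge j^*$ both $v_i(R)$ and $v_i(P')$ lie on this common tail, and the inclusion forces $v_i(R)$ to be no earlier than $v_i(P')$ along it (because $v_i$ is defined as the \emph{last} qualifying vertex), giving $\ell_i(R) \le \ell_i(P')$. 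For $j_Q \le i < j^*$ --- a non-empty range since $j_Q < j^*$ --- the vertex $v_i(P')$ lies strictly before $v$ on $P'$ while $v_i(R)$ lies at or after $v$ on $R$, so $\ell_i(R)$ is bounded by the length of $\suff{P'}{v}$, which is strictly less than $\ell_i(P')$ (the latter has an additional initial segment from $v_i(P')$ to $v$ of length at least one). Letting $i^*$ be the largest index at which $\ell_{i^*}(R) \ne \ell_{i^*}(P')$, we obtain $\ell_{i^*}(R) < \ell_{i^*}(P')$, and therefore $\vec{\ell}(R) < \vec{\ell}(P')$ in the stated reverse-lex order.

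The only delicate step is the positional claim in the range $i \ge j^*$, where the set-theoretic inclusion on the common tail has to be parlayed into a statement about the relative positions of $v_i(R)$ and $v_i(P')$; this is exactly where the ``last-vertex'' definition of $v_i$ is used. Everything else (defining $R$, checking it is a simple $k$-valid path, and the length bookkeeping in the range $[j_Q, j^*)$) is elementary.
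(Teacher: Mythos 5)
Your proof is correct and follows the paper's approach: take a path $P'$ minimizing $\vec{\ell}$, observe that a witness to non-niceness lets you splice in a cheaper $s$-$v$ prefix, and show via the position/color-set bookkeeping that the result has a strictly smaller characteristic vector, contradicting minimality. You are in fact a bit more careful than the paper in choosing the splice point $v$ (the first vertex of $Q$ on $\suff{P'}{u}$) to guarantee $R$ is a simple path, whereas the paper glues directly at the witness vertex and leaves the path/walk issue implicit.
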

\iflong
\begin{proof}
	Let $P'$ be a path such that $\vec{\ell}(P') \le \vec{\ell}(P)$ and there does not exist a path $P''$ with $\vec{\ell}(P'') < \vec{\ell}(P')$. Since $\vec{\ell}(P) \le \vec{\ell}(P)$, the relation $<$ is antisymmetric, and there are at most $n^{k+1}$ different characteristic vectors of a path in an $n$ vertex graph, it follows that such $P'$ always exists. We claim that $P'$ is nice. We prove the claim by contradiction. Assume that $P'$ is not nice and let $v$ be a vertex on $P'$ such that $|\chi(\pref{P'}{v})|=i$, $i\in [k]$, 
	but $v$ can be reached from $s$ by $p\subsetneq \chi(\pref{P'}{v})$.
	Let $P_v$ be an $s$-$v$ path using precisely colors in $p$ and let $P'' = P_v\circ \suff{P'}{v}$. Clearly, $\chi(P'')\subseteq \chi(P')$ and $P''$ is $k$-valid. 
	Moreover, $p=\chi(\pref{P''}{v})\subsetneq \chi(\pref{P'}{v})$ hence $\ell_{|p|}(P'')<\ell_{|p|}(P')$ andl vertices $u\in V(\suff{P'}{v})$, $\chi(\pref{P''}{u})\subseteq \chi(\pref{P'}{u})$ hence $\ell_{j}(P'')\le\ell_{j}(P')$ for all $j\in [|p|,k]$. But then $\vec{\ell}(P'')< \vec{\ell}(P')$, which is a contradiction with the choice of $P'$.
\end{proof}
\fi
The following technical lemma will help us later show that replacing a prefix of a path $P$ with $\chi(\pref{P}{v_i})\in \PPP$ by its representative will always lead to a path $P'$ with $\vec{\ell}(P')\le \vec{\ell}(P)$. 
\newcommand{\pre}{\operatorname{pre}}
\newcommand{\suf}{\operatorname{suf}}
 
\iflong\begin{lemma}~\label{lem:replacingPath}\fi
\ifshort\begin{lemma}[$\proofMark$]~\label{lem:replacingPath}\fi
	Let $P$ be an $s$-$t$ path, $w\in V(P)$, let $\pre=\pref{P}{w}$, $\suf=\suff{P}{w}$, 
	and let $\pre'$ be an $s$-$w$ path such that 
	$|\chi(\pre')\cup (\chi(\pre)\cap \chi(\suf))|\le|\chi(\pre)|$ and $|\chi(\pre')|< |\chi(\pre)|$. Then $\vec{\ell}(\pre'\circ\suf) < \vec{\ell}(P)$.
\end{lemma}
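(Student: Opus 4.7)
The plan is to first prove a pointwise prefix inequality, namely that $|\chi(\pref{P'}{u})| \le |\chi(\pref{P}{u})|$ for every vertex $u$ on $\suf$, where $P' := \pre' \circ \suf$, and then to derive the vector comparison $\vec{\ell}(P') < \vec{\ell}(P)$ by a short case split on $i$.

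To set things up, write $a := |\chi(\pre')|$, $b := |\chi(\pre)|$, and $I := \chi(\pre) \cap \chi(\suf)$; the two hypotheses then read $a < b$ and $|I \setminus \chi(\pre')| \le b - a$ (by rewriting $|\chi(\pre') \cup I| \le b$). For $u$ on $\suf$ set $X := \chi(\pref{\suf}{u}) \subseteq \chi(\suf)$, so $\chi(\pref{P}{u}) = \chi(\pre) \cup X$ and $\chi(\pref{P'}{u}) = \chi(\pre') \cup X$, and $X \subseteq \chi(\suf)$ forces $X \cap \chi(\pre) = X \cap I$. A direct expansion then gives
\[
|\chi(\pre) \cup X| - |\chi(\pre') \cup X| = (b - a) + |X \cap \chi(\pre')| - |X \cap I| \ge (b - a) - |(X \cap I) \setminus \chi(\pre')| \ge 0,
\]
where the first estimate uses $|X \cap \chi(\pre')| \ge |X \cap I| - |(X \cap I) \setminus \chi(\pre')|$ and the second uses $(X \cap I) \setminus \chi(\pre') \subseteq I \setminus \chi(\pre')$ together with the rewritten hypothesis.

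With the pointwise claim in hand, I would split on $i$. For $a \le i < b$, the vertex $w$ witnesses $|\chi(\pref{P'}{w})| = a \le i < b = |\chi(\pref{P}{w})|$, so $v_i(P')$ lies on $\suf$ (hence $\ell_i(P') \le |E(\suf)|$) while $v_i(P)$ lies strictly before $w$ along $\pre$ (hence $\ell_i(P) \ge |E(\suf)| + 1$), yielding $\ell_i(P') < \ell_i(P)$. For $i \ge b$, both $v_i(P)$ and $v_i(P')$ lie on $\suf$ (possibly at $t$); setting $u^* := v_i(P)$, the pointwise claim gives $|\chi(\pref{P'}{u^*})| \le |\chi(\pref{P}{u^*})| \le i$, so $u^*$ is a valid candidate for $v_i(P')$ and $v_i(P')$ therefore occurs no earlier than $u^*$ along $\suf$, giving $\ell_i(P') \le \ell_i(P)$. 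Combining, $\ell_i(P') \le \ell_i(P)$ for every $i \in [a, k]$, with strict inequality at $i = a$; taking $i^*$ to be the largest index in $[a, k]$ for which the inequality is strict forces $\ell_j(P') = \ell_j(P)$ for every $j \in (i^*, k]$, which is exactly the defining condition for $\vec{\ell}(P') < \vec{\ell}(P)$.

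The main obstacle is the set-counting step: the hypothesis $|\chi(\pre') \cup I| \le |\chi(\pre)|$ has to be leveraged uniformly over every prefix $X$ of $\suf$ at once, absorbing exactly those colors of $\pre$ that reappear along $\suf$. Once this identity is pinned down, the case analysis reducing the pointwise inequality to the vector comparison is routine.
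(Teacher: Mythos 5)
Your proof is correct and follows essentially the same route as the paper's: establish the pointwise inequality $|\chi(\pref{P'}{u})| \le |\chi(\pref{P}{u})|$ for all $u$ on $\suf$ (your inclusion--exclusion computation is a minor variant of the paper's disjoint-union decomposition of $\chi(\pre') \cup \chi(P_u)$), and separately observe that $\ell_{a}(P') < \ell_{a}(P)$ because $v_a(P')$ is at or after $w$ while $v_a(P)$ is strictly before $w$. Your explicit choice of $i^*$ as the largest index with strict inequality just makes the concluding step of unwinding the definition of $<$ a bit more verbose than the paper's.
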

\iflong
\begin{proof}
		Let $|\chi(\pre')|=j$ and let $P'=\pre'\circ \suf$. As $\suff{P}{w}=\suff{P'}{w}=\suf$ and $v_j(P')$ is after $w$ on $P'$, but $v_j(P)$ is before $w$ on $P$, we get $\ell_j(P')<\ell_j(P)$. We now need to show that $\ell_{j'}(P')\le\ell_{j'}(P)$ for all $j'>j$. This is the same as showing that for all $u\in \suf$ it holds that $|\chi(\pref{P'}{u})|\le |\chi(\pref{P}{u})|$.
		
		For $u\in \suf$ let $P_u$ be the subpath of $P$ between $w$ and $u$, that is $\pref{\suf}{u}$. For all $u\in \suf$, we have $\chi(\pref{P}{u})=\chi(\pre)\cup \chi(P_u)$ and $\chi(\pref{P'}{u})=\chi(\pre')\cup \chi(P_u)$. Therefore, we can split the respective sizes of the color sets as follows:
		\begin{gather*}
		|\chi(\pref{P}{u})|=|\chi(\pre)|+|\chi(P_u)\setminus\chi(\pre)|\\
		|\chi(\pref{P'}{u})|=|\chi(\pre)'\cup (\chi(\pre)\cap \chi(P_u)) |+|\chi(P_u)\setminus (\chi(\pre)\cup \chi(\pre'))|.		\end{gather*}
		
		Since $|\chi(\pre')\cup (\chi(\pre)\cap \chi(\suf))|\le|\chi(\pre)|$ and $\chi(P_u)\subseteq \chi(\suf)$, it is easy to see that $|\chi(\pref{P'}{u})|\le |\chi(\pref{P}{u})|$ and the lemma follows. 
\end{proof}
\fi 
Next, we show that $k$-representativity preserve in a sense a representation of a $k$-valid paths with minimal characteristic vector. 
Before we state the next lemma we introduce the following notation. We say that a set of colors $p$ \emph{$i$-captures} a $s$-$t$ path $P$ if $|\chi(\pref{P}{v_i}|=|p|$, $p$ completes $\suff{P}{v_i}$, and $p$ contains $\chi(\pref{P}{v_i}) \cap \chi(\suff{P}{v_i})$.  \iflong The main point of the following two lemmas is to show that if we fix $P$ to be a nice $k$-valid path minimizing $\vec{\ell}(P)$, then our computed representative $\PPP_i$ set will always contain a color set $p$ that $i$-captures $P$. This is useful because for a $k$-valid $s$-$t$ path it holds $\suff{P}{v_k}$ is single vertex path containing $t$. Hence, if $p$ $k$-captures $P$, we obtain that $t$ is reachable from $s$ by $p$.  \fi

\begin{lemma}\label{lem:representativeTransitivity}
	Let $(G,C,\chi,s,t,k)$ be a \YES-instance, $P$ a nice $k$-valid path minimizing $\vec{\ell}(P)$, and $\PPP'$ and $\PPP$ two families of $s$-opening subsets of $C$ of size $i\le k$. If
		$|\chi(\pref{P}{v_i})|=i$, 
		$\PPP'$ $k$-represents $\PPP$ w.r.t. $v_i=v_i(P)$, and 
		there is $p\in \PPP$ such that $p$ $i$-captures $P$.
	Then there is $p'\in \PPP'$ 
	such that $p'$ $i$-captures $P$.
\end{lemma}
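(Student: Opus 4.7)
The plan is to instantiate the $k$-representation property of $\PPP'$ at the set $p$ and the $v_i$-$t$ walk $Q := \suff{P}{v_i}$, pull out a witness $p' \in \PPP'$, and then verify each of the three clauses of ``$p'$ $i$-captures $P$''. Two clauses are essentially bookkeeping: $|p'| = i$ since $\PPP'$ consists of size-$i$ subsets, and $p' \supseteq \chi(\pref{P}{v_i}) \cap \chi(\suff{P}{v_i})$ follows by chaining the $i$-capture inclusion $\chi(\pref{P}{v_i}) \cap \chi(Q) \subseteq p$ with the representation inclusion $p \cap \chi(Q) \subseteq p' \cap \chi(Q)$. The two things that genuinely need proof are (a) that $p$ \emph{minimally} completes $Q$, so that the $k$-representation hypothesis can be invoked at all, and (b) that $v_i$ is the unique vertex of $Q$ reachable from $s$ via $p'$, which is the missing half of ``$p'$ completes $Q$''.

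Both (a) and (b) will be dispatched by the same shortcut-and-replace template that contradicts the minimality of $\vec\ell(P)$. For (a), suppose an $s$-$v_i$ path $P^\star$ has $\chi(P^\star) \subsetneq p$; let $y$ be the latest vertex of $\suff{P}{v_i}$ lying on $P^\star$ (possibly $y = v_i$) and form $\tilde P := \pref{P^\star}{y} \circ \suff{P}{y}$, which is a simple $k$-valid $s$-$t$ path by the choice of $y$ and because $\chi(\tilde P) \subseteq p \cup \chi(Q)$ with $p$ completing $Q$. For (b), suppose instead that an $s$-$u$ path $P^\star$ has $u \ne v_i$ on $Q$ and $\chi(P^\star) \subseteq p'$; let $y$ be the latest vertex of $\suff{P}{u}$ lying on $P^\star$ (so $y$ is strictly after $v_i$ on $P$) and again form $\tilde P := \pref{P^\star}{y} \circ \suff{P}{y}$. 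In each case I apply Lemma~\ref{lem:replacingPath} at $w = y$ with $\pre' = \pref{P^\star}{y}$, obtaining $\vec\ell(\tilde P) < \vec\ell(P)$; Lemma~\ref{lem:minCharVector} then yields a nice $k$-valid path whose characteristic vector is still below $\vec\ell(P)$, contradicting the minimality of $P$.

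The substantive calculation is the verification of Lemma~\ref{lem:replacingPath}(i) in both applications. Set $X := \chi(\pref{P}{v_i})$ (of size $i$), let $Y$ denote the colour set of the sub-path of $P$ from $v_i$ to $y$, and set $Z := \chi(\suff{P}{y})$, so that $\chi(\pref{P}{y}) = X \cup Y$ and $\chi(\suff{P}{v_i}) = Y \cup Z$. The inclusion $X \cap (Y \cup Z) \subseteq p$ given by $p$ $i$-capturing $P$ (upgraded to $\subseteq p'$ in case (b) via the representation) forces $X \cap Z \subseteq p$ (resp.\ $\subseteq p'$); hence $\chi(\pre') \cup (X \cap Z)$ sits inside an $i$-colour set, and the term $\chi(\pref{P}{y}) \cap \chi(\suff{P}{y}) = (X \cap Z) \cup (Y \cap Z)$ contributes only further colours in $(Y \setminus X) \cap Z \subseteq Y \setminus X$, of size $|\chi(\pref{P}{y})| - i$. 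This meets the budget $|\chi(\pref{P}{y})|$ demanded by Lemma~\ref{lem:replacingPath}(i); condition (ii), $|\chi(\pre')| < |\chi(\pref{P}{y})|$, is immediate from $|\chi(\pre')| \le i$, with strictness coming from $\chi(P^\star) \subsetneq p$ in (a) or from $y$ being strictly after $v_i$ otherwise. The main obstacle is precisely this set-theoretic bookkeeping: it is where the clause $p \supseteq \chi(\pref{P}{v_i}) \cap \chi(\suff{P}{v_i})$ baked into the $i$-capture definition earns its keep, by letting us absorb every re-used colour into the cheaper prefix and thus charge only the genuinely new colours along $P_{v_i,y}$ against the budget.
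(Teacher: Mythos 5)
Your proof is correct and follows essentially the same approach as the paper: establish that $p$ minimally completes $\suff{P}{v_i}$, invoke $k$-representation to obtain $p'$, verify the two easy capture clauses by bookkeeping, and rule out any second vertex of $\suff{P}{v_i}$ reachable by $p'$ via the shortcut-and-replace argument powered by Lemma~\ref{lem:replacingPath} and the minimality of $\vec\ell(P)$. Your explicit choice of the gluing vertex $y$ (latest vertex of the relevant suffix on $P^\star$) to keep $\pref{P^\star}{y}\circ\suff{P}{y}$ simple is slightly more careful than the paper's phrasing, which takes $w$ to be the last vertex of $\suff{P}{v_i}$ reachable via $p'$ and leaves simplicity implicit, but the underlying calculation and use of the capture containment $p\supseteq\chi(\pref{P}{v_i})\cap\chi(\suff{P}{v_i})$ is the same.
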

\begin{proof}
	Since $|p|=|\pref{P}{v_i}|=i$ and $p$ completes $\suf{P}{v_i}$, it follows from the choice of $P$ and Lemma~\ref{lem:replacingPath} that $p$ minimally completes $P$. 
	Because, $\PPP'$ $k$-represents $\PPP$ w.r.t. $v_i$, it follows that there exists $p'\in \PPP'$ such that $|p'\cup \chi (\suf{P}{v_i})|$, there is a $s$-$v_i$ path $P'$ with $\chi(P')=p'$ and \iflong 
	\[p'\cap \chi(\suff{P}{v_i})\supseteq p\cap \chi(\suff{P}{v_i})\supseteq \chi(\pref{P}{v_i})\cap \chi(\suff{P}{v_i}).\] \fi
	\ifshort 
	$p'\cap \chi(\suff{P}{v_i})\supseteq p\cap \chi(\suff{P}{v_i})\supseteq \chi(\pref{P}{v_i})\cap \chi(\suff{P}{v_i}).$ 
	\fi  
	Where the second containment follows, because $p$ $i$-captures $P$. Therefore $p'$ contains $\chi(\pref{P}{v_i})\cap \chi(\suff{P}{v_i})$. 
	To finish the proof it only remains to show that no vertex on $\suff{P}{v_i}$ other than $v_i$ is reachable from $s$ by $p'$. 
	Assume otherwise and let $w\in V(\suff{P}{v_i})\setminus \{v_i\}$ be the last vertex that is reachable by $p'$. \ifshort We show that $|p'\cup (\chi(\pref{P}{w})\cap\chi(\suff{P}{w}))|\le |\chi(\pref{P}{w})|$ $(\proofMark)$. Since clearly $|p'|=i<|\chi(\pref{P}{w})|$, the lemma then follows by applying Lemma~\ref{lem:replacingPath} and from the choice of $P$.  \fi
	\iflong
	Since $|p'|=i$, it is easy to see that 	
	\begin{gather*} 
	|p'\cup (\chi(\pref{P}{w})\cap\chi(\suff{P}{w}))| = i + |(\chi(\pref{P}{w})\cap\chi(\suff{P}{w}))\setminus p'|.
	\end{gather*} 
	As $p'\cap \chi(\suff{P}{v_i})\supseteq \chi(\pref{P}{v_i}\cap \suff{P}{v_i})$, it holds that everything in $\chi(\pref{P}{v_i}\cap \suff{P}{w})$ is also in $p'$ and it follows that 

\begin{align*}
	|(\chi(\pref{P}{w})\cap\chi(\suff{P}{w}))\setminus p'|&\le |(\chi(\pref{P}{w})\setminus \chi(\pref{P}{v_i}))\cap\chi(\suff{P}{w})|\\
	&\le |\chi(\pref{P}{w})\setminus \chi(\pref{P}{v_i})|\\
	&\le |\chi(\pref{P}{w})|-i
\end{align*}
			
	
	Moreover, $v_i$ is the last vertex on $P$ such that $\pref{P}{v_i}$ uses at most $i$ colors. Hence $|p'|<|\chi(\pref{P}{w})|$ and the lemma follows by applying Lemma~\ref{lem:replacingPath} and from the choice of $P$. 
	\fi
%
	%
\end{proof}

\begin{lemma}\label{lem:preserveSolution}
	Let $(G,C,\chi,s,t,k)$ be a \YES-instance, $P$ a nice $k$-valid $s$-$t$ path minimizing the vector $\vec{\ell}(P)$.
	Moreover, let $\PPP_0=\emptyset$ and $\PPP_1,\ldots, \PPP_k$ the color sets created in the step on line~\ref{step:21} of Algorithm~\ref{alg:main}. 
	Then for all $i\in [0,k]$ such that $|\chi(\pref{P}{v_i})|=i$,
	there is $p_i\in \PPP_i$ 
	such that $p_i$ $i$-captures $P$.
\end{lemma}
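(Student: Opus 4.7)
The plan is to proceed by induction on those indices $i \in [0,k]$ at which $a_i := |\chi(\pref{P}{v_i})| = i$ (nothing is claimed at indices with $a_i < i$). For the base case $i=0$ one takes $p_0 = \emptyset \in \PPP_0$ and checks the clauses of $0$-captures directly: the size and containment conditions are immediate; $\pref{P}{v_0}$ witnesses the required $s$-$v_0$ path with $\chi = \emptyset$; the bound $|\emptyset \cup \chi(\suff{P}{v_0})|$ is just $|\chi(P)| \le k$; and the uniqueness clause follows from niceness of $P$, since every vertex strictly after $v_0$ has positive prefix-color count and hence cannot be reached by a proper subset of its prefix colors, in particular not by $\emptyset$.

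For the inductive step, assume $a_i = i$ and set $j^* := \max\{j < i : a_j = j\}$. Because $i$ lies in the image of the non-decreasing function $f(w) := |\chi(\pref{P}{w})|$ while no value strictly between $j^*$ and $i$ does, the first vertex $u$ on $P$ after $v_{j^*}$ must satisfy $|\chi(\pref{P}{u})| = i$. By the inductive hypothesis there is $p^* \in \PPP_{j^*}$ that $j^*$-captures $P$, witnessed by some $s$-$v_{j^*}$ path $P^*$ with $\chi(P^*) = p^*$. The candidate is $q := p^* \cup \chi(u)$, which the algorithm inserts in $\hat\PPP_i$ in its extension step with $v = u$ and $p = p^*$ provided $|q| = i$.

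The main technical step is to prove $|q| = i$. Since $\chi(u) \subseteq \chi(\suff{P}{v_{j^*}})$, every color in $\chi(u) \cap \chi(\pref{P}{v_{j^*}})$ lies in $\chi(\pref{P}{v_{j^*}}) \cap \chi(\suff{P}{v_{j^*}}) \subseteq p^*$, and a short count yields $|q| = i - |\chi(u) \cap (p^* \setminus \chi(\pref{P}{v_{j^*}}))|$. Suppose for contradiction that this intersection is nonempty. Then $u \notin V(P^*)$, for otherwise $u \in \suff{P}{v_{j^*}} \setminus \{v_{j^*}\}$ would be reachable from $s$ by $p^*$, contradicting the $j^*$-capture clause. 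Hence $\pre' := P^* \circ (v_{j^*}, u)$ is an $s$-$u$ path with $\chi(\pre') = q$ and $|\chi(\pre')| < i = |\chi(\pref{P}{u})|$. A quick decomposition of $\chi(\pref{P}{u}) \cap \chi(\suff{P}{u})$ according to $\chi(\pref{P}{u}) = \chi(\pref{P}{v_{j^*}}) \cup \chi(u)$ shows that it is contained in $q = \chi(\pre')$, so both hypotheses of Lemma~\ref{lem:replacingPath} hold at $w = u$, and the lemma produces a path $\pre' \circ \suff{P}{u}$ with $\vec\ell(\pre' \circ \suff{P}{u}) < \vec\ell(P)$, contradicting the minimality of $\vec\ell(P)$.

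Once $|q| = i$ is established, the remaining clauses of ``$q$ $i$-captures $P$'' are verified routinely: the $s$-$v_i$ path with color set exactly $q$ is the concatenation of $\pre'$ with the sub-path $S$ of $P$ from $u$ to $v_i$, which is a genuine path by the same $j^*$-capture clause and which introduces no new colors because $\chi(u_j) \subseteq \chi(\pref{P}{u})$ for $u_j \in V(S)$, with $\chi(u_j) \cap \chi(\pref{P}{v_{j^*}}) \subseteq p^*$ (as $u_j \in \suff{P}{v_{j^*}}$); the bound $|q \cup \chi(\suff{P}{v_i})| \le k$ inherits from $|p^* \cup \chi(\suff{P}{v_{j^*}})| \le k$; the containment $q \supseteq \chi(\pref{P}{v_i}) \cap \chi(\suff{P}{v_i})$ follows from the same decomposition; and uniqueness of $v_i$ as a reachable vertex uses niceness together with $|q| = i$ being strictly smaller than the prefix-color count of any vertex after $v_i$. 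Since any $i$-capturing set is automatically $s$-opening (by niceness, no proper subset reaches $v_i$), iterating Lemma~\ref{lem:representativeTransitivity} through the pruning steps from $\hat\PPP_i$ to $\PPP^{v_i}_i$ preserves the existence of an $i$-capturing element, yielding the required $p_i \in \PPP^{v_i}_i \subseteq \PPP_i$. The main obstacle is the $|q| = i$ argument, whose success hinges on Lemma~\ref{lem:replacingPath} and the minimality of $\vec\ell(P)$.
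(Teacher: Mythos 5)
Your argument follows essentially the same route as the paper's proof: induction on the indices $i$ at which the prefix-color count equals $i$, selecting $j^* = \max\{j < i : a_j = j\}$ (which coincides with the paper's $j$ defined via $v_{j-1} \neq v_{i-1}$ but $v_j = v_{i-1}$), forming the candidate $q = p^* \cup \chi(u)$ where $u$ is the vertex just after $v_{j^*}$, and then invoking Lemma~\ref{lem:representativeTransitivity} through the pruning loop. The key cardinality claim $|q| = i$ is established by the same mechanism — a path-replacement argument contradicting the minimality of $\vec{\ell}(P)$ via Lemma~\ref{lem:replacingPath} — though you apply the lemma at $w = u$ while the paper applies it at $w = v_i$; both are correct and the difference is cosmetic.

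Two points worth flagging. First, your explicit construction of the $s$-$v_i$ path $\pre' \circ S$ with $\chi = q$ is a useful clarification of what the paper leaves implicit (the paper derives the ``completes'' clause indirectly from reachability plus nonexistence of reaching subsets). You correctly check that $\pre'$ and $S$ are internally disjoint via the $j^*$-capture clause, and that $\chi(S) \subseteq q$. Second, the uniqueness clause is dispatched too quickly: you write that it ``uses niceness together with $|q| = i$,'' but niceness alone does not suffice here, since a hypothetical $s$-$w$ path $P'$ with $\chi(P') \subseteq q$ need not satisfy $\chi(P') \subseteq \chi(\pref{P}{w})$, so niceness cannot be applied directly. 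The paper handles this by another application of Lemma~\ref{lem:replacingPath} at $w$, after verifying that $\chi(\pref{P}{w}) \cap \chi(\suff{P}{w}) \subseteq q \cup \chi(\pref{\suff{P}{u}}{w})$ and comparing cardinalities. Your proof needs this argument spelled out rather than an appeal to niceness; the core idea (minimality of $\vec{\ell}(P)$ through Lemma~\ref{lem:replacingPath}) is the same one you already used, so this is a presentational gap rather than a conceptual one.
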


\begin{proof}
	We will prove the lemma by induction. 
	Since $\PPP_0$ contains $\emptyset$ and $\chi(s)=\emptyset$, it is easy to see that the lemma is true for $i=0$ and that $\chi(\pref{P}{v_0})=0$. Let us assume that the lemma is true for all $j<i$. If $v_i=v_{i-1}$,\footnote{Throughout the proof, to improve readability we write $v_i$ instead of $v_i(P)$.} then the statement is true for $i$, because $|\chi(\pref{P}{v_i})|\le i-1$. Hence, we assume for the rest of the proof that $v_i\neq v_{i-1}$. 
	Let $j\in [0,i-1]$ be such that $v_{j-1}\neq v_{i-1}$ but $v_{j}= v_{i-1}$ and let $u$ be the vertex on $P$ just after $v_j$. It follows from definition of $v_{j-1}$, $v_j$, and $v_{i-1}$ that $|\chi(\pref{P}{v_j})|=j$ and $|\chi(\pref{P}{u})|=i$.
	By the induction hypothesis there is $p_j\in \PPP_j$ such that $p_j$ $i$-captures $P$. In particular
	$v_j$ is the last vertex on $\suff{P}{v_j}$ reachable from $s$ by $p_j$ and 
	$p_j\supseteq \chi(\pref{P}{v_j})\cap \chi(\suff{P}{v_j}).$
\iflong 
\begin{claim}
\fi
\ifshort 
\begin{claim}[$\proofMark$]
\fi

	$|p_j\cup \chi(u)|=i$ and 
	$p_j\cup \chi(u)$ minimally completes $\suff{P}{v_i}$.
\end{claim}
\iflong
	\begin{cProof}
		First, as $p_j$ completes $\suff{P}{v_j}$, it follows that $|p\cup\chi(u)\cup \chi(\suff{P}{v_i})|\le |p\cup \suf{P}{v_j}|\le k$. 
		
		Second, since $|\chi(\pref{P}{u})|=i=|\chi(\pref{P}{v_i})|$, it follows that $v_i$ is reachable by $\chi(\pref{P}{v_j})\cup \chi (u)$. Moreover, any color $c\in C$ on a vertex on $\suff{P}{v_j}$ between $v_j$ and $v_i$ is either already in $\chi(u)$ or is in $\chi(\pref{P}{v_j})\cap \chi (\suff{P}{v_j})$. Since $v_j$ is reachable by $p_j$ and $p_j\supseteq \chi(\pref{P}{v_j})\cap \chi (\suff{P}{v_j})$,
		 $v_i$ is reachable by $p_j\cup \chi(u)$ from $s$. 
		 
		 Moreover, $|p_j|=|\chi(\pref{P}{v_j})|=j$ and because $p_j\supseteq \chi(\pref{P}{v_j})\cap \chi(\suff{P}{v_j})$ it is not difficult to see that $|p_j\cup \chi(u)|\le |\chi(\pref{P}{v_j})\cup \chi(u)| = i$. 
		If $v_i$ is reachable from $s$ by a 
		a subset (not necessarily proper) $q$ of $p_j\cup \chi(u)$ of size at most $i-1$, then if we replace the prefix $\pref{P}{v_i}$
		by an $s$-$v_i$ path using only colors in $q$, we get, by Lemma~\ref{lem:replacingPath}, a $k$-valid $s$-$t$ path $P'$ with $\ell(P')<\ell(P)$, which is not possible by the choice of $P$ and Lemma~\ref{lem:minCharVector}. Hence $|p_j\cup \chi(u)|=i$.

		Finally, it remains to show that $v_i$ is the only vertex on $\suff{P}{v_i}$ reachable by $p_j\cup \chi(u)$.
		We prove it by contradiction. Let $w\in V(\suff{P}{v_i})\setminus \{v_i\}$ be the last vertex on $P$ that is reachable by $p_j\cup \chi(u)$. Since $p_j\supseteq \chi(\pref{P}{v_j})\cap \chi(\suff{P}{v_j})$, it follows that $(p_j\cup \chi(u)\cup \chi(\pref{\suff{P}{u}}{w}))\supseteq \chi(\pref{P}{w})\cap \chi(\suff{P}{w})$. Moreover, $|\chi(\pref{P}{w})|\ge i+1$ and $|\chi(p\cup\chi(u)|=i$ by the previous claim. Therefore the claim follows by Lemma~\ref{lem:replacingPath}.
	\end{cProof}
\fi
	From the above claim, it follows that $\hat{\PPP}_i$ contains a color set $\hat{p}=p_j\cup\chi(u)$ such that $|\hat{p}|=i$ minimally completes $\suff{P}{v_i}$. 
	Moreover, $\hat{p}\supseteq \chi(\pref{P}{v_i})\cap \chi(\suff{P}{v_i})$ and $\hat{p}$ $i$-captures $P$.
	The rest of the proof follows by applying Lemma~\ref{lem:representativeTransitivity} in every loop between the steps on lines~\ref{step:whileLoopBegin}~and~\ref{step:whileLoopEnd} for $v=v_i$.
\end{proof}

\iflong
Now we are ready to prove the main result of the paper. 
\fi
\ifshort
Now, if the nice $k$-valid $s$-$t$ path $P$ minimizing the vector $\vec{\ell}(P)$ contains $i\le k$ colors, then $v_i(P)$ is a singleton path $(t)$. Since by Lemma~\ref{lem:preserveSolution} there is $p\in \PPP_i$ that $i$-captures $P$, it means that $t$ is reachable from $s$ by $p$ and Algorithm~\ref{alg:main} outputs a $s$-$t$ path using only the colors in $p$. Moreover, whenever it outputs a path it check whether it is $k$-valid. Therefore after analyzing the running time of  Algorithm~\ref{alg:main} we obtain the following theorem.
\begin{theorem}[$\proofMark$]\label{thm:main}
\fi
\iflong
\begin{theorem}\label{thm:main}
	\fi
	
	There is an algorithm that given an instance $(G,C,\chi,s,t,k)$ of \cmor\ either outputs $k$-valid $s$-$t$ path or decides that no such path exists, in time $\Oh(k^{\Oh(k^3)}\cdot |V(G)|^{\Oh(1)})$.
\end{theorem}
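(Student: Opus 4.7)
The plan is to prove Theorem~\ref{thm:main} by separately establishing correctness and the running time bound. Correctness splits into two directions. Soundness is immediate: whenever a path is output on line~\ref{step:8}, line~\ref{step:7} has explicitly verified that it is a $k$-valid $s$-$t$ path. What remains is completeness: if some $k$-valid $s$-$t$ path exists, the algorithm outputs one rather than falling through to line~\ref{step:23}.

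For completeness, suppose a $k$-valid $s$-$t$ path exists. By Lemma~\ref{lem:minCharVector} pick a nice $k$-valid $s$-$t$ path $P$ minimizing $\vec{\ell}(P)$, and let $i^\star = |\chi(P)| \le k$. Since $P$ uses exactly $i^\star$ colors we have $v_{i^\star}(P) = t$ and $\suff{P}{v_{i^\star}} = (t)$. I walk through the outer iterations: if the algorithm terminates at some iteration $j < i^\star$, we are done by soundness. Otherwise, by applying Lemma~\ref{lem:preserveSolution} along the induction up to $j = i^\star - 1$, there exist $p_j \in \PPP_j$ and a vertex $u$ (the one immediately after $v_j(P)$ on $P$, as in the proof of that lemma) such that the candidate set $\hat{p} := p_j \cup \chi(u)$ is added to $\hat{\PPP}_{i^\star}$ on line~\ref{step:10} and $i^\star$-captures $P$. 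Unpacking the definition, $\hat{p}$ completes the one-vertex walk $(t)$, which means there is an $s$-$t$ path whose color set equals $\hat{p}$, and $|\hat{p}| = i^\star \le k$. Thus when iteration $i = i^\star$ reaches the pair $v = u$, $p = p_j$ on lines~\ref{step:4}--\ref{step:5}, the test on line~\ref{step:7} succeeds and the algorithm outputs a $k$-valid $s$-$t$ path.

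For the running time, the while-loop on lines~\ref{step:whileLoopBegin}--\ref{step:whileLoopEnd} enforces $|\PPP^v_i| \le f(k) = k^{\Oh(k^3)}$ for every $v \in V(G)$, so $|\PPP_i| \le |V(G)| \cdot f(k)$. Building $\hat{\PPP}_i$ iterates over at most $|V(G)| \cdot \sum_{j<i}|\PPP_j| \le k \cdot |V(G)|^2 \cdot f(k)$ pairs, each doing polynomial work: the condition on line~\ref{step:7} is just an $s$-$t$ reachability check in $G$ after deleting vertices whose color sets contain a color outside $\chi(v)\cup p$. The pruning step invokes Lemma~\ref{lem:costructingFamily} at most $|\hat{\PPP}_i|$ times per vertex, each call polynomial in $|\PPP^v_i| + |V(G)|$. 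Summing over the $k$ outer iterations gives a total running time of $k^{\Oh(k^3)} \cdot |V(G)|^{\Oh(1)}$.

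The main subtlety to watch out for is a mismatch in levels: Lemma~\ref{lem:preserveSolution} guarantees a witness $p \in \PPP_{i^\star}$ \emph{after} pruning, whereas the algorithm's path-check fires during the construction of $\hat{\PPP}_{i^\star}$ \emph{before} pruning. Resolving this requires opening the proof of Lemma~\ref{lem:preserveSolution} to extract the unpruned witness $\hat{p} = p_j \cup \chi(u) \in \hat{\PPP}_{i^\star}$, as sketched above, and observing that the test on line~\ref{step:7} is triggered at precisely the moment this $\hat{p}$ is formed, well before any pruning at level $i^\star$ happens. Once this is articulated cleanly, Theorem~\ref{thm:main} follows.
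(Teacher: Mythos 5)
Your proof is correct and follows the same overall structure as the paper's: soundness is immediate from the explicit check on line~\ref{step:7}, completeness comes from Lemma~\ref{lem:preserveSolution} applied to a nice path minimizing $\vec{\ell}(P)$ with $v_{i^\star}(P)=t$, and the running time bound follows from the pruned family sizes $|\PPP^v_i|\le f(k)$. The ``pruned versus unpruned'' subtlety you flag is real, but the paper closes it more tersely by noting $\PPP_{i}\subseteq\hat{\PPP}_{i}$ and arguing implicitly by contradiction (if the algorithm never outputs, $p_{i^\star}\in\PPP_{i^\star}\subseteq\hat{\PPP}_{i^\star}$ would have been added on line~\ref{step:10}, yet the line~\ref{step:7} check for it must have succeeded); your alternative of extracting the unpruned witness $\hat p=p_j\cup\chi(u)$ directly from the proof of Lemma~\ref{lem:preserveSolution} is equally valid, just a different way to bridge the same gap.
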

\iflong 
\begin{proof}
	Given an instance $(G,C,\chi,s,t,k)$ we simply run Algorithm~\ref{alg:main} and return its output. 
	\begin{claim}
		Algorithm~\ref{alg:main} runs in time $\Oh(k^{\Oh(k^3)}\cdot |V(G)|^{\Oh(1)})$.
	\end{claim}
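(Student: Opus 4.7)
The plan is to split the argument into two pieces: correctness and running time. For correctness, I would assume we are given a \YES-instance and let $P$ be a nice $k$-valid $s$-$t$ path minimizing $\vec{\ell}(P)$ (which exists by Lemma~\ref{lem:minCharVector}). Let $j=|\chi(P)|\le k$. Since $P$ uses exactly $j$ colors, the vertex $v_j(P)$ must equal $t$ and $|\chi(\pref{P}{v_j})|=j$. By Lemma~\ref{lem:preserveSolution} there is some $p\in\PPP_j$ that $j$-captures $P$, so $|p|=j$ and $p$ completes the trivial suffix $(t)$; in particular $t$ is reachable from $s$ by a path whose color set is contained in $p$, and $|p|\le k$, so such a path is $k$-valid. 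I would then observe that any set $p$ that enters $\PPP_j$ must first have been inserted into $\hat\PPP_j$ at step~\ref{step:10}, and just before that insertion the algorithm performs the reachability check at step~\ref{step:7}. That check succeeds, so the algorithm outputs a $k$-valid $s$-$t$ path before ever discarding~$p$. Conversely, step~\ref{step:7} only outputs a path after verifying it is $k$-valid, so the algorithm is sound. If no such $P$ exists, the algorithm simply exits at line~\ref{step:23} with \NO.

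For the running time I would prove by induction on $i$ that $|\PPP_i|\le n\cdot f(k)$: after the inner loop at lines~\ref{step:14}--\ref{step:20} each individual $\PPP^v_i$ has been trimmed to size at most $f(k)$, and $\PPP_i$ is the union over the $n$ choices of $v$. Consequently $\bigcup_{j<i}\PPP_j$ has size at most $k\cdot n\cdot f(k)$, so the double loop at lines~\ref{step:4}--\ref{step:13} inspects at most $n\cdot k\cdot n\cdot f(k)=k n^2 f(k)$ pairs $(v,p)$, giving $|\hat\PPP_i|\le k n^2 f(k)$. The reachability check at step~\ref{step:7} takes polynomial time: it amounts to testing connectivity in the subgraph induced by vertices whose color set is contained in $\chi(v)\cup p$. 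Each iteration of the while loop at lines~\ref{step:whileLoopBegin}--\ref{step:whileLoopEnd} runs Lemma~\ref{lem:costructingFamily} once in time polynomial in $|\PPP^v_i|+|V(G)|$, and strictly reduces $|\PPP^v_i|$, so at most $|\hat\PPP_i|$ iterations occur for each of the $n$ vertices.

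Putting these bounds together, one iteration of the outer loop costs $(k n^2 f(k))^{O(1)}=f(k)^{O(1)}\cdot n^{O(1)}$ time. Summing over the $k$ values of $i$ and plugging in $f(k)=k^{\Oh(k^3)}$ yields the advertised $\Oh(k^{\Oh(k^3)}\cdot |V(G)|^{\Oh(1)})$ bound. The main obstacle I expect is just the bookkeeping to certify that the size of $\hat\PPP_i$ never blows up, which in turn requires the inductive invariant $|\PPP_i|\le n\cdot f(k)$ to be cleanly stated; once that invariant is in place, the rest is routine polynomial-time arithmetic and an appeal to Lemma~\ref{lem:costructingFamily}.
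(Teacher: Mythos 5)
Your running-time analysis is essentially the paper's own: you establish the same invariant $|\PPP_i|\le n\cdot f(k)$, deduce the same bound $|\hat\PPP_i|\le k n^2 f(k)$, bound the number of while-loop iterations by the size of $\hat\PPP_i$ per vertex, and invoke Lemma~\ref{lem:costructingFamily} with polynomial cost per call. (The correctness discussion in your first paragraph is extraneous to this particular claim, which in the paper concerns only the running time; the paper handles correctness in a separate claim.)
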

\begin{cProof}
	Let $n=|V(G)|$.
	The algorithm loops $k$ times and in each loop it goes through all $n$ vertices in $G$ and all at most $k\cdot k^{\Oh(k^3)}\cdot n$ already computed color sets.  For each of $k\cdot k^{\Oh(k^3)}\cdot n^2$ pairs of vertex and color set it first verifies if $|\chi(v)\cup(p)|=i$, if yes it create auxiliary (non-colored) graph $G'$, induced subgraph of $G$, with precisely the vertices $w$ with $\chi(w)\subseteq \chi(v)\cup(p)$ and verify if there is an $s$-$t$ path in $G'$ in time $\Oh(n)$. If such path exists it outputs it and stops. Else it adds $\chi(v)\cup(p)$ to $\hat{\PPP}_i$. 
	It follows that $\hat{\PPP}_i\le k\cdot k^{\Oh(k^3)}\cdot n^2$. Hence, between steps~\ref{step:14}~and~\ref{step:20} Algorithm~\ref{alg:main} runs at most $k\cdot k^{\Oh(k^3)}\cdot n^3$ times the algorithm from Lemma~\ref{lem:costructingFamily}, each of these runs is done in $k^{\Oh(k^3)}\cdot n^{\Oh(1)}$ time. 
\end{cProof}
		\begin{claim}
		Algorithm~\ref{alg:main} correctly solves \cmor.
	\end{claim}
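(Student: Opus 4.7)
The plan is to prove the two standard directions: soundness (every output path is a genuine $k$-valid $s$-$t$ path) and completeness (if a $k$-valid $s$-$t$ path exists then the algorithm cannot reach the \NO{} on line~\ref{step:23}). Soundness is immediate by inspection, since the only place a path is emitted is line~\ref{step:8}, which is guarded by the explicit test on line~\ref{step:7} that a $k$-valid $s$-$t$ path using only the candidate colours exists. I would dispatch this direction in one sentence.

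For completeness I would argue the contrapositive. Assume the instance is a \YES-instance. By the Observation preceding Definition~\ref{def:k-representation_wrt_v} there is a nice $k$-valid $s$-$t$ path, and by Lemma~\ref{lem:minCharVector} there is a nice $k$-valid $s$-$t$ path $P$ that additionally minimises the characteristic vector $\vec{\ell}(P)$. Set $\ell:=|\chi(P)|\le k$. Because $P$ uses exactly $\ell$ colours, the last vertex $v_\ell(P)$ of $P$ with $|\chi(\pref{P}{v_\ell(P)})|\le \ell$ is $t$ itself, so $\suff{P}{v_\ell(P)}$ is the trivial walk $(t)$ and $|\chi(\pref{P}{v_\ell(P)})|=\ell$.

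Next I would invoke Lemma~\ref{lem:preserveSolution} with $i=\ell$ to obtain a set $p_\ell\in \PPP_\ell$ that $\ell$-captures $P$. Unpacking the definition of $\ell$-capturing together with Definition~\ref{def:k-representation_wrt_v}, the fact that $p_\ell$ completes the trivial walk $(t)$ yields an $s$-$t$ path $P'$ with $\chi(P')=p_\ell$ and $|p_\ell|=\ell\le k$, i.e., a genuine $k$-valid $s$-$t$ path. To finish, I would track \emph{when} $p_\ell$ entered $\PPP_\ell$: it must have been placed into $\hat{\PPP}_\ell$ on line~\ref{step:10} as $p_\ell=\chi(v)\cup q$ for some vertex $v\in V(G)$ and some $q\in\bigcup_{j<\ell}\PPP_j$ with $|\chi(v)\cup q|=\ell$. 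At that precise moment the check on line~\ref{step:7} would detect the $k$-valid $s$-$t$ path $P'$ whose colours are contained in $p_\ell$, and line~\ref{step:8} would output it, so the algorithm halts before ever reaching line~\ref{step:23}.

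The only delicate point, and the main obstacle to watch out for, is ensuring that the set fed into the line-\ref{step:7} check is \emph{exactly} the candidate $p_\ell$ and not merely a superset; this is guaranteed by the equality $|\chi(v)\cup q|=i$ enforced on line~\ref{step:6}. Once this is observed, the argument is a straightforward chain of invocations of Lemmas~\ref{lem:minCharVector} and~\ref{lem:preserveSolution} with the definitions of $k$-representation and $i$-capturing; the heavy lifting has already been done in proving those two lemmas.
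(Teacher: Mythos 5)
Your proof is correct and follows essentially the same approach as the paper: split into soundness (trivial by the check guarding line~\ref{step:8}) and completeness via a minimizing nice path $P$, then apply Lemma~\ref{lem:preserveSolution} at $i=|\chi(P)|$ to obtain $p_i\in\PPP_i\subseteq\hat{\PPP}_i$, from which the line-\ref{step:7} check must have already fired. The one remark in your proposal flagged as ``delicate'' (that line~\ref{step:7} is given exactly the candidate set) is a non-issue since lines~\ref{step:7} and~\ref{step:10} use the very same expression $\chi(v)\cup p$, but it does no harm.
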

	\begin{cProof}
		Clearly, Algorithm~\ref{alg:main} outputs a path only in step~\ref{step:8} and before it outputs a path it checks whether it is a $k$-valid $s$-$t$ path. Now assume that $(G,C,\chi,s,t,k)$ is a \YES-instance and let $P$ be a nice $k$-valid $s$-$t$ path minimizing the characteristic vector $\vec{\ell}(P)$. Let $i=|\chi(P)|$. Note that $v_i(P)=t$ and $\suff{P}{t}$ is one-vertex path. By Lemma~\ref{lem:preserveSolution} there is $p_i\in \PPP_i$ such that $p_i$ $i$-captures $P$. Therefore,
		$t$ is reachable from $s$ by $p_i$ and hence there is a $k$-valid $s$-$t$ path $P'$ with $\chi(P')\subseteq p_i$. Moreover, as $\PPP_i\subseteq \hat{\PPP}_i$ it follows that $p_i\in \hat{\PPP}_i$ and it would be added to $\hat{\PPP}_i$ in the step on line~\ref{step:10} of Algorithm~\ref{alg:main}. But in the step on line~\ref{step:7} Algorithm~\ref{alg:main} verified whether there is a $k$-valid path $P'$ with $\chi(P')\subseteq p_i$ and then outputted one such path and terminated.
	\end{cProof}
\end{proof}
\fi 
Note that by the reduction from \gcmor\ to \cmor\ discussed in the introduction, Theorem~\ref{thm:main} implies also an algorithm for \gcmor\ with the asymptotically same running time and hence Theorem~\ref{thm:mainGeom}.

\subsection{Proof of Lemma~\ref{lem:costructingFamily}}

\begin{observation}\label{obs:easyCasekRep}
	Let $\PPP$ be a family of $s$-opening subsets of $C$ of size $\ell\le k$, $v\in V(G)$, and $p\in \PPP$. If there is an $s$-$v$ path $P$ with $\chi(P)\subsetneq p$, then $\PPP\setminus \{p\}$ $k$-represents $\PPP$. 
\end{observation}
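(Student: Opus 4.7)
The plan is simply to unfold Definition~\ref{def:k-representation_wrt_v} and verify the conclusion for the family $\PPP\setminus\{p\}$ by cases on which element of $\PPP$ is being witnessed.

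First, I would observe that under the hypothesis of the observation, $p$ itself cannot minimally complete any $v$-$t$ walk $Q$ at all. Indeed, ``$p$ minimally completes $Q$'' requires, among other things, that there be no $s$-$v$ path $P'$ with $\chi(P')\subsetneq p$; but the hypothesis hands us exactly such a $P'$. So in the universal quantification in Definition~\ref{def:k-representation_wrt_v}, the pair $(q,Q)$ with $q=p$ contributes nothing, and we only have to worry about the sets $q\in\PPP\setminus\{p\}$.

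Second, for any such $q\in\PPP\setminus\{p\}$ and any $v$-$t$ walk $Q$ that $q$ minimally completes, I would exhibit the trivial witness $q':=q$. This witness already lies in $\PPP\setminus\{p\}$ by construction, and since $q$ completes $Q$ there is an $s$-$v$ path $P'$ with $\chi(P')=q=q'$ and $|q'\cup\chi(Q)|\le k$. The containment $q'\cap\chi(Q)\supseteq q\cap\chi(Q)$ is then just equality. Hence all three requirements of Definition~\ref{def:k-representation_wrt_v} are satisfied with this choice.

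Combining the two cases yields that $\PPP\setminus\{p\}$ $k$-represents $\PPP$ w.r.t.~$v$. There is no real obstacle here: the entire argument is a direct unfolding of the definitions of ``minimally completes'' and ``$k$-represents'', with the hypothesis used solely to exclude the potentially troublesome set $p$ from the quantification.
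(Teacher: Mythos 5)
Your proof is correct and is essentially the argument the paper leaves implicit (the statement is labelled an observation with no proof given). You correctly unfold Definition~\ref{def:k-representation_wrt_v}: the given $s$-$v$ path $P$ with $\chi(P)\subsetneq p$ means $p$ cannot minimally complete any $v$-$t$ walk, so the quantifier over $(p,Q)$ pairs is vacuous for $p$ itself, and for every other $q\in\PPP\setminus\{p\}$ the trivial witness $q'=q$ satisfies all three conditions directly from the definition of ``completes.''
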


For the rest of the section we will fix $v\in V(G)$, $\ell\in [k]$, and we let $\PPP$ be a family of $s$-opening color sets of size $\ell$ such that, for every $p\in\PPP$, $v$ is reachable from $s$ by $p$ but is not reachable from $s$ by any proper subset of $p$. Our goal in the remainder of the section is to show that if $|\PPP|> f(k)$, $f(k)=k^{\Oh(k^3)}$, then we can find in \FPT-time a color set $p\in \PPP$ such that $\PPP\setminus \{p\}$ $k$-represents $\PPP$ w.r.t. $v$. We refer to such $p$ also as an \emph{irrelevant} color set.

\subsubsection{Sketch of the Proof}

The main idea is to show that if the family $\PPP$ is large, in our case of size at least $k^{\Oh(k^3)}$, then we can find a subfamily of $\PPP$ that is structured and this structure makes it easier to find an irrelevant color set that can be always represented within the structured subfamily. We can first apply sunflower lemma and restrict our search to a subfamily of size at least $k^{\Oh(k^2)}$ whose color sets pairwise intersect in the same color sets $c$, but are otherwise pairwise color-disjoint. Now we can remove colors in $c$ from the graph and apply the color contraction operation to newly created neighbors with the same color (see Subsection~\ref{subsec:finishingProof}). 

In the rest of the proof, we can restrict our search for an irrelevant color set to a family $\PPP$ whose color sets are pairwise color disjoint. Moreover, we assume the graph is irreducible w.r.t. color contraction. Now for each $p_i\in \PPP$ we compute an $s$-$v$ path $P_i$ such that $\chi(P_i)=p_i$, by Observation~\ref{obs:easyCasekRep} this is simply done by finding an $s$-$v$ path in the subgraph induced on vertices with colors in $p_i$. The goal is to further restrict the search for an irrelevant path to a set of paths $\bfP$ such that there is a small set of vertices $U$, $|U|\le 2k$, such that all the paths in $\bfP$ visit all vertices of $U$ in the same order, but every vertex in $V(G)\setminus (U\cup \{s,v\})$ appears on at most $\frac{|\bfP|}{f(k)}$ paths. This is simply done by finding a vertex that appear on the most paths in $\bfP$, including the vertex in $U$ if the vertex appears on at least $\frac{|\bfP|}{|U|!\cdot f(k)}$ paths, and restricting $\bfP$ to the paths containing the vertex. Otherwise, we stop.
We show in Lemma~\ref{lem:sizeOfU} that because each path in $\bfP$ has at most $k$ colors, we stop after including at most $2k$ vertices into $U$. To get the paths that visit $U$ in the same order, we just go through all $|U|!$ orderings of $U$ and pick the one most paths adhere to. To finish the proof, we show that thanks to the structure of paths in $\bfP$, for any two consecutive vertices in $U$, there is a large set of paths that are pairwise vertex disjoint between the two consecutive vertices of $U$~(Lemma~\ref{lem:nonIntersectingPaths}). Hence, we get into the situation similar to the one in Figure~\ref{fig:intersecting_in_U}. Any $v$-$t$ path (walk) that contains at most $k$ colors and does not contain vertices in $U$ can only interact with a few of these paths between the two consecutive vertices. Hence, because $\PPP$ was large and because of the structure of paths in $\bfP$, we find a path that cannot share a color with any $v$-$t$ walk with at most $k$ colors (Lemma~\ref{lem:main}). But the color set of such a path is then represented by any other color set in $\PPP$, as they have the same size.

\begin{figure}[ht]
	\centering
	\includegraphics[width=0.8\textwidth,page=1]{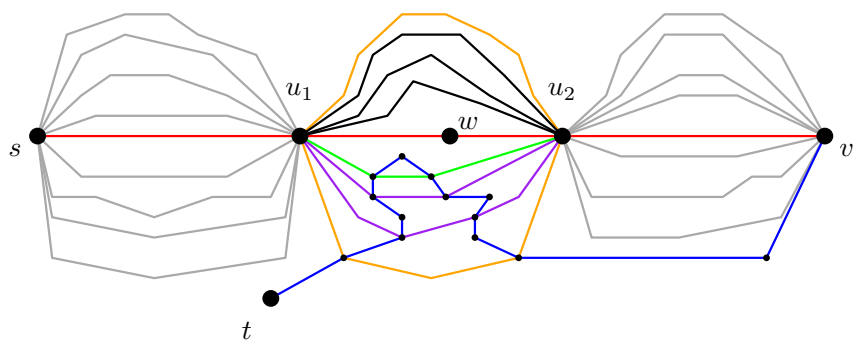}
	\caption{
		A set of pairwise color-disjoint paths that intersects exactly in $u_1$ and $u_2$ in the same order. If a path $P$ from $v$ to $t$ do not contain $s$, $u_1$, nor $u_2$ but it shares a color with some vertex $w$ on the part of the red. Then $P$ has to cross at least $4$ of the color-disjoint path and hence it has to contain at least $3$ colors. 
		For example for the blue path are vertices outside of the orange region, inside the purple region, and the region between red and green path pairwise color-disjoint. In each of these regions the blue path contains at least $2$ consecutive vertices, hence at least~one~is~not~empty.
	}\label{fig:intersecting_in_U}
\end{figure}

\subsubsection{The Color-Disjoint Case}


The goal of this subsection is to show that Lemma~\ref{lem:costructingFamily} is true for a special case when the color sets in $\PPP$ are pairwise color-disjoint and the input graph is irreducible w.r.t. color contraction. This is the most difficult and technical part of the proof. 
For the rest of the subsection we will have the following assumption:

\begin{assumption}\label{ass:color_contracted}
	For an instance $(G,C,\chi,s,t,k)$ of \cmor\ and family $\PPP$ of color sets each of size $\ell\le k$, we assume that $G$ is irreducible w.r.t. color contraction and the sets in $\PPP$ are pairwise color-disjoint.
\end{assumption}

In this subsection, it will be more convenient to work with a set of paths instead of a set of color sets. Given a set $\PPP=\{p_1,\ldots,p_{|\PPP|}\}$ of color-disjoint color sets such that $v$ is reachable by each $p\in \PPP$ from $s$ but not by any proper subset of $p$, we will construct a set of paths $\bfP =\{P_1,\ldots,P_{|\PPP|} \}$ such that $\chi(P_i)=p_i$ for all $i\in[|\PPP|]$. Note that, since $v$ is not reachable from $s$ by any proper subset of $p_i$, this can be simply done by finding a shortest $s$-$v$ path in the graph obtained from $G$ by removing all vertices containing a color not in $p_i$. 

Now we restrict our attention to a subset of paths $\bfQ$ constructed by Algorithm~\ref{alg:refiningPaths}.

\begin{algorithm}[ht]
	\SetAlgoLined
	\KwData{A set of pairwise color-disjoint paths $\bfP$ in a graph $G$}
	\KwResult{A subset $\bfQ$ of $\bfP$ and $U\subseteq V(G)$ such that $|\bfQ| > \frac{|\bfP|}{((|U|+1)!\cdot(8k^2+8k+2))^{|U|}}$, all paths in $\bfQ$ contains all the vertices in $U$, and for every vertex $w\in V(G)\setminus U$ at most $\frac{|\bfQ|}{(|U|+1)!\cdot(8k^2+8k+2)}$ paths in $\bfQ$ contains $w$. }
	$U=\emptyset$ and $\bfQ=\bfP$\label{step:refine1}\\
	let $u$ be a vertex in $V(G)\setminus U$ contained by the highest number of paths in $\bfQ$\label{step:refine2}\\
	\If{$u$ is contained in more than $\frac{|\bfQ|}{(|U|+1)!\cdot(8k^2+8k+3)}$  paths\label{alg2:3}}{
		$U=U\cup \{u\}$\\
		restrict $\bfQ$ to contain only the paths containing $u$\\
		go to the step on line~\ref{step:refine2}\\
	}
	\caption{}\label{alg:refiningPaths}
\end{algorithm}

We will start by showing that when the algorithm is finished, $|U|$ is bounded by $2k$. To show this claim we first need \iflong two topological lemmas\fi\ifshort the following topological lemma\fi.
\iflong 
\begin{lemma}[Lemma 4.8 in the full version of \cite{EibenK18}]
	\label{lem:generalpaths}
	Let $G'$ be a plane graph, and let $x, y,z \in V(G')$. Let $x_1, \ldots, x_r$, $r \geq 3$, be the neighbors of $x$ in counterclockwise order. Suppose that, for each $i \in [r]$, there exists an $x$-$y$ path $P_i$ containing $x_i$ such that $P_i$ does not contain $z$ and does not contain any $x_j$, $j \in [r]$ and $j \neq i$. Then there exist two paths $P_i, P_j$, $i, j \in [r]$ and $i \neq j$, such that the two paths $P_i, P_j$ induce a Jordan curve separating $\{x_1, \ldots, x_r\} \setminus \{x_i, x_j\}$ from $z$.
\end{lemma}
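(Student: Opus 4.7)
My approach would be a planar topology argument that exploits the counterclockwise order of the $x_k$'s at $x$.

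First, for any pair $i\neq j$ I would extract from $P_i\cup P_j$ a simple closed curve $C_{i,j}$ that leaves $x$ along the edge $xx_i$ and returns along the edge $xx_j$. Concretely, trace $P_i$ from $x$ to $y$ and then $P_j$ back to $x$; at the first repeated vertex encountered, shortcut along the already-traced arc. The resulting curve is simple, passes through $x$ exactly once (entering via $x_i$ and leaving via $x_j$), and meets $\{x_1,\ldots,x_r\}$ only in $\{x_i,x_j\}$, because by hypothesis $P_i$ (resp.\ $P_j$) avoids every other neighbor of $x$.

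Second, by the Jordan curve theorem $C_{i,j}$ splits the plane into two open regions. The counterclockwise ordering at $x$ splits the remaining neighbors $\{x_k:k\notin\{i,j\}\}$ into those sitting in the CCW arc from $x_i$ to $x_j$ and those sitting in the complementary arc; since each $P_k$ is connected and meets $C_{i,j}$ (if at all) only at $y$, every $x_k$'s global side of $C_{i,j}$ equals its local side at $x$. In particular, for cyclically consecutive indices $(i,i+1)$ one CCW arc is empty, so every remaining $x_k$ lies on the same side of $C_{i,i+1}$ (the \emph{heavy} side), while the opposite \emph{light} side contains no $x_k$ with $k\neq i,i+1$.

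Third, I would locate $z$. Since $z$ lies off every path, it is in some open region $R_z$ of $\mathbb{R}^2\setminus\bigcup_k P_k$. Following the boundary of $R_z$ around $x$, the local sector of $R_z$ at $x$ is delimited by two cyclically consecutive edges $xx_i,xx_{i+1}$, and $R_z$ sits in the light side of $C_{i,i+1}$. Hence $C_{i,i+1}$ separates $z$ from every $x_k$ with $k\notin\{i,i+1\}$, which is the conclusion of the lemma.

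The main obstacle I anticipate is the case where the boundary of $R_z$ does not touch $x$, that is, $z$ lives in some ``inner'' region bounded only by internal segments of several $P_k$'s. I would handle this either by running the same argument at $y$ (where the paths also converge and induce an analogous sector structure), or by induction on $r$: restrict to a subfamily of paths so that $z$'s region does touch $x$ in the restricted arrangement, and show that a separating pair found for the subfamily still separates all the discarded $x_k$'s from $z$ on the correct side. Verifying that this restriction preserves the ``all other $x_k$'s on one side'' property, together with the careful extraction of simple Jordan curves when paths share internal vertices, is the technical heart of the proof.
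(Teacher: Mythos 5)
Your steps 1 and 2 are essentially sound, although the justification offered in step 2 is stronger than what is actually available: a path $P_k$ with $k\notin\{i,j\}$ is allowed to share internal vertices with $P_i$ or $P_j$, so it need not meet $C_{i,j}$ only at $y$. The conclusion you want nevertheless holds for a simpler reason: the point $x_k$ is joined to $x$ by the single edge $xx_k$, which is not one of the two edges $xx_i,xx_j$ of $C_{i,j}$ and whose far endpoint $x_k$ avoids $P_i\cup P_j$ by hypothesis. Since the drawing is plane, the edge $xx_k$ can therefore meet $C_{i,j}$ only at $x$, so the side of $C_{i,j}$ on which $x_k$ lies is indeed fixed by the rotation at $x$, as you claim.

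The genuine gap is step 3, and your final paragraph names it but does not close it. Nothing in the hypotheses forces the face of $\bigcup_k P_k$ containing $z$ to be incident to $x$: if, say, all the paths pass through a common intermediate vertex $w$, then $z$ can sit in a face of the arrangement that lies entirely beyond $w$, incident to neither $x$ nor (if there is a similar bottleneck near $y$) to $y$. Both of your suggested fixes hit concrete obstructions. Running the sector argument at $y$ is not symmetric to running it at $x$, because several $P_i$ may share their final edge into $y$, so the clean one-edge-per-path rotation you exploit at $x$ is unavailable there, and in any case $z$'s face need not touch $y$ either. The induction-by-deletion idea fails because consecutiveness is not stable under deletion: a pair $(i,j)$ that becomes consecutive at $x$ after discarding some paths may have a discarded $x_k$ lying strictly inside the light arc from $x_i$ to $x_j$, and that $x_k$ then lands on the same side of $C_{i,j}$ as $z$ — precisely the failure you need to rule out for the vertices you removed. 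Locating a separating Jordan curve when $z$'s face is not incident to $x$ is the actual content of the lemma, and the sketch leaves it open.
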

\fi
\iflong 
\begin{lemma}\label{lem:repetingSameColorSet}
\fi
\ifshort 
\begin{lemma}[$\proofMark$]\label{lem:repetingSameColorSet}
	\fi
	Let $G$ a color-connected plane graph that is irreducible w.r.t. color contraction, $s,u_1,u_2, u_3,v$ be vertices in $G$ and let $\bfP = \{P_1,\ldots , P_{|\bfP|}\}$ be pairwise color-disjoint $s$-$v$ paths all going through the vertices $u_1$, $u_2$, and $u_3$ in the same order. Then there are at most two paths $P_i\in \PPP$ such that if $w_j^i$, $j\in [3]$, denotes the vertex on $P_i$ immediately after $u_j$ then $\chi(w_1^i)\cap \chi(w_3^i)\neq \emptyset$.
\end{lemma}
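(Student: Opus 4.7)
The plan is to argue by contradiction using a planarity/Jordan curve argument based on Lemma~\ref{lem:generalpaths}. Suppose three paths $P_a, P_b, P_c \in \bfP$ each satisfy $\chi(w_1^i) \cap \chi(w_3^i) \neq \emptyset$, and for each $i \in \{a,b,c\}$ pick $c_i \in \chi(w_1^i) \cap \chi(w_3^i)$. Because the paths in $\bfP$ are pairwise color-disjoint, the colors $c_a, c_b, c_c$ are pairwise distinct. Moreover, $u_1, u_2, u_3$ lie on all three color-disjoint paths, so they must be empty vertices (any color they carried would appear on two different paths), and in particular no $c_i$ belongs to $\chi(u_j)$.

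For each $i \in \{a,b,c\}$, since $c_i$ is connected in $G$, there is a simple path $\pi_i$ from $w_1^i$ to $w_3^i$ inside the subgraph induced by $\chi^{-1}(c_i)$, and $\pi_i$ avoids the empty vertices $u_1, u_2, u_3$. Concatenating the edge $u_1 w_1^i$, the path $\pi_i$, and the edge $w_3^i u_3$ yields a simple $u_1$--$u_3$ path $R_i$ passing through the neighbor $w_1^i$ of $u_1$. Two properties are crucial: $R_i$ avoids $u_2$; and $R_i$ avoids every $w_1^j$ with $j \neq i$, because such a $w_1^j$ carries the color $c_j$, and every vertex of $R_i$ outside $\{u_1,u_3\}$ has color $c_i$, so belongs to $V(P_i)$, which is color-disjoint from $V(P_j)$.

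The hypotheses of Lemma~\ref{lem:generalpaths} are therefore met with $x = u_1$, $y = u_3$, $z = u_2$, and distinguished neighbors $w_1^a, w_1^b, w_1^c$, yielding two of the paths---say $R_a$ and $R_b$---whose union is a Jordan curve $J$ that separates the remaining neighbor $w_1^c$ from $u_2$ in the plane. Now consider the subpath $P_c'$ of $P_c$ from $w_1^c$ to $u_2$. Its endpoints lie on opposite sides of $J$, so $P_c'$ must share some vertex with $R_a \cup R_b$. Every internal vertex of $R_a$ (resp.\ $R_b$) off $\{u_1, u_3\}$ has color $c_a$ (resp.\ $c_b$), and such a vertex cannot lie on $P_c$ by color-disjointness of $P_c$ with $P_a$ (resp.\ $P_b$). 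The only remaining candidates for an intersection are $u_1$ and $u_3$, but $u_1$ appears on $P_c$ strictly before $w_1^c$ and $u_3$ appears strictly after $u_2$; hence $P_c'$ cannot contain them either. This contradicts the Jordan curve theorem, so no three such paths exist.

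The main obstacle is making sure the constructed walks $R_a, R_b, R_c$ are genuine simple paths meeting the hypotheses of Lemma~\ref{lem:generalpaths}, rather than walks that might revisit vertices or intersect each other in unwanted ways. Choosing each $\pi_i$ as a simple path inside $\chi^{-1}(c_i)$ and exploiting the fact that $u_1, u_2, u_3$ are empty and that the paths $P_a, P_b, P_c$ are pairwise color-disjoint is what makes this go through cleanly; the rest of the argument is a standard planarity step once the separating Jordan curve is in hand.
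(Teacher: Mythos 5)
Your proof is correct, and it invokes the same key tool, Lemma~\ref{lem:generalpaths}, but it applies it in a genuinely different way from the paper. The paper runs the lemma with $x=u_1$, $y=u_2$, $z=u_3$, taking as the $x$--$y$ paths the \emph{restrictions of the given paths $P_i$ to their $u_1$--$u_2$ subpaths}. This produces two of those subpaths forming a Jordan curve that separates all remaining $w_1^i$'s from $u_3$, and since each such $w_3^i$ is a non-empty neighbor of $u_3$ lying off that curve, the curve also separates $w_1^i$ from $w_3^i$; color-connectedness together with color-disjointness of the curve from $P_i$ then forces $\chi(w_1^i)\cap\chi(w_3^i)=\emptyset$ for \emph{every} $P_i$ other than the two building the curve. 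Your argument is instead by contradiction: you assume three bad paths, and for each you build an \emph{auxiliary} $u_1$--$u_3$ path $R_i$ routed inside the connected color class $\chi^{-1}(c_i)$ of a shared color $c_i\in\chi(w_1^i)\cap\chi(w_3^i)$, then apply the lemma with $x=u_1$, $y=u_3$, $z=u_2$ to the $R_i$'s, and derive a contradiction from the subpath of $P_c$ from $w_1^c$ to $u_2$ being unable to meet the separating cycle. Both approaches use the same Jordan-curve-plus-color-connectedness engine, but they choose the separated pair and the paths fed into the lemma quite differently; the paper's version directly certifies which (at most two) paths can be bad, whereas yours is existential and needs the extra construction of the $R_i$'s through color classes. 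One small inaccuracy in your write-up: the claim that ``every vertex of $R_i$ outside $\{u_1,u_3\}$ has color $c_i$, so belongs to $V(P_i)$'' is not right --- a vertex of $\chi^{-1}(c_i)$ need not lie on $P_i$. The conclusion that $R_i$ avoids $w_1^j$ ($j\neq i$) still holds, but the correct reason is that $w_1^j\in V(P_j)$, so if $w_1^j$ carried $c_i$ then $c_i\in\chi(P_j)\cap\chi(P_i)$ (since $c_i\in\chi(w_1^i)\subseteq\chi(P_i)$), contradicting color-disjointness. With that fix the argument is airtight.
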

\iflong 
\begin{proof}
	Since the paths in $\bfP$ are color-disjoint, it follows that the vertices $s,u_1,u_2, u_3,v$ are empty. Moreover, $G$ is irreducible w.r.t. color contraction. Therefore, all $w_j^i$'s are not empty and $w_j^i$ and $w_{j}^{i'}$ are different vertices whenever $i\neq i'$.
	Applying Lemma~\ref{lem:generalpaths} to $G$, vertices $u_1, u_2, u_3$, and the restriction of the paths to the subpaths between $u_1$ and $u_2$. We get that there are two paths $P_{j}$,$P_{j'}$, $j,j'\in [|\bfP|]$ that induce a Jordan curve separating $w_1^i$'s, for all paths $P_i$, $i\in [|\bfP|]\setminus\{j,j'\}$, from $u_3$. But $w_3^i$ is a neighbor of $u_3$. Moreover $w_3^i$ is not empty, therefore it cannot appear on $P_j$ nor $P_{j'}$.  
	Hence, the same Jordan curve separates $w_1^i$ and $w_3^i$. Since the paths are color-disjoint, this Jordan curve does not contain any color on $P_i$. Since $G$ is color-connected, we get that $\chi(w_1^i)\cap \chi(w_3^i)=\emptyset$.
\end{proof}
\fi
Now we can show that if $|U|\ge 2k+1$, then at the point when Algorithm~\ref{alg:refiningPaths} adds $2k+1$-st element to $U$, we can find $k^2+k+1$ paths in $\bfQ$ that visit the first $2k+1$ vertices of $U$ in the same order. Lemma~\ref{lem:repetingSameColorSet} then implies that there is a path $P_i\in \bfP$ such that $\chi(w_j^i)\cap \chi(w_{j'}^i) = \emptyset$ for all $j\neq j'$, $j,j'\in \{1,3,5, \ldots, 2k+1 \}$, where $w_j^i$ denotes the vertex on $P_i$ immediately after $u_j$. Then $|\chi(P_i)|\ge k+1$ which contradicts definition of $\bfP$. 
\iflong
\begin{lemma}\label{lem:sizeOfU}
\fi
\ifshort
\begin{lemma}[$\proofMark$]\label{lem:sizeOfU}
	\fi
	If $|\bfP|\ge f(k)$, $f(k)=k^{\Oh(k^2)}$, 
	then when Algorithm~\ref{alg:refiningPaths} terminates, it holds that $|U|< 2k+1$.
\end{lemma}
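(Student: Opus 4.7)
The plan is to suppose for contradiction that Algorithm~\ref{alg:refiningPaths} grows $U$ to size $2k+1$, inspect the family $\bfQ$ at the moment the $(2k+1)$-st vertex has just been added, and then derive a contradiction with the color-disjointness and the $k$-bounded color budget of the paths. First I would track $|\bfQ|$ through the algorithm: passing from $|U|=i-1$ to $|U|=i$ divides $|\bfQ|$ by at most $i!\cdot(8k^2+8k+3)$, because of the strict inequality on line~\ref{alg2:3}. Consequently, after $2k+1$ such steps,
\[
|\bfQ| \;>\; \frac{|\bfP|}{\prod_{i=1}^{2k+1}\bigl(i!\cdot(8k^2+8k+3)\bigr)}.
\]
Since $\prod_{i=1}^{2k+1} i! \leq \bigl((2k+1)!\bigr)^{2k+1} = 2^{O(k^2\log k)} = k^{O(k^2)}$, choosing $f(k)$ of this form but large enough guarantees $|\bfQ| \geq (2k+1)!\cdot(k^2+k+1)$.

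Next, every path in $\bfQ$ meets each of the $2k+1$ vertices of $U$, so it induces one of at most $(2k+1)!$ linear orderings on $U$; pigeonhole then produces a subfamily $\bfQ'\subseteq\bfQ$ with $|\bfQ'|\geq k^2+k+1$ in which all paths visit $U$ in a common order, after relabelling $u_1,u_2,\ldots,u_{2k+1}$. For each $P_i\in\bfQ'$ and each $j\in[2k+1]$, let $w_j^i$ denote the vertex immediately after $u_j$ on $P_i$; color-disjointness forces $u_1,\ldots,u_{2k+1}$ to be empty, while irreducibility w.r.t.\ color contraction forces every $w_j^i$ to be nonempty. Now fix any two odd indices $1\leq 2j-1<2j'-1\leq 2k+1$ with $j<j'$ in $[k+1]$; the even index $2j$ satisfies $2j-1<2j<2j'-1$, so every path in $\bfQ'$ visits $u_{2j-1},u_{2j},u_{2j'-1}$ in this order, and Lemma~\ref{lem:repetingSameColorSet} applied to this triple rules out $\chi(w_{2j-1}^i)\cap\chi(w_{2j'-1}^i)\neq\emptyset$ for all but at most two indices $i$. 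There are $\binom{k+1}{2}=\tfrac{k^2+k}{2}$ such pairs, contributing at most $k^2+k$ bad paths altogether, so some $P_i\in\bfQ'$ is good for every pair. Then $\chi(w_1^i),\chi(w_3^i),\ldots,\chi(w_{2k+1}^i)$ are $k+1$ pairwise disjoint nonempty color sets all occurring on $P_i$, forcing $|\chi(P_i)|\geq k+1$ and contradicting $|\chi(P_i)|=\ell\leq k$.

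The main obstacle I anticipate is purely bookkeeping: verifying that Lemma~\ref{lem:repetingSameColorSet} applies cleanly to each triple $(u_{2j-1},u_{2j},u_{2j'-1})$ (which requires only an intermediate vertex and a common ordering, both provided by $\bfQ'$), and tracking the multiplicative factors so that the final threshold $f(k)$ is of order $k^{O(k^2)}$, as claimed by Lemma~\ref{lem:sizeOfU}. Any harmless discrepancy between the constants $8k^2+8k+3$ on line~\ref{alg2:3} and the slightly smaller one in the algorithm's output guarantee only affects the estimate by a $k^{O(k)}$ factor and therefore does not change the asymptotics.
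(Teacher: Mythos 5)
Your proposal is correct and follows essentially the same approach as the paper: argue by contradiction, lower-bound the size of the surviving family at the moment $|U|$ reaches $2k+1$ by tracking the multiplicative loss through each restriction step, use the pigeonhole principle to extract a subfamily of at least $k^2+k+1$ paths visiting $U$ in a common order, apply Lemma~\ref{lem:repetingSameColorSet} to triples of odd-indexed vertices of $U$ to conclude that some path has $k+1$ pairwise disjoint nonempty color sets among the successors of $u_1,u_3,\ldots,u_{2k+1}$, and contradict $k$-validity. The minor deviation in how you account for the constant $(8k^2+8k+3)$ versus $(8k^2+8k+2)$ is, as you observe, absorbed into the $k^{\Oh(k^2)}$ bound, and matches the paper's own accounting.
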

\iflong 
\begin{proof}
We show that the lemma holds for $f(k) = ((2k+1)!\cdot(8k^2+8k+3))^{2k+1}\cdot (k^2+k)\cdot (2k+1)!+1$, which is easily seen to be in $k^{\Oh(k^2)}$.
Assume this is not the case and $|U| \ge 2k+1$. Let $U'$ be the first $2k+1$ vertices of $U$ found by the previous algorithm and let $\bfQ'$ be the subset of the paths in $\bfP$ that contains all vertices in $U'$. Clearly, there are least $\lceil\frac{|\bfP|}{((2k+1)!\cdot(8k^2+8k+3))^{2k+1}}\rceil\ge (k^2+k)\cdot (2k+1)!+1$ paths in $\bfQ'$ and hence there is an ordering of $U'$ such that at least $k^2+k+1$ paths visit vertices of $U'$ in this order, let $\bfQ''$ be the restriction of $\bfQ'$ to these paths. 
Let $\bfQ''=\{P_1,\ldots,P_{|\bfQ''|} \}$ and for $i\in [|\bfQ''|]$ and $j\in [2k+1]$ let $w_j^i$ be the vertex immediately after $u_j$ on $P_i$. Since the path in $\bfQ''$ are color-disjoint, all the vertices in $U$ are empty. Moreover, $G$ is color contracted, hence $\chi(w_j^i)\neq \emptyset$. By Lemma~\ref{lem:repetingSameColorSet}, $\chi(w_j^i)\cap \chi(w_{j'}^i)\neq \emptyset$ for $|j-j'|\ge 2$ for at most $2$ paths. Therefore, if we have more than $2\cdot \binom{k+1}{2}=k^2+k$ paths in $\bfQ''$, then there is a path such that $\chi(w_j^i)\cap \chi(w_{j'}^i) = \emptyset$ for all $j\neq j'$, $j,j'\in \{1,3,5, \ldots, 2k+1 \}$. But $|\chi(P_i)|\ge |\chi(w_1^i)\cup \chi(w_3^i)\cup \ldots \cup \chi(w_{2k+1}^i)|$. Since the sets $\chi(w_j^i)$, $j\in \{1,3,5, \ldots, 2k+1 \}$, are pairwise color-disjoint and non-empty, we get $|\chi(P_i)|\ge k+1$. But $P_i$ is a $k$-valid path, contradiction.
\end{proof}
\fi

Now we have bounded $|U|$ and the number of paths intersecting in any vertex outside $U$. We first fix an ordering $\tau = (u_1,u_2,\ldots, u_{|U|})$ of vertices in $U$ which maximizes the number of paths in $\bfQ$ that visit $U$ in the same order as $\tau$ and let $\bfQ'$ be the restriction of $\bfQ$ to the paths that are consistent with this ordering. Clearly $|\bfQ|\le |\bfQ'|\cdot (2k)!$ and it suffice to show that we can find an irrelevant path in $\bfQ'$ if $|\bfQ'|$ is large. The agenda for the rest of the proof is as follows. Because $|U|\le 2k$ and intersection number of each vertex outside $|U|$ is small compared to the size of $\bfQ'$, only "few" paths can share a color with any $k$-valid $v$-$t$ walk that do not contain a vertex in $U$ hence we can find an irrelevant path. The color set of this irrelevant path is then the irrelevant color set in $\PPP$.

\iflong 
Let us first show the following simple setting, where the paths in $\bfQ'$ intersects pairwise precisely in the vertices of $U$. While this lemma is not necessary for our proof, it gives an intuition what kind of a structure/arguments we are looking for if the intersection outside of $U$ is small.

\begin{lemma}\label{lem:intesecting_in_U}
	Let $\bfQ'$ be a set of $k$-valid color-disjoint $s$-$v$ paths that pairwise intersects precisely in vertices $u_1, \ldots, u_r$, $r\le k$, in the same order. If $|\bfQ'|>4k\cdot (r+1)$, then we can in polynomial time find a path $P\in \bfQ'$ such that $\chi(P)\cap \chi(Q)=\emptyset$ for every $k$-valid $v$-$t$ walk $Q$ that do not contain any vertex in $U\cup\{s,v\}$ as inner vertex.  
\end{lemma}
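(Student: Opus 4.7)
The plan is to identify a path $P \in \bfQ'$ that is ``deeply nested'' in every section of the planar embedding, and then show no $k$-valid walk $Q$ of the specified type can share a color with $P$. For each $i \in [0, r]$, with the convention $u_0 := s$ and $u_{r+1} := v$, the subpaths of the $m := |\bfQ'|$ members of $\bfQ'$ between $u_i$ and $u_{i+1}$ are pairwise internally disjoint arcs sharing their two endpoints, so by the Jordan curve theorem they form a nested family and induce a linear order $\sigma_i : \bfQ' \to [m]$. Call $P' \in \bfQ'$ \emph{shallow in $S_i$} if $\sigma_i(P') \in [1, 2k] \cup [m - 2k + 1, m]$; at most $4k$ paths are shallow in each of the $r+1$ sections, so the hypothesis $|\bfQ'| > 4k(r+1)$ leaves at least one path $P$ that is \emph{deep} in every section, i.e., has at least $2k$ paths on each side of it in each $\sigma_i$. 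Computing this path is clearly polynomial.

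I would then claim $P$ is the desired irrelevant path. Suppose for contradiction there is a valid walk $Q$ with $c \in \chi(P) \cap \chi(Q)$, where $c$ appears on $P$ at some vertex $w$ in a section $S_{i^*}$; write $j := \sigma_{i^*}(P)$. The color class $H_c = \chi^{-1}(c)$ is connected, avoids every other member of $\bfQ'$ by pairwise color-disjointness, and avoids $U \cup \{s, v\}$ because each such pinch point lies on two color-disjoint paths and is therefore empty. Hence $H_c$ lives in a single component of $G - \bigcup_{a \ne j} V(P_a) - (U \cup \{s, v\})$, and because removing the pinch points separates the sections from one another, this component is confined to the strip of $S_{i^*}$ bounded by $P$'s two immediate $\sigma_{i^*}$-neighbours; in particular the witness $w' \in V(Q) \cap H_c$ lies in that strip. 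For each $a \in \{1, \ldots, 2k\}$, the closed curve $C_a$ obtained by gluing the $S_{i^*}$-subpaths of the paths at $\sigma_{i^*}$-positions $j - a$ and $j + a$ at $u_{i^*}, u_{i^*+1}$ is a Jordan curve separating $v$ from $w'$; since $Q$ may not visit $u_{i^*}$ or $u_{i^*+1}$ as inner vertices, $Q$ must contain an inner vertex of each $C_a$ lying on one of its two paths, yielding $2k$ forced crossings on $2k$ pairwise distinct paths of $\bfQ'$.

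The remaining step, which is the main technical obstacle, is to convert these $2k$ forced crossings into strictly more than $k$ distinct colors of $\chi(Q)$. Whenever the crossing vertex is non-empty it contributes a color in the corresponding $\chi(P_{\sigma_{i^*}^{-1}(j \pm a)})$, and distinct crossings contribute pairwise disjoint colors by color-disjointness of $\bfQ'$. For crossings that land on empty vertices I plan to use the irreducibility of $G$ (which forbids two adjacent empty vertices) to extract non-empty walk-vertices in the annulus between $C_a$ and $C_{a-1}$, and then apply the same color-class confinement argument to show that the colors so produced in different annuli are pairwise disjoint, both from each other and from $c$. The resulting accounting forces $|\chi(Q)| \ge k + 1$ and contradicts $k$-validity. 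The delicate bookkeeping that makes the color-disjointness across annuli rigorous---essentially a precise version of the intuition sketched beneath Figure~\ref{fig:intersecting_in_U}, that in each such annulus the walk has at least two consecutive vertices and hence at least one non-empty one---is the technically hardest step of the proof.
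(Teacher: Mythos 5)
Your approach matches the paper's almost exactly: both pass to the per-section orderings with $u_0=s$, $u_{r+1}=v$, select a path $P$ avoiding the $2k$ extreme positions in every section (feasible by the $4k(r+1)$ count), and then derive a contradiction by showing any conflicting walk $Q$ must accumulate $k+1$ pairwise-disjoint colors. The engine is the same in both: color-connectedness confines $H_c$ to $P$'s lane in a single section, $Q$ is forced to cross $2k$ nested separators, and irreducibility (no two adjacent empty vertices) extracts a non-empty vertex in each region, with color-connectedness again giving pairwise disjointness. The one place where the paper does something you leave as a sketch is the final accounting: the paper makes the color-disjointness rigorous by taking $k+1$ \emph{alternating} faces of the planar subgraph induced by the section-subpaths (each face bounded by two consecutive paths, with non-overlapping boundaries), and applies color-connectedness to those boundaries; your ``annulus'' description is the same idea but needs the analogous alternation (adjacent annuli share a boundary path, so you must skip every other one) to get the $k+1$ pairwise color-disjoint vertices. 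This is a presentational gap rather than a conceptual one, and your instinct that it is the delicate step is right.
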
 

\begin{proof}
	See also Figure~\ref{fig:intersecting_in_U}.
	Let us first restrict our attention to the restriction of the paths between two consecutive vertices in $U\cup \{s,v\}$. Let us for convenience denote $s$ by $u_0$ and $v$ by $u_{|U|+1}$ and let these two vertices be $u_i$ and $u_{i+1}$ and let us denote $P_j^i$ the restriction of $P_j$ to the subpath between $u_i$ and $u_{i+1}$. The paths between $u_i$ and $u_{i+1}$ pairwise only intersect in $u_i$ and $u_{i+1}$. Let $H$ be the plane subgraph of $G$ induced by restriction of paths in $\bfQ'$ to subpaths between $u_i$ and $u_{i+1}$. Let us assume that $P^i_1,\ldots, P^i_{|\bfQ'|}$ are ordered in counterclockwise order around $u_i$ such that $t$ is in the face of $H$ bounded by $P^i_1$ and $P^i_{|\bfQ'|}$. Now let $j\in [|\bfQ'|]$ be such that $2k+1\le j\le |\bfQ'|-2k$. The union of $P^i_{j-1}$ and $P^i_{j+1}$ forms a vertex separator between $t$ and $P^i_j$. Moreover, $G$ is color-connected and paths in $\bfQ'$ are pairwise color-disjoint. Therefore, any $v$-$t$ walk $Q$ that contains a color of $P_j^i$ has to contain a vertex $w$ inside the region bounded by $P^i_{j-1}$ and $P^i_{j+1}$. Now, let us restrict our attention to a $w$-$t$ path $Q'$ that is contained in $Q$. Since $Q$ does not contain $u_i$ nor $u_{i+1}$ as inner vertex the path $Q'$ has to either cross all paths in $\bfP_1=\{P^i_1,P^i_2,\ldots, P^i_{j-1}\}$, or all the paths in $\bfP_2=\{P^{i}_{j+1},P^i_{j+2},\ldots, P^i_{|\bfQ'|}\}$. Let us assume without loss of generality that $Q'$ cross all the paths in $\bfP_1$. Now consider following $k+1$ faces in $H$: $f_1$ bounded by $P_{1}$ and $P_{|\bfQ'|}$, $f_2$ bounded by $P_2$ and $P_3$,$\ldots$, $f_{i'}$ bounded by $P_{2i'-2}$ and $P_{2i'-1}$, $\ldots$, and $f_{k+1}$ bounded by $P_{2k}$ and $P_{2k+1}$. Since $j\ge 2k+1$ and $Q'$ crosses all the paths in $\bfP_1$, $Q'$ has to contain at least two consecutive vertices that are either on the boundary or on the interior of each $f_{i'}$ for $i'\in [k+1]$. As $G$ is color contracted, at least one of two neighbors is always non-empty. Let $w_{i'}$ be a colored vertex in $f_{i'}$. Moreover, for $j'\neq i'$ the boundaries of $f_{i'}$ and $f_{j'}$ are color-disjoint. Therefore, $\chi(w_{i'})\cap \chi(w_{j'})=\emptyset$.
	It follows that $|\chi(Q')|\ge |\bigcup_{i'\in [k+1]}\chi(w_{i'})|\ge k+1$. However, $Q'$ is a path containing only vertices in $Q$, hence also $|\chi(Q)|\ge k+1$, contradiction with the choice of $Q$. Hence, $\chi(P^i_j)\cap\chi(Q)=\emptyset$. It follows that at most $4k$ paths can share a color with any $v$-$t$ walk with at most $k$ colors between $u_i$ and $u_{i+1}$ for $i\in [0,|U|]$. Hence, there are at most $4k\cdot (|U|+1)$ many paths that can share a color with any $k$-valid $v$-$t$ walk and we can find them easily by marking $4k$ paths closest to $t$ between each $u_i$ and $u_{i+1}$. 
\end{proof}
\fi 

Recall that due to Assumption~\ref{ass:color_contracted}, we assume that the graph $G$ is color contracted and no two neighbors have the same color set. Moreover, the paths in $\bfQ'$ are color-disjoint, so the vertices in $U\cup \{s,v\}$ are all empty and each neighbor of these vertices belongs to at most one path in $\bfQ'$. The goal in the following few technical lemmas is to show that for any two consecutive vertices $u_i$ and $u_{i+1}$ in $U$ we can find a large (of size at least $4k+1$) subsets of paths in $\bfQ'$ that pairwise do not intersect between $u_i$~and~$u_{i+1}$.

\begin{figure}[ht]
	\centering
	\includegraphics[width=0.8\textwidth,page=2]{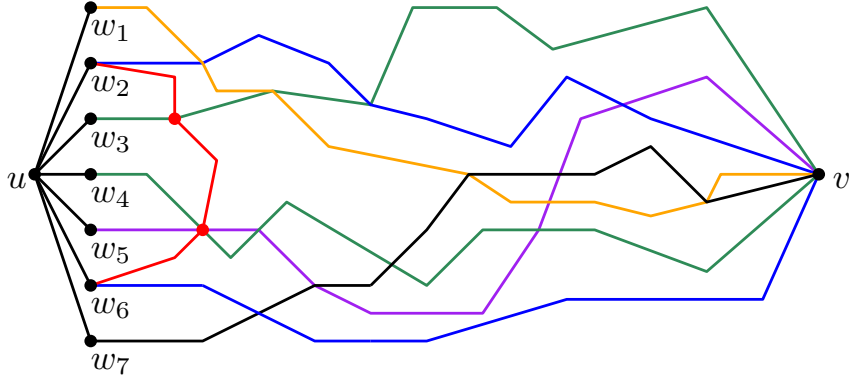}
	\caption{Situation in Lemma~\ref{lem:disjointPaths}. On the picture are seven $u$-$v$ paths, no $3$ of them intersecting in the same vertex. The red $w_2$-$w_6$ path on the picture intersects the three paths containing $w_3$, $w_4$, and $w_5$, respectively. Any such path has to contain at least $2$ vertices, else the only vertex on the path would be the intersection of $3$ $u$-$v$ paths. }\label{fig:smallIntersection}
\end{figure}

\begin{lemma}\label{lem:disjointPaths}
	Given an instance $(G, C, \chi, s, t, k)$ which is irreducible w.r.t. color contraction, two vertices $u$, $v$, $b\in \nat$ and a set $\bfP$ of $k$-valid $u$-$v$ paths such that no $b$ paths intersect in the same vertex. Let $w_1,\ldots, w_r$ be the neighbors of $u$, each the second vertex of a different path in $\bfP$, in counterclockwise order. For $i\in [r]$ let $P_i$ denote the path in $\bfP$ containing $w_i$.  Let $1\le i<j\le r$, then the shortest curve $\sigma$ from $w_i$ to $w_j$ that intersects $G$ only in vertices of $V(G)\setminus \{u,v\}$ contains at least $\frac{\min\{j-i, r+i-j\}-1}{b}$ vertices on paths in $\PPP\setminus \{P_i, P_j\}$.
\end{lemma}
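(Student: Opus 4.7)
The plan is to argue by constructing a Jordan curve in the plane and using the fact that the paths whose second vertices lie ``inside'' must cross it. Specifically, I will consider the closed curve $C$ obtained by traversing the edge $uw_i$ from $u$ to $w_i$, then following $\sigma$ to $w_j$, then the edge $w_j u$ back to $u$. Since $\sigma$ is shortest (hence simple), avoids $u$ and $v$, and intersects $G$ only at vertices, the curve $C$ is a simple closed curve. By the Jordan curve theorem it separates the plane into two open regions. Using the counterclockwise rotation at $u$, the two edges $uw_i$ and $uw_j$ split the neighbors $w_1,\dots,w_r$ of $u$ into two cyclic arcs: one of size $j-i-1$ containing $w_{i+1},\dots,w_{j-1}$, and one of size $r-j+i-1$ containing $w_{j+1},\dots,w_r,w_1,\dots,w_{i-1}$. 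Locally at $u$ the two regions of $C$ correspond to these two arcs, so one arc lies inside $C$ and the other outside.

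Next I will argue that one of these arcs is separated from $v$ by $C$. Since $v \notin C$, it lies strictly inside or strictly outside. In either case, every path $P_l$ whose second vertex $w_l$ lies on the arc opposite to $v$ must cross $C$: it starts at $u$ (on $C$), the second edge leads into one region, and $v$ is in the other. Thus at least $\min\{j-i-1,\,r-j+i-1\} = \min\{j-i,\,r+i-j\}-1$ paths $P_l$ with $l \neq i,j$ must cross $C$ after their first edge. Because $\sigma$ intersects $G$ only at vertices of $V(G)\setminus\{u,v\}$, and because the edges $uw_i$, $uw_j$ contain no internal vertices of $G$, the crossing of $P_l$ with $C\setminus\{u\}$ must occur at a vertex of $G$ lying on $\sigma$.

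Finally, I count: each crossing vertex $z\in\sigma$ belongs to $P_l$ for some $l\notin\{i,j\}$, i.e.\ to a path in $\bfP\setminus\{P_i,P_j\}$. Since no $b$ paths of $\bfP$ share a vertex, at most $b-1\le b$ of the crossing paths can collapse to the same vertex $z$. Therefore the number of distinct vertices of $\sigma$ that lie on paths of $\bfP\setminus\{P_i,P_j\}$ is at least $\frac{\min\{j-i,\,r+i-j\}-1}{b}$, as claimed.

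The main obstacle will be the topological bookkeeping, in particular verifying that $C$ is a genuine Jordan curve and that a path $P_l$ really cannot escape $C$ through the short arcs $uw_i$, $uw_j$: this requires using that the crossings happen only at vertices of $G$ and that $P_l$, being a simple path, visits $u$ only as its first vertex. Once this is nailed down, the counting step is immediate.
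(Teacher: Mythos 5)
Your proof is correct and follows essentially the same route as the paper's: both close up $\sigma$ into a Jordan curve through $u$, $w_i$, $w_j$, observe that it separates $v$ from one of the two cyclic arcs of neighbors of $u$, and conclude that each $P_l$ with $w_l$ on the far side must re-cross the curve at a vertex of $\sigma$, then divide by $b$ to convert the count of crossing paths into a count of distinct vertices. The only difference is cosmetic: you realize the closed curve concretely as $uw_i \cup \sigma \cup w_ju$ and argue explicitly that the crossing cannot happen on the interiors of the two bounding edges, whereas the paper simply asserts the existence of a suitable closed curve $\sigma'$ and treats it as a vertex separator.
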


\begin{proof}
	See an example of the situation in Figure~\ref{fig:smallIntersection}. Given a curve $\sigma$, we can easily find a closed curve $\sigma'$ that intersect $G$ in $u$, $w_i$, $w_j$ and the vertices that are intersected by $\sigma$. 
	The vertices on $\sigma'$ are then the vertex separator separating $v$ from either $w_{i+1},\ldots, w_{j-1}$ or from $w_1,\ldots, w_{i-1}$ and $w_{j+1},\ldots, w_r$. If the vertices on $\sigma'$ are the vertex separator separating $v$ from $w_{i+1},\ldots, w_{j-1}$, then all the paths $P_{i+1},\ldots, P_{j-1}$ has to pass a vertex on $\sigma$ different than $w_i$ or $w_j$. Since no $b$ paths intersect in the same vertex, we get that $\sigma$ contains at least $\frac{j-i-1}{b}$ vertices in this case. The case when the vertices on $\sigma'$ are the vertex separator separating $v$ from $w_1,\ldots, w_{i-1}$ and $w_{j+1},\ldots, w_r$ is symmetric and the lemma follows.
\end{proof}
\iflong 
\begin{lemma}\label{lem:disjointPathFewFaces}
	\fi
\ifshort 
\begin{lemma}[$\proofMark$]\label{lem:disjointPathFewFaces}
\fi
	Let $(G, C, \chi, s, t, k)$ be an instance of \cmor{} such that $G$ is irreducible w.r.t. color contraction, $H$ a subgraph of $G$, and $P$ a $k$-valid $u$-$v$ path with $u,v\in V(H)$ and $\chi(P)\cap \chi(H) = \emptyset$. Then $P$ intersects at most $k$ faces of $H$. 
\end{lemma}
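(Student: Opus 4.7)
The plan is to show that every face of $H$ intersected by $P$ contributes at least one non-empty vertex of $P$, and that the color sets of these witnesses across distinct faces are pairwise disjoint; since $P$ is $k$-valid this immediately gives $m \le k$, where $m$ denotes the number of faces of $H$ intersected by $P$.

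First I would observe that any vertex of $P$ that lies in $V(H)$ must be empty: its color set is contained both in $\chi(P)$ and in $\chi(H)$, which are disjoint. Combined with the fact that $G$ is irreducible w.r.t.\ color contraction (so adjacent vertices never share a color set), this rules out $P$ having two adjacent vertices that both sit in $V(H)$. In particular $P$ cannot cross a face $f$ via a single edge whose endpoints both lie on $\partial f$, so whenever $P$ intersects $f$ some vertex of $P$ lies strictly inside $f$. I would then strengthen this to a non-empty interior vertex: if every vertex of $P$ inside $f$ were empty, pick such a $w$ and a $P$-neighbor $w'$. By irreducibility $\chi(w') \ne \emptyset$; hence $w' \notin V(H)$, since otherwise $\chi(w') \subseteq \chi(H) \cap \chi(P) = \emptyset$. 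But then the edge $ww'$, drawn in the plane, cannot cross the drawing of $H$, so $w'$ lies in the same face $f$, contradicting our assumption. Hence each intersected face $f$ contains a non-empty vertex $w_f \in V(P)$.

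Second I would use color-connectedness of $G$ (available because $(G,C,\chi,\ldots)$ is a \cmor{} instance). Let $c \in \chi(w_f)$. Since $c \in \chi(P)$ and $\chi(P) \cap \chi(H) = \emptyset$, no vertex in $V(H)$ carries $c$. Color-connectedness then implies that the set of $c$-vertices induces a connected subgraph of $G$ disjoint from $V(H)$, and in a plane graph such a subgraph must lie inside a single face of $H$, namely the one containing $w_f$. Consequently $\chi(w_f) \cap \chi(w_{f'}) = \emptyset$ for any two distinct intersected faces $f, f'$, so $|\chi(P)| \ge \sum_f |\chi(w_f)| \ge m$. Combined with $|\chi(P)| \le k$, this gives $m \le k$.

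The main obstacle is the first step — certifying that a face can never be "grazed" by $P$ without producing a non-empty interior witness. This is precisely where irreducibility under color contraction is essential: it blocks the scenario of two adjacent empty $V(H)$-boundary vertices substituting for an honest interior vertex of $P$.
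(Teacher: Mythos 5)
Your proof is correct and takes essentially the same approach as the paper's: both show that every intersected face of $H$ contains a non-empty vertex of $P$ (using that $V(P)\cap V(H)$ consists of empty vertices and that irreducibility forbids two consecutive empty vertices on $P$), and then invoke color-connectedness together with $\chi(P)\cap\chi(H)=\emptyset$ to conclude that those witnesses have pairwise-disjoint color sets across distinct faces. Your version merely spells out the topological details (a vertex of $P$ strictly inside each intersected face, the non-empty neighbor lying in the same face) that the paper states more tersely.
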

\iflong 
\begin{proof}
	Since $P$ is color-disjoint from $H$, $P$ intersects $H$ only in empty vertices. Moreover, because $G$ is irreducible w.r.t. color contraction, it follows that $P$ does not contain two consecutive empty vertices and hence $P$ contains a colored vertex in every face it intersects. Finally, the vertices incident to a face in $H$ form a separator between the vertices of $G$ that lie inside and the vertices of $G$ that lie outside of the face. Since $G$ is color-connected, any color that appear inside two distinct faces of $H$ appears also on a vertex of $H$. Finally, $P$ contains at most $k$ colors and in each face of $H$ it intersects it has at least one color that is unique to this face. Therefore, $P$ intersects at most $k$ faces of $H$.  
\end{proof}
\fi 
The combination of the two above lemma immediately yields the following:
\iflong 
\begin{lemma}\label{lem:nonIntersectingPaths}
\fi
\ifshort 
\begin{lemma}[$\proofMark$]\label{lem:nonIntersectingPaths}
	\fi
	Given an instance $(G, C, \chi, s, t, k)$ which is irreducible w.r.t. color contraction, two vertices $u$, $v$, an integer $b\in \nat$ and a set $\bfP$ of $k$-valid pairwise color-disjoint $u$-$v$ paths such that no $b$ paths intersect in the same vertex. Let $w_1,\ldots, w_r$ be the neighbors of $u$, each the second vertex of a different path in $\PPP$, in counterclockwise order. Let $1\le i<j\le r$ and let $P_i$ and $P_j$ be the two paths in $\PPP$ containing $w_i$ and $w_j$, respectively. If $\min\{j-i, r+i-j\}> 2k\cdot b$, then $P_i$ and $P_j$ do not intersect.
\end{lemma}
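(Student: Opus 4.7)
Assume for contradiction that $P_i$ and $P_j$ share a vertex $x\notin\{u,v\}$. Since the paths in $\bfP$ are pairwise color-disjoint, $x$ is empty. Form the $w_i$-$w_j$ walk $W=P_i[w_i,x]\circ P_j[x,w_j]$, whose vertices all lie in $V(P_i)\cup V(P_j)$ and none equal $u$ or $v$. Draw in the plane a curve $\sigma$ from $w_i$ to $w_j$ that ``traces'' $W$: it passes exactly through the vertices of $W$ in order, and between two consecutive vertices of $W$ it is routed strictly inside one of the two faces of $G$ incident to the corresponding edge of $W$. Then $\sigma$ intersects $G$ only at vertices of $V(P_i)\cup V(P_j)$, in particular at no vertex of $\{u,v\}$, so $\sigma$ is a legitimate curve for Lemma~\ref{lem:disjointPaths}.

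Applying Lemma~\ref{lem:disjointPaths} to $\sigma$ gives the lower bound
\[
 |V(\sigma)\cap V(\bfP\setminus\{P_i,P_j\})| \;\ge\; \frac{\min\{j-i,\,r+i-j\}-1}{b} \;>\; 2k,
\]
where the last inequality uses the hypothesis $\min\{j-i,r+i-j\}>2k\cdot b$.

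For the matching upper bound, set $H=\bigcup_{\ell\neq i,j} P_\ell$, viewed as a plane subgraph of $G$. Both $P_i$ and $P_j$ are $k$-valid $u$-$v$ paths, color-disjoint from $H$, with $u,v\in V(H)$; so Lemma~\ref{lem:disjointPathFewFaces} applies and each of them intersects at most $k$ faces of $H$. Because $\sigma$ traces $P_i[w_i,x]$ and then $P_j[x,w_j]$, and because every vertex of $V(\sigma)\cap V(H)$ is an empty vertex at which the traced subpath crosses from one face of $H$ into another (no two consecutive such empties are possible by irreducibility w.r.t.\ color contraction), each such crossing can be charged to a distinct face visited by $P_i[w_i,x]$ or by $P_j[x,w_j]$. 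Hence $|V(\sigma)\cap V(H)|\le k+k=2k$, contradicting the lower bound.

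The main obstacle is carrying out the face-charging argument cleanly, since $P_i$ (or $P_j$) could in principle re-enter the same face of $H$ several times; the right viewpoint is to route $\sigma$ so that between two consecutive intersections with $V(H)$ it stays in a single face of $H$, whereupon each such intersection really is a face transition along the corresponding traced subpath and Lemma~\ref{lem:disjointPathFewFaces} caps the total number of such transitions by $k$ for $P_i[w_i,x]$ and by $k$ for $P_j[x,w_j]$. With this in hand the two bounds collide and the contradiction follows immediately from the two previous lemmas, which is precisely why the statement is claimed to be ``immediate''.
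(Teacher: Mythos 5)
Your proposal invokes the same two ingredients as the paper's proof (Lemma~\ref{lem:disjointPaths} for a lower bound on how many vertices of $\bfP\setminus\{P_i,P_j\}$ any $w_i$-$w_j$ curve must cross, and Lemma~\ref{lem:disjointPathFewFaces} for an upper bound via faces of $H$), and the lower-bound half is fine: the proof of Lemma~\ref{lem:disjointPaths} works for any curve avoiding $u,v$, so your traced curve $\sigma$ indeed must cross more than $2k$ vertices of $\bfP\setminus\{P_i,P_j\}$.

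The upper-bound half, however, has a genuine gap that your final paragraph names but does not actually repair. You want to bound $|V(\sigma)\cap V(H)|$ by charging each crossing to a distinct face of $H$ visited by $P_i[w_i,x]$ or $P_j[x,w_j]$, and then invoke Lemma~\ref{lem:disjointPathFewFaces}. But Lemma~\ref{lem:disjointPathFewFaces} only bounds the number of \emph{distinct} faces of $H$ a path intersects, not the number of times the path crosses between faces. A $k$-valid path that is color-disjoint from $H$ can re-enter the same face of $H$ repeatedly: the colored vertices it picks up on successive visits to the same face can reuse the same color (color-connectedness forces colors in \emph{different} faces of $H$ to differ, but says nothing about repeated visits to one face), so the number of boundary crossings along $P_i[w_i,x]$ can exceed $k$ even when $P_i$ visits only a handful of faces. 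Re-routing $\sigma$ to ``stay in one face between consecutive intersections with $V(H)$'' does not change this, since $\sigma$ is still forced to cross $V(H)$ every time the traced subpath does, and Lemma~\ref{lem:disjointPathFewFaces} does not cap that number.

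The paper sidesteps this by \emph{not} tracing $P_i$ and $P_j$. Instead it uses Lemma~\ref{lem:disjointPathFewFaces} only to conclude that $P_i$ and $P_j$ together visit at most $2k$ faces of $H$, then observes that if $P_i$ and $P_j$ share a vertex these $\le 2k$ faces form a connected region, and draws a \emph{fresh} curve from $w_i$ to $w_j$ through that connected set of faces. A connected set of at most $2k$ faces admits a curve between any two of them crossing at most $2k-1$ vertices of $H$, independently of how many times $P_i$ or $P_j$ themselves cross boundaries. That is the step your argument is missing: decouple the curve whose crossings you count from the paths $P_i,P_j$ themselves, and work at the level of the (connected) set of visited faces rather than the sequence of face transitions along the paths.
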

\iflong 
\begin{proof}
	Let $\bfP'=\bfP\setminus \{P_i, P_j\}$. By Lemma~\ref{lem:disjointPaths} the shortest curve $\sigma$ from $w_{i-1}$ to $w_j$ that intersects $G$ only in vertices of $V(G)\setminus \{u,v\}$ contains at least $2k$ vertices on paths in $\bfP'$. Let $H$ be the subgraph of $H$ induced by paths in $\bfP'$. By Lemma~\ref{lem:disjointPathFewFaces} both $P_i$ and $P_j$ intersect at most $k$ faces of $H$. If $P_i$ and $P_j$ intersects, then these $2k$ faces form one connected component and there is a curve from $w_i$ to $w_j$ that intersects at most $2k-1$ vertices of $H$, which are precisely the vertices on paths in $\bfP'$, a contradiction.
\end{proof}
\fi

\begin{lemma}\label{lem:main}
	If no $b$ paths in $\bfQ'$ intersect in the same vertex in $V(G)\setminus (U\cup \{s,v\})$ and $|\bfQ'| > (8k^2+8k+2)\cdot (|U|+1)\cdot b$, then we can in polynomial time find a path $P\in \bfQ'$ such that for every $k$-valid $v$-$t$ walk $Q$ that does not contain a vertex in $U$ holds $\chi(P)\cap \chi(Q) = \emptyset$. 
\end{lemma}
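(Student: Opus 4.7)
The plan is to bound, for each interval $i \in [0, |U|]$ (writing $u_0 = s$ and $u_{|U|+1} = v$), the number of paths in $\bfQ'$ whose subpath from $u_i$ to $u_{i+1}$ can share a color with some $k$-valid $v$-$t$ walk $Q$ avoiding $U$. Call such paths \emph{bad for $i$}. I will show that at most $(8k^2+8k+2)b$ paths are bad per interval, so fewer than $|\bfQ'|$ paths are bad for any interval, and hence some $P \in \bfQ'$ is good for every interval and serves as the desired path.

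For each interval $i$, order the paths of $\bfQ'$ by their second vertex around $u_i$. By color-disjointness and irreducibility of $G$, these second vertices are distinct and nonempty: any empty neighbor of $u_i$ would be color-contracted into $u_i$, and a colored neighbor can lie on at most one path in $\bfQ'$ by color-disjointness. By Lemma~\ref{lem:nonIntersectingPaths}, paths at cyclic distance exceeding $2kb$ are pairwise vertex-disjoint between $u_i$ and $u_{i+1}$, so picking every $(2kb+1)$-th path yields $\mathbf{R}^i \subseteq \bfQ'$ of size at least $\lfloor |\bfQ'|/(2kb+1) \rfloor$ whose members are pairwise disjoint in that region. Label the picks $R^i_1, \ldots, R^i_{r_i}$ in cyclic order so that $t$ lies in the lens bounded by $R^i_1$ and $R^i_{r_i}$. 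Call $P \in \bfQ'$ \emph{internal for $i$} if its cyclic position has at least $(2k+1)(2kb+1)$ paths of $\bfQ'$ on each side, equivalently, the two $\mathbf{R}^i$-picks flanking $P$ both lie in positions $[2k+2, r_i - 2k - 1]$. A direct count shows at most $2(2k+1)(2kb+1) \le (8k^2+8k+2)b$ paths are non-internal per interval.

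The main technical step, which I expect to be the main obstacle, is showing that every internal $P$ is good for its interval by adapting the face-counting argument of Lemma~\ref{lem:intesecting_in_U}. Suppose for contradiction that an internal $P$ shares a color $c$ with a $k$-valid $v$-$t$ walk $Q$ avoiding $U$, with $c$ on a vertex $w$ of the $u_i$--$u_{i+1}$ portion of $P$. Since $\bfQ'$ is pairwise color-disjoint, $c$ appears on no $\mathbf{R}^i$-pick; by color-connectedness, the color-$c$ component of $G$ is confined to the slab between the flanking picks $R^i_{l-1}$ and $R^i_{l+1}$. Hence $Q$ visits some vertex $w'$ of this slab, and as $Q$ avoids $U$, the subpath $Q'$ of $Q$ from $w'$ to $t$ must escape the slab by crossing $R^i_{l-1}, R^i_{l-2}, \ldots, R^i_1$ or $R^i_{l+1}, R^i_{l+2}, \ldots, R^i_{r_i}$ in order; by internality, at least $2k+1$ pairwise disjoint $\mathbf{R}^i$-picks are crossed. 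One then exhibits $k+1$ faces of the $\mathbf{R}^i$-arrangement through which $Q'$ consecutively passes and whose bounding paths in $\mathbf{R}^i$ are pairwise disjoint: the outer lens bounded by $R^i_1, R^i_{r_i}$, together with $k$ inner lenses bounded by alternating pairs of consecutive picks along the crossed side. By irreducibility $Q'$ contains two consecutive vertices in each face, hence at least one colored; by color-connectedness and disjointness of face boundaries, colors from distinct faces are disjoint, yielding $|\chi(Q')| \ge k+1 > k$, contradicting $k$-validity. Summing the per-interval bad counts over the $|U|+1$ intervals gives strictly fewer than $|\bfQ'|$ bad paths in total, so an internal-for-all-intervals $P$ exists and is found in polynomial time by computing each $\mathbf{R}^i$, marking the non-internal paths, and returning any unmarked path.
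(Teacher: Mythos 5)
Your proof takes the same approach as the paper: for each gap between consecutive $U$-vertices you thin $\bfQ'$ to a pairwise-disjoint sub-collection via Lemma~\ref{lem:nonIntersectingPaths}, and then the face-counting / color-connectedness argument (as in Lemma~\ref{lem:intesecting_in_U}) shows that any $k$-valid $v$-$t$ walk avoiding $U$ can reach colors only on the $O(k^2 b)$ paths nearest the outer boundary of the fan. Your spacing constant $2kb+1$ is the clean way to invoke Lemma~\ref{lem:nonIntersectingPaths}; the paper's written indices in its claim (spacing $2k+1$) appear to drop the factor $b$ and match the lemma's hypothesis literally only when $b=1$, but the intended argument and the resulting bound $(8k^2{+}8k{+}2)(|U|{+}1)b$ are the same as yours.
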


\begin{proof}
	For the convenience let us denote $s$ by $u_0$ and $v$ by $u_{|U|+1}$. 
	We will show that for every $i\in \{0,\ldots, |U|\}$, every $k$-valid $v$-$t$ walk can intersect at most $(8k^2+8k+2)\cdot b$ paths in a vertex on the path between $u_i$ and $u_{i+1}$. For a path $P\in \bfQ'$ let $P^i$ denote the subpath between $u_i$ and $u_{i+1}$ and let $\bfQ^i = \{P^i\mid P\in \bfQ'\}$. Clearly, the paths in $\bfQ^i$ are color-disjoint $u_i$-$u_{i+1}$ each containing at most $\ell\le k$ colors and no $b$ paths in $\bfQ^i$ intersect in the same vertex beside $u_i$ and $u_{i+1}$. Now let $H^i$ be the subgraph of $G$ induced by the edges on paths in $\bfQ^i$. Since $G$ is color contracted, $u_i$ is an empty vertex, and the paths in $\bfQ^i$ are colored disjoint, each neighbor of $u_i$ appears on a unique path in $\bfQ^i$. Let $w_1,w_2,\ldots, w_{|\bfQ^i|}$ be the neighbors of $u_i$ in $H^i$ in counterclockwise order and let $P_j^i$ be the path in $\bfQ^i$ that contains $w_j$. Clearly, $t$ is in the interior of some face $f$ of $H^i$ and there is at least one path that contains an edge incident on $f$ in $H^i$. Without loss of generality let $P_1^i$ be such path (note that we can always choose a counterclockwise order around $u_i$ for which this is true). 
	
	\begin{figure}[ht]
		\centering
		\includegraphics[width=0.8\textwidth,page=4]{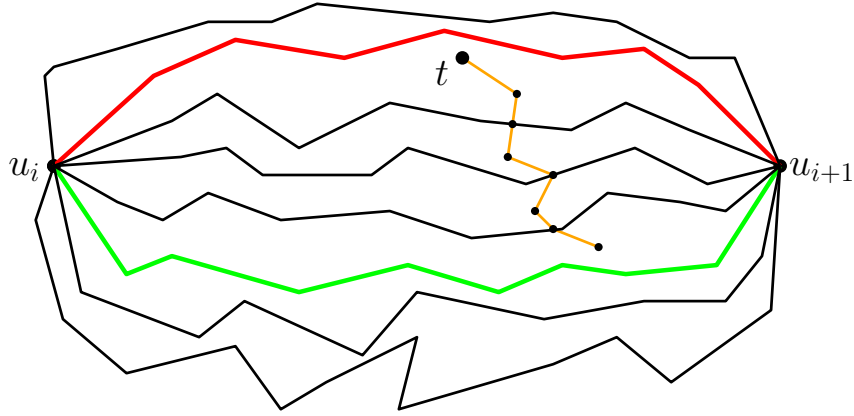}
		\caption{ Any path that starts in a face incident on the red path and finish in a face incident on the green path that does not contain $u_i$ nor $u_{i+1}$ has appear in at least $4$ different faces. Since the paths are color-disjoint, only the consecutive faces can share colors and hence any such path contains at least $2$ colors. }\label{fig:claim}
	\end{figure}

	\begin{claim}
		Let $j\in [|\bfQ^i|]$.  If $ (2k+1)(2k+1)\cdot b <j < |\bfQ^i| -(2k+1)(2k+1)\cdot b$,  $k$-valid $v$-$t$ walk $Q$ that does not contain $u_i$ nor $u_{i+1}$ in the interior holds $\chi(P^i_j)\cap \chi(Q) = \emptyset$.		
	\end{claim}
	\begin{cProof}
		Consider the following set of paths: $P_1^i, P_{2k+2}^i, P_{4k+3}^i, \ldots, P_{4k^2+4k+1}, P_j^i$, $P_{j+2k+1}^i$, $P_{j+4k+2}^i$, $\ldots, P_{j+4k^2+4k}^i$. By Lemma~\ref{lem:nonIntersectingPaths}, these paths are pairwise non-intersecting. Hence, we are in the situation as depicted in Figure~\ref{fig:claim}. Since the paths in $\bfQ^i$ are pairwise color-disjoint, the colors of $P_j^i$ are only on vertices of $G$ inside the region bounded by $P_{2k^2+k+1}$ and $P_{j+2k+1}^i$. Therefore, if $\chi(Q)\cap P_j^i\neq \emptyset$ for some $v$-$t$ walk $Q$, then $Q$ contains a vertex $w$ inside the region bounded by $P_{2k^2+k+1}$ and $P_{j+2k+1}^i$. Moreover, $Q$ does not contain $u_i$ nor $u_{i+1}$ as an inner vertex then it either crosses all the paths in $\bfP_1=\{P_{2k+2}^i, P_{4k+3}^i, \ldots, P_{4k^2+4k+1}\}$ or all the paths in $\bfP_2=\{P_{j+2k+1}^i, P_{j+4k+2}^i,\ldots, P_{j+4k^2+2k}^i\}$. Without loss of generality, let us assume that $Q$ crosses all the paths in $\bfP_1$. The other case is symmetric. As $G$ is color contracted, no two consecutive vertices of $P$ are empty.
		Hence, $Q$ either crosses a path in $\bfP_1$ in a colored vertex or there is a colored vertex on $Q$ between two consecutive paths in $\bfP_1$ (resp. $\bfP_2$). Let us partition the paths in $\bfP_1\cup \{P_1,P_j\}$ into $k+1$ group of two consecutive pairs. that is we partition $\bfP_1$ into groups $\{P_1, P_{2k+2}\}$, $\{P_{4k+3},P_{6k+4}\}$,$\ldots$,$\{P_{4k^2-1},P_{4k^2+2k}\}$,$\{P_{4k^2+4k+1},P_{j}\}$. If the walk $Q$ crosses all paths in $\bfP_1$, it has to contains a colored vertex in each of the $k+1$ groups. However, each two groups are separated by color-disjoint paths. Therefore, two colored vertices in two different groups have to be color-disjoint. But then $\chi(Q)$ contains at least $k+1$ colors, this is however not possible, because $Q$ is $k$-valid. 		
	\end{cProof}
	The lemma then straightforwardly follows from the above claim by marking for each of $|U|+1$ consecutive pairs $2(2k+1)^2\cdot b$ paths that can share a color with some $Q$ and outputting any non-marked path.
\end{proof}

Since $\chi(P)\cap \chi(Q)=\emptyset$, $\chi(P)$ can be replaced by any other color set of $|\chi(P)|$ colors and we can safely remove it from $\PPP$. Since we chose $\bfQ'$ such that no $\frac{|\bfQ|}{(|U|+1)!\cdot(8k^2+8k+3)}=\frac{|\bfQ'|}{(|U|+1)\cdot(8k^2+8k+3)}$ paths intersect in $\bfQ'$, we get the following main result of this subsection.
\iflong  
\begin{lemma}\label{lem:costructingDisjointFamily}
\fi 
\ifshort  
\begin{lemma}\label{lem:costructingDisjointFamily}
\fi 
	Let $(G,C,\chi,s,t,k)$ be an instance of \cmor\ such that $G$ is irreducible w.r.t. color contraction. Given a family $\PPP$ of pairwise color-disjoint $s$-reachable color sets of set of size $\ell\le k$ and a vertex $v\in V(G)$, 
	if $|\PPP|> 2^{\Oh(k^2\log(k))}$, then we can in time polynomial in $|\PPP|+|V(G)|$
	find a set $p\in\PPP$ such that $\PPP\setminus\{p\}$ $k$-represents $\PPP$ w.r.t. $v$.	
\end{lemma}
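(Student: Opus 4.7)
My plan is to convert the family $\PPP$ into a family of $s$-$v$ paths, apply the refinement procedure of Algorithm~\ref{alg:refiningPaths}, and then use Lemma~\ref{lem:main} to locate an irrelevant color set. First, for each $p_i \in \PPP$, since $p_i$ is $s$-opening at $v$, I would pick a shortest $s$-$v$ path $P_i$ in the subgraph induced by those vertices whose color set lies inside $p_i$; this satisfies $\chi(P_i) = p_i$, and any $p_i$ where some proper subset still reaches $v$ is already trivially removable by Observation~\ref{obs:easyCasekRep} and can be discarded up front. Collecting these gives $\bfP = \{P_1, \ldots, P_{|\PPP|}\}$, a pairwise color-disjoint family of $k$-valid $s$-$v$ paths in the irreducible color-contracted graph $G$; in particular $\chi(s) = \chi(v) = \emptyset$.

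Next I would feed $\bfP$ into Algorithm~\ref{alg:refiningPaths} to obtain a subfamily $\bfQ \subseteq \bfP$ together with a set $U \subseteq V(G)$ such that every path of $\bfQ$ hits every vertex of $U$, while no outside vertex lies on more than $b := |\bfQ| / ((|U|+1)!\,(8k^2+8k+3))$ paths of $\bfQ$. Lemma~\ref{lem:sizeOfU} guarantees $|U| \le 2k$ once $|\PPP|$ exceeds the requisite $k^{\Oh(k^2)}$ threshold. I would then enumerate the at most $(2k)!$ linear orderings of $U$ and retain the subfamily $\bfQ' \subseteq \bfQ$ respecting the most popular ordering $\tau = (u_1, \ldots, u_{|U|})$, which costs only a $(2k)!$ factor in size.

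At this point $\bfQ'$ is a collection of pairwise color-disjoint $k$-valid $s$-$v$ paths visiting $U$ in the common order $\tau$ with bounded sharing outside $U$, which is exactly the setup required by Lemma~\ref{lem:main}. A routine bookkeeping calculation, combining the factors $((|U|+1)!(8k^2+8k+3))^{|U|}$, $(2k)!$, and $(8k^2+8k+2)(|U|+1)$ with $|U| \le 2k$, confirms that the overall blow-up from $|\PPP|$ down to $|\bfQ'|/((8k^2+8k+2)(|U|+1)b)$ is $2^{\Oh(k^2 \log k)}$, so the numerical hypothesis of Lemma~\ref{lem:main} follows from the assumed lower bound on $|\PPP|$. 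Applying that lemma in polynomial time delivers a path $P \in \bfQ'$ with $\chi(P) \cap \chi(Q) = \emptyset$ for every $k$-valid $v$-$t$ walk $Q$ whose interior avoids $U$.

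The one delicate remaining step, where I expect the main conceptual point to lie, will be verifying that $p := \chi(P)$ is genuinely irrelevant. For any $v$-$t$ walk $Q$ that $p$ minimally completes, I would first observe that $Q$ cannot contain any $u \in U \setminus \{v\}$ as an interior vertex: since $P$ passes through $u$, the prefix of $P$ ending at $u$ would witness $u$ being reachable from $s$ by $p$, contradicting the clause that $v$ is the only vertex on $Q$ reachable from $s$ by $p$ in the definition of completion. Consequently Lemma~\ref{lem:main} applies and forces $p \cap \chi(Q) = \emptyset$. Any $p' \in \PPP \setminus \{p\}$ then serves as its representative: the corresponding path $P' \in \bfP$ provides $\chi(P') = p'$; the containment $p' \cap \chi(Q) \supseteq p \cap \chi(Q) = \emptyset$ is trivial; and $|p' \cup \chi(Q)| \le |p'| + |\chi(Q)| = |p| + |\chi(Q)| = |p \cup \chi(Q)| \le k$, using $|p'| = |p| = \ell$ and $p \cap \chi(Q) = \emptyset$.
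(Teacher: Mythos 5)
Your proposal is correct and follows essentially the same route as the paper: build the path family via shortest $s$-$v$ paths in the induced subgraphs, invoke Observation~\ref{obs:easyCasekRep} for the degenerate cases, run Algorithm~\ref{alg:refiningPaths}, bound $|U|$ by Lemma~\ref{lem:sizeOfU}, pick the most popular ordering to get $\bfQ'$, apply Lemma~\ref{lem:main}, and then use the observation that a walk $Q$ minimally completed by $p=\chi(P)$ cannot pass through any vertex of $U\setminus\{v\}$ (all such vertices being reachable from $s$ by $p$ via prefixes of $P$) so that $\chi(P)\cap\chi(Q)=\emptyset$, after which any equal-size $p'\in\PPP$ represents $p$. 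One small inaccuracy worth flagging: the numerical hypothesis $|\bfQ'|>(8k^2+8k+2)(|U|+1)b$ of Lemma~\ref{lem:main} does not come from the assumed lower bound on $|\PPP|$; it holds automatically from the termination condition of Algorithm~\ref{alg:refiningPaths} (which caps $b$ at $\frac{|\bfQ'|}{(|U|+1)(8k^2+8k+3)}$ after the ordering restriction). The lower bound $|\PPP|>2^{\Oh(k^2\log k)}$ is instead needed to invoke Lemma~\ref{lem:sizeOfU}, ensuring $|U|\le 2k$ and that $\bfQ'$ is nonempty.
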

\iflong 
\begin{proof}
	We start by finding for each $p_i\in \PPP$ an $s$-$v$ path $P_i$ in the graph induced on the vertices $w$ with $\chi(w)\subseteq p_i$. This step can be implemented on a planar graph in $\Oh(|V(G)|)$ time. If $\chi(P_i)\subsetneq p_i$, it follows from Observation~\ref{obs:easyCasekRep} that $\PPP\setminus p_i$ $k$-represents $\PPP$. Hence, for all $p_i\in \PPP$ it holds $\chi(P_i)=p_i$. Now we invoke Algorithm~\ref{alg:refiningPaths} to find a subset of these paths $\bfQ$ and a set of vertices $U$ such that $|U|\le 2k$ (Lemma~\ref{lem:sizeOfU}) and $|\bfQ|> \frac{|\PPP|}{((|U|+1)!\cdot (8k^2+8k+3))^{|U|}}$, and each vertex in $V(G)\setminus (U\cap \{s,v\})$ appears on at most $\frac{|\bfQ|}{(|U|+1)!(8k^2+8k+3)}$. Each of at most $2k$ loops of Algorithm~\ref{alg:refiningPaths} can be implemented in time $|\PPP|\cdot |V(G)|$. Afterwards, we select a subset $\bfQ'$ of $\bfQ$ of paths that visits vertices in $U$ in the same order of the maximum size. This is done by going through each path in $\bfQ$ once and assigning it to the subset with the same order of vertices in $U$ and then selecting the largest subset. Clearly, $\bfQ' \ge \frac{|\bfQ|}{|U|!}$ and therefore each vertex $V(G)\setminus (U\cap \{s,v\})$ appears on at most $b=\frac{|\bfQ'|}{(|U|+1)(8k^2+8k+3)}$ paths in $\bfQ'$. Therefore $|\bfQ'|>(8k^2+8k+2)\cdot (|U|+1)\cdot b$ and we can, by Lemma~\ref{lem:main}, in polynomial time find a path $P_i\in \bfQ'$ such that for every $k$-valid $v$-$t$ walk that does not contain a vertex in $U$ holds $\chi(P_i)\cap \chi(Q)=\emptyset$. Since vertices in $U$ are on $P_i$, for every $v$-$t$ walk $Q$ such that $|\chi(P_i)\cup\chi(Q)|\le k$ and $v$ is the only vertex on $Q$ reachable form $s$ by $\chi(P_i)$ it holds that $\chi(P_i)\cap\chi(Q)=\emptyset$. Since all sets in $\PPP$ have the same size, it holds for every $p'\in \PPP\setminus \{\chi(P_i) \}$ that $|p'\cup \chi(Q)|\le k$ and $p'\cap\chi(Q)\supseteq \chi(P_i)\cap \chi(Q)$. Therefore $\PPP\setminus \{\chi(P_i)\}$ $k$-represents $\PPP$.
\end{proof}
\fi

\subsubsection{Finishing the Proof}\label{subsec:finishingProof}


Given Lemma~\ref{lem:costructingDisjointFamily}, we are ready to proof Lemma~\ref{lem:costructingFamily}. 


\iflong 
\constructingFamily*
\begin{proof}
\fi 
\ifshort
\begin{proof}[Proof of Lemma~\ref{lem:costructingFamily}]
\fi
	Since each set in $\PPP$ has precisely $\ell\le k$ colors, if $|\PPP|>\ell!\cdot (g(k))^{\ell+1}$, $g(k)=k^{\Oh(k^2)}$ then, by Lemma~\ref{lem:SF} we can, in time polynomial in $|\PPP|$, find a set $\QQQ$ of $g(k)+1$ sets in $\PPP$ such that there is a color set $c\subseteq C$ and for any two distinct sets $p_1, p_2$ in $\QQQ$ it holds $p_1\cap p_2 = c$.	
	Now let $(G,C',\chi',s,t,k-|c|)$ be the instance of \cmor\ such that $C'=C\setminus c$ and for every $v\in V(G)$, $\chi'(v) = \chi(v)\setminus c$ and let $\QQQ'=\{p\setminus c\mid p\in \QQQ\}$.

	\iflong
	\begin{claim}\label{clm:removingColors}
	\fi
	\ifshort
	\begin{claim}[$\proofMark$]\label{clm:removingColors}
	\fi 
		For all $p\in \QQQ$, $\QQQ'\setminus \{p\setminus c\}$ $(k-|c|)$-represents $\QQQ'$ w.r.t. $v$ in $(G,C',\chi',s,t,k-|c|)$ if and only if $\QQQ^v\setminus \{p\}$ $k$-represents $\QQQ^v$ w.r.t. $v$ in $(G,C,\chi,s,t,k)$.
	\end{claim}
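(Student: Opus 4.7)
The plan is to exploit the natural bijection $\varphi: \QQQ \to \QQQ'$ given by $\varphi(p) = p \setminus c$, which is indeed a bijection because $c \subseteq p$ for every $p \in \QQQ$ (as $c$ is the common intersection of all sets in the sunflower $\QQQ$). Under $\varphi$, the element $p \in \QQQ$ whose removal is considered on the right-hand side of the claim corresponds exactly to the element $p \setminus c$ whose removal is considered on the left-hand side. The goal is to show that each of the three conditions in Definition~\ref{def:k-representation_wrt_v} (the cardinality bound, the intersection containment, and the existence of a realizing $s$-$v$ path) translates cleanly across $\varphi$, and in addition that reachability and the $s$-opening property transfer as well.

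The main workhorse is the observation that for every walk $W$ in $G$ we have $\chi'(W) = \chi(W) \setminus c$, and since $c$ is disjoint from every set in $\QQQ'$ and from $\chi'(W)$, for any $q \in \QQQ'$ the identity $|(q \cup c) \cup \chi(W)| = |q \cup \chi'(W)| + |c|$ holds. This immediately matches the cardinality constraint $|q' \cup \chi'(Q)| \le k - |c|$ in the new instance with $|(q' \cup c) \cup \chi(Q)| \le k$ in the original. Applied to $s$-$v$ paths $P$, the same identity shows that $\chi'(P) \subseteq q$ iff $\chi(P) \subseteq q \cup c$ and that $\chi'(P) = q'$ iff $\chi(P) = q' \cup c$. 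Two consequences follow: a vertex is reachable from $s$ by $q$ in the new instance iff it is reachable by $q \cup c$ in the original, so $q$ is $s$-opening in the new instance iff $q \cup c$ is $s$-opening in the original; and the realizing $s$-$v$ path required by the definition of $k$-representation transports across $\varphi$. Since every set in $\QQQ$ and in $\QQQ'$ is $s$-opening, there is no proper subset of such a set that reaches $v$, so ``$p$ minimally completes $Q$'' collapses to ``$p$ completes $Q$'' in both instances, and only the latter needs to be matched.

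For the intersection containment, using $c \subseteq q^{**} \cap q^*$ and $c \cap \chi'(Q) = \emptyset$, one checks that $q^{**} \cap \chi(Q) \supseteq q^* \cap \chi(Q)$ in the original instance is equivalent to $(q^{**} \setminus c) \cap \chi'(Q) \supseteq (q^* \setminus c) \cap \chi'(Q)$ in the new instance, because the two containments differ exactly by the common set $c \cap \chi(Q)$ appearing on both sides. Assembling these three translations, both directions of the biconditional follow by starting with a $v$-$t$ walk $Q$ and an offending $q^* \in \QQQ$ (resp.\ $q^* \in \QQQ'$) that witnesses the failure of $k$-representation, and transporting it across $\varphi$ to obtain an analogous witness on the other side. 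I do not expect a genuine obstacle here: the two instances differ only by declaring the colors of $c$ to be free everywhere, and every set under consideration pays this cost uniformly; the only point requiring care is the bookkeeping of the set operations involving $c$ to ensure that the inclusions and cardinalities line up on the nose.
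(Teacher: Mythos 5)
Your proof is correct and follows essentially the same route as the paper's: the bijection $p \mapsto p \setminus c$ combined with the observations that $\chi'(W)=\chi(W)\setminus c$, that cardinalities shift uniformly by $|c|$, and that the intersection containment on the two sides differs only by the common term $c\cap\chi(Q)$. One small precision issue: the asserted equivalence ``$\chi'(P)=q'$ iff $\chi(P)=q'\cup c$'' does not follow from the cardinality identity alone (only $\chi(P)\subseteq q'\cup c$ does); it is valid here because $v$ is not reachable from $s$ by any proper subset of a set in $\QQQ$, which is the same fact you later invoke to collapse ``minimally completes'' to ``completes'', so the equivalence should be derived from that property rather than from the identity.
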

	\iflong 
	\begin{cProof}
		Let $Q$ be a $v$-$t$ walk. Note that for any color set $p'$ a vertex $u$ is reachable from $s$ by $p'$ in $(G,C,\chi,s,t,k)$ if and only if it is reachable from $s$ by $p'\setminus c$ in $(G,C',\chi',s,t,k-|c|)$. Moreover, since $c\subseteq p''$ for every $p''\in \QQQ$ it holds $|p''\cup \chi(Q)|\le k$ if and only if $|(p''\setminus c)\cup \chi'(Q)|\le k-|c|$ and $p''\cap \chi(Q) = (p''\setminus c)\cap \chi'(Q)\cup (c\cap \chi'(Q))$. The proof then follows straightforwardly from the definition of $k$-representation w.r.t. $v$.
	\end{cProof}
	\fi 
	Removing the colors in $c$ from $G$ can result in an instance that is not irreducible w.r.t. color contraction. However, in our algorithm for color-disjoint case, we crucially rely on the fact that $G$ is irreducible w.r.t. color contraction. Now let $G_0=G$, $\chi_0=\chi'$, $s_0=s$, $t_0=t$, $v_0=v$ and for $i\ge 1$ let $(G_i,C,\chi_i, s_i, t_i, k-|c|)$ be an instance we obtain from $(G_{i-1},C,\chi_{i-1}, s_{i-1}, t_{i-1}, k-|c|)$ by a single color contraction of vertices $x_i$ and $y_i$ into a vertex $z_i$ and let $v_i=z_i$ if $v_{i-1}\in \{x_i, y_i\}$ and $v_i=v_{i-1}$ otherwise. 
	
	\iflong 
	\begin{claim}\label{clm:color_contractibility}
	\fi
	\ifshort  
	\begin{claim}[$\proofMark$]\label{clm:color_contractibility}
	\fi
		For all $p\in \PPP$, if the set $\PPP\setminus p$ $(k-|c|)$-represents $\PPP$ w.r.t. $v_i$ in $(G_i,C,$ $\chi_i,s_i,t_i,k-|c|)$, then $\PPP\setminus p$ $(k-|c|)$-represents $\PPP$ w.r.t. $v$ in $(G_{i+1},C,\chi_{i+1},s_{i+1},t_{i+1},k-|c|)$.
	\end{claim}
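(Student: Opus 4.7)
The plan is to establish the claim as a direct consequence of Observation~\ref{obs:colorContraction}, which provides a color-preserving correspondence between walks in $G_i$ and $G_{i+1}$ under the identification $\{x_i,y_i\}\leftrightarrow z_i$. The strategy is to pull any witness to the hypothesis of representation in $G_{i+1}$ back to a corresponding witness in $G_i$, invoke the assumed representation in $G_i$, and then push the resulting representative forward to $G_{i+1}$.

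Concretely, I would fix an arbitrary $p^{*}\in\PPP$ and an arbitrary $v_{i+1}$-$t_{i+1}$ walk $Q_{i+1}$ in $G_{i+1}$ that $p^{*}$ minimally completes. By Observation~\ref{obs:colorContraction}, there exists a $v_i$-$t_i$ walk $Q_i$ in $G_i$ with $\chi_i(Q_i)=\chi_{i+1}(Q_{i+1})$. I would then verify that $p^{*}$ still minimally completes $Q_i$ in $G_i$ by translating each clause of Definition~\ref{def:k-representation_wrt_v} back and forth via the observation: the size bound $|p^{*}\cup\chi_i(Q_i)|\le k-|c|$ is immediate from color-preservation; the existence of an $s_i$-$v_i$ path with $\chi_i=p^{*}$ follows by pulling back the analogous $s_{i+1}$-$v_{i+1}$ path; the uniqueness condition (that $v_i$ is the only vertex on $Q_i$ reachable from $s_i$ by $p^{*}$) and the minimality condition (that no $s_i$-$v_i$ path has $\chi_i\subsetneq p^{*}$) both follow by pushing any hypothetical violator in $G_i$ forward to $G_{i+1}$ and contradicting the corresponding properties there.

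With $p^{*}$ minimally completing $Q_i$, the hypothesis of the claim yields some $p'\in\PPP\setminus\{p\}$ witnessing representation for the pair $(p^{*},Q_i)$ in $G_i$. I would then push $p'$ forward to $G_{i+1}$: the size and intersection conditions $|p'\cup\chi_{i+1}(Q_{i+1})|\le k-|c|$ and $p'\cap\chi_{i+1}(Q_{i+1})\supseteq p^{*}\cap\chi_{i+1}(Q_{i+1})$ follow immediately from $\chi_i(Q_i)=\chi_{i+1}(Q_{i+1})$, and the existence of an $s_{i+1}$-$v_{i+1}$ path in $G_{i+1}$ with color set $p'$ follows from applying Observation~\ref{obs:colorContraction} to the $s_i$-$v_i$ path of color set $p'$ supplied by the hypothesis.

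The main technical obstacle will be the path-versus-walk subtlety in the final step: Observation~\ref{obs:colorContraction} only guarantees a walk in $G_{i+1}$, whereas the definition of representation asks for a simple path. In the problematic case where the $G_i$-path visits both $x_i$ and $y_i$ non-consecutively, its image in $G_{i+1}$ is a walk visiting $z_i$ twice, and a naive shortcut could drop colors from the loop between the two visits. I anticipate resolving this via a case analysis that exploits the adjacency of $x_i$ and $y_i$ together with the color-equality $\chi_i(x_i)=\chi_i(y_i)$, which allows replacing the traversed loop by a traversal of the edge $x_iy_i$ itself so as to yield an equivalent simple path in $G_{i+1}$ with exactly the color set $p'$; the same obstacle arises symmetrically when pulling back the path for $p^{*}$ from $G_{i+1}$ to $G_i$, and is handled analogously.
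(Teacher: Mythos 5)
Your high-level strategy is the right one: translate the witnessing walk $Q$ and the representative's path $P'$ across the contraction via Observation~\ref{obs:colorContraction}, apply the assumed representation on the other side, and translate back. Structurally this mirrors the paper's argument. However, you follow the claim statement literally (hypothesis in the pre-contraction graph $G_i$, conclusion in the post-contraction $G_{i+1}$), and the statement as printed contains an index slip: the conclusion names the instance $(G_{i+1},\ldots)$ and the bare vertex $v$, whereas the paper's own proof and its downstream use in the proof of Lemma~\ref{lem:costructingFamily} need the \emph{reverse} direction, namely that representation w.r.t.\ $v_i$ in the \emph{contracted} $(G_i,\ldots)$ implies representation w.r.t.\ $v_{i-1}$ in the \emph{uncontracted} $(G_{i-1},\ldots)$, so the chain can be iterated from the irreducible graph all the way back to $G_0$. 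In the paper's direction the walk $Q$ is \emph{pushed} into the contracted graph (walks map to walks under Observation~\ref{obs:colorContraction}, with $\chi$ preserved), and the representative's path $P'$ is \emph{pulled} out of the contracted graph by expanding $z_i$ into one of $(x_i)$, $(y_i)$, $(x_i,y_i)$, or $(y_i,x_i)$ according to the neighbors of $z_i$ on $P'$; this replacement always produces a simple path, so the walk-versus-path subtlety you flag never arises there.

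Taking your direction at face value, the obstacle you identify is real, but the proposed repair does not close it. ``Replacing the traversed loop by a traversal of the edge $x_iy_i$'' either means using the self-loop on $z_i$ in $G_{i+1}$ (which does not exist in a simple graph) or rerouting $P'$ in $G_i$ through the edge $x_iy_i$ instead of the detour; the latter discards the colors on the detour, so the resulting path generally has color set strictly contained in $p'$, whereas the definition of $k$-representation demands an $s$-$v$ path $P'$ with $\chi(P')=p'$ exactly. What would actually close the gap is the minimality carried by $\PPP$ being a family of $s$-opening sets: after shortcutting the walk in $G_{i+1}$ to a path, its color set $q$ satisfies $q\subseteq p'$, and $q\subsetneq p'$ would make $v$ reachable from $s$ by a proper subset of $p'$, contradicting the standing assumption (preserved under contraction by Observation~\ref{obs:colorContraction}) that $v$ is not reachable by any proper subset of $p'$; hence $q=p'$. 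You would also need to handle the corner case in your ``push the violator forward'' step for the uniqueness clause: if $v_i\in\{x_i,y_i\}$ and the hypothetical violator $w$ is the \emph{other} endpoint of the contracted edge, then $w$ and $v_i$ share the image $z_i=v_{i+1}$ in $G_{i+1}$ and the intended contradiction does not follow automatically; the paper avoids this because its translation goes in the opposite direction. With the index corrected to the paper's direction, your outline becomes essentially the paper's proof.
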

	\iflong 
	\begin{cProof}
		
		Let $Q=(u_1,\ldots, u_{|Q|})$ be a $v$-$t$ walk in $G_{i-1}$ such that $|p\cup\chi_{i-1}(Q)|\le k$ and $v_{i-1}$ is the only vertex on $Q$ reachable by $p$ from $s_{i-1}$. Also assume that there is no $s_{i-1}$-$v_{i-1}$ path $P'$ with $\chi_{i-1}(P')\subsetneq p$. Let $Q'=(u'_1,\ldots, u'_{|Q|})$ be a walk in $G_i$ such that if $u_j\notin \{x_i,y_i\}$, then $u'_j=u_j$ and $u'_j=z_i$ otherwise. Since $\chi_{i-1}(u_j)=\chi_{i}(u'_j)$ for all $j\in[|Q|]$, it follows that $\chi_{i-1}(Q)=\chi_{i}(Q')$, therefore $|p\cup \chi_i(Q')|\le k$. Moreover, from Observation~\ref{obs:colorContraction} follows that there is no $s$-$v$ path $P'$ in $G_i$ with $\chi_i(P')\subsetneq p$ and that $v_i$ is the only vertex on $Q'$ that is reachable from $s_i$ by $p$. Therefore, because $\PPP\setminus \{p\}$ $(k-|c|)$-represents $\PPP$ w.r.t. $v_i$ in $(G_i,C,\chi_i,s_i,t_i,k-|c|)$, there exists $p'\in \PPP\setminus \{p\}$ such that $|p'\cup \chi_i(Q')|\le k$, $p'\cap \chi_i(Q')\supseteq p\cap\chi_i(Q')$ and there is an $s$-$v$ path $P'$ with $\chi(P')=p'$. But then $|p'\cup \chi_{i-i}(Q)|\le k$, $p'\cap \chi_{i-1}(Q)\supseteq p\cap\chi_{i-1}(Q)$ and we can obtain an $s$-$v$ path $P''$ with $\chi(P'')=p'$ by taking $P'$ and replacing each vertex $w$ on $P'$ either by itself, if $w\in V(G_{i-1})$ or by one of the four subpaths ($(x_i)$, $(y_i)$, $(x_i, y_i)$, or $(y_i, x_i)$) depending on which of $x_i$, $y_i$ is adjacent to the predecessor and the successor of $z_i$ on $P'$.   
	\end{cProof}
	\fi 
	Let $(G_i,C,\chi_i,s_i,t_i,k-|c|)$ be the instance obtained from $(G,C',\chi',s,t,k-|c|)$ by repeating color contraction operation until $G_i$ is irreducible w.r.t. color contraction and let $v_i$ be the image of $v$. Since $G_i$ is irreducible w.r.t. color contraction, the sets in $\QQQ'$ are pairwise color-disjoint, and $|\QQQ'|=g(k)+1>g(k-|c|)$, we can use Lemma~\ref{lem:costructingDisjointFamily} to find in time polynomial in $|\QQQ'|+|V(G)|$ a set $p\in \QQQ'$ such that $\QQQ'\setminus\{p\}$ $(k-|c|)$-represents $\QQQ'$ w.r.t. $v_i$ in $(G_i,C,\chi_i,s_i,t_i,k-|c|)$. By Claim~\ref{clm:color_contractibility}, it follows that $\QQQ'\setminus\{p\}$ $(k-|c|)$-represents $\QQQ'$ w.r.t. $v$ in $(G,C',\chi',s,t,k-|c|)$ and by Claim~\ref{clm:removingColors} $\QQQ\setminus \{p\cup c \}$ $k$-represents $\QQQ$ in $(G,C,\chi,s,t,k)$. Finally, since for all $p'\in \PPP\setminus \QQQ$ is $p'\in \PPP\setminus \{p\cup c \}$ it follows that $\PPP\setminus \{p\cup c \}$ $k$-represents~$\PPP$. 
	\iflong 
	
	Note that finding a large sunflower, removing colors in $c$ from all vertices in $G$ and performing color contraction operation are all polynomial time procedures and we cannot repeat the color contraction operation more than $|V(G)|$ many times, as each time the number of vertices in graph is reduced by one. Hence the above described algorithm runs in time polynomial in $|\PPP|+|V(G)|.$
	\fi 
\end{proof}

\bibliography{ref}

\end{document}